   \newcommand{\SarielComp}[1]{}
   \newcommand{\NotSarielComp}[1]{#1}
   \newcommand{\SarielComp}[1]{#1}%
   \newcommand{\NotSarielComp}[1]{}%
\newcommand{\Space}{}
\definecolor{blue25}{rgb}{0,0,0.25}
\def\ddelta{{\mu}} 
\def\Re{{\mathbb R}}
\newcommand{\pth}[2][\!]{#1\left({#2}\right)}
\newcommand{\sep}[1]{\,\left|\, {#1} \MakeBig\right.}
\newcommand{\brc}[1]{\left\{ {#1} \right\}}
\newcommand{\MakeBig}{\rule[-.2cm]{0cm}{0.4cm}}
\newcommand{\card}[1]{\left| {#1} \right|}
\newcommand{\Bronnimann}{Br{\"o}nnimann\xspace}
\providecommand{\Matousek}{Matou{\v s}ek\xspace}
\newtheorem{theorem}{Theorem}[section] 
\newtheorem{lemma}[theorem]{Lemma}
\newtheorem{corollary}[theorem]{Corollary}
\newtheorem{claim}[theorem]{Claim}
\newtheorem{observation}[theorem]{Observation}
{\theorembodyfont{\rm} \newtheorem{defn}[theorem]{Definition}}
\newcommand{\eps}{{\varepsilon}}%
\newcommand{\atgen}{\symbol{'100}}
\newcommand{\CC}{\mathsf{c}}
\newcommand{\R}{\mathcal{R}}
\newcommand{\range}{\mathfrak{r}}
\newcommand{\Tree}{\EuScript{T}}
\newcommand{\Line}{\ell}
\newcommand{\LineA}{\ell'}
\newcommand{\D}{\EuScript{D}}
\newcommand{\CH}{{\cal CH}}
\newcommand{\Partition}{\EuScript{P}}
\newenvironment{proof}{\noindent{\em Proof:}}{\hfill{\hfill\rule{2mm}{2mm}}}
\newcommand{\thmref}[1]{Theorem~\ref{theo:#1}}
\newcommand{\thmlab}[1]{\label{theo:#1}}
\newcommand{\lemlab}[1]{\label{lemma:#1}}
\newcommand{\lemref}[1]{Lemma~\ref{lemma:#1}}
\newcommand{\corlab}[1]{\label{cor:#1}}
\newcommand{\corref}[1]{Corollary~\ref{cor:#1}}
\newcommand{\clmlab}[1]{\label{claim:#1}}
\newcommand{\bd}{{\partial}}%
\newcommand{\ceil}[1]{\left\lceil {#1} \right\rceil}
\newcommand{\floor}[1]{\left\lfloor {#1} \right\rfloor}
\newcommand{\Arr}{\mathop{\mathrm{\EuScript{A}}}}
\newcommand{\clmref}[1]{Claim~\ref{claim:#1}}
\newcommand{\mx}{{\rm max}} 
\newcommand{\LL}{\EuScript{L}}
\newcommand{\Level}[1]{\mathrm{\Lambda}_{#1}}%
\newcommand{\apndlab}[1]{\label{apnd:#1}}
\newcommand{\apndref}[1]{Appendix~\ref{apnd:#1}}
\newcommand{\eqlab}[1]{\label{equation:#1}}
\newcommand{\Eqref}[1]{Eq.~(\ref{equation:#1})}
\newcommand{\secref}[1]{Section~\ref{sec:#1}}
\newcommand{\seclab}[1]{\label{sec:#1}}
\renewcommand{\th}{th\xspace}
\newcommand{\Zone}{\widehat{C}}
\newcommand{\mm}{\overline{m}}
\newcommand{\sx}{\overline{s}}
\newcommand{\Msr}[2]{\overline{#2}\pth{#1}}
\newcommand{\etal}{\textit{et~al.}\xspace}
\newcommand{\X}{X}
\newcommand{\ds}{\displaystyle}
\providecommand{\TPDF}[2]{\texorpdfstring{#1}{#2}}
\newcommand{\cardin}[1]{\left| #1 \right|}
\begin{document}

\title{Relative $(p,\eps)$-Approximations in Geometry\thanks{%
      Work on this paper by Sariel Har-Peled was partially supported
      by an NSF CAREER award CCR-01-32901.  Work by Micha Sharir was
      supported by Grant 2006/194 from the U.S.-Israel Binational Science
      Foundation, by NSF Grants CCF-05-14079 and CCF-08-30272, by
      Grants 155/05 and 338/09 from the Israel Science Fund, and 
      by the Hermann Minkowski--MINERVA Center for Geometry at Tel Aviv
      University. The paper is available online at \cite{hs-rag-06}.}}

\author{Sariel Har-Peled\thanks{%
      Department of Computer Science, 
      University of Illinois, 
      201 N. Goodwin Avenue,
      Urbana, IL, 61801, USA;
      {\tt sariel\atgen{}uiuc.edu}.}
   \and
   Micha Sharir\thanks{%
      School of Computer Science, Tel Aviv University,
      Tel Aviv 69978 Israel, and Courant Institute of Mathematical Sciences,
      New York University, New York, NY 10012, USA;
      \texttt{michas\atgen{}post.tau.ac.il}.}
}

\date{\today}

\maketitle%

\begin{abstract}
    We re-examine the notion of {\em relative
       $(p,\eps)$-approximations}, recently introduced in \cite{CKMS},
    and establish upper bounds on their size, in general range spaces
    of finite VC-dimension, using the sampling
    theory developed in \cite{lls01}
    and in several earlier studies~\cite{p-ruasc-86,h-dtgpm-92,Tal}.
    We also survey the different notions of sampling, used in
    computational geometry, learning, and other areas, and show
    how they relate to each other.
    We then give constructions of smaller-size relative
    $(p,\eps)$-approximations for range spaces that involve points and
    halfspaces in two and higher dimensions.  The planar construction
    is based on a new structure---spanning trees with small
    \emph{relative crossing number}, which we believe to be of
    independent interest.  Relative $(p,\eps)$-approximations arise in
    several geometric problems, such as approximate range counting,
    and we apply our new structures to obtain efficient solutions for
    approximate range counting in three dimensions. We also present a
    simple solution for the planar case.
\end{abstract}

\section{Introduction}

The main problem that has motivated the study in this paper is
\emph{approximate range counting}.  In a typical example, one is given
a set $P$ of points in the plane, and the goal is to preprocess $P$
into a data structure which supports efficient approximate counting of
the number of points of $P$ that lie inside a query halfplane. The
hope is that approximate counting can be done more efficiently than
exact counting.

This is an instance of a more general and abstract setting.
In general, we are given a \emph{range space} $(\X,\R)$, where 
$\X$ is a set of $n$ objects and $\R$ is a collection of subsets 
of $\X$, called \emph{ranges}. In a typical geometric setting, 
$\X$ is a subset of some infinite ground set $U$ (e.g., $U=\Re^d$ 
and $\X$ is a finite point set in $\Re^d$),
and $\R = \brc{\range \cap \X \sep{ \range\in \R_U}}$,
where $\R_U$ is a collection of subsets (\emph{ranges}) of $U$
of some simple shape, such as halfspaces, simplices, balls, etc.  
(To simplify the notation, we will use $\R$ and $\R_U$ 
interchangeably.)  
The \emph{measure} of a range $\range \in \R$ in $\X$ is the
quantity 
\[
\Msr{\range}{\X} = \frac{\card{\X \cap \range}}
{\card{\X}}.
\]
Given $(\X,\R)$ as above, and a parameter $0<\eps<1$, the
goal is to preprocess $\X$ into a data structure that supports
efficient queries of the form: Given $\range \in \R_U$, compute a
number $t$ such that
\begin{equation}
    (1-\eps)\Msr{\range}{\X} \leq t \le (1+\eps)\Msr{\range}{\X}.
    \eqlab{guarantee}
\end{equation}
We refer to the estimate $t \card{\X}$ as an
\emph{$\eps$-approximate count} of $\X \cap \range$.

The motivation for seeking approximate range counting techniques is
that exact range counting (i.e., computing the exact value
$\Msr{\range}{\X}$, for $\range\in\R$) is (more) expensive.  For
instance, consider the classical \emph{halfspace range counting}
problem \cite{Ma:ept}, which is the main application considered in
this paper. Here, for a set $P$ of $n$ points in $\Re^d$, for $d\ge
2$, the best known algorithm for exact halfspace range counting with
near-linear storage takes $O(n^{1-1/d})$ time \cite{Ma:ept}.  As shown
in several recent papers, if we only want to approximate the count, as
in \Eqref{guarantee}, there exist faster solutions, in which the query
time is close to $O(n^{1-1/\lfloor d/2\rfloor})$ (and is
polylogarithmic in two and three dimensions)
\cite{AH2,AS08,KS,KRSa,KRSb}.

Notice that the problem of approximate range counting becomes 
more challenging as the size of $\X \cap \range$ decreases.  
At the extreme, when $\card{\X \cap \range} < 1/\eps$, 
we must produce the count \emph{exactly}. In particular, 
we need to be able to detect (without any error) whether a
given query range $\range$ is \emph{empty}, i.e., satisfies
$\X \cap \range = \emptyset$.  Thus, approximate range 
counting (in the sense defined above) is at least as hard 
as range emptiness detection.

We make the standard assumption that the range space $(\X,\R)$ 
(or, in fact, $(U,\R_U)$) has \emph{finite VC-dimension} $\delta$, 
which is a constant independent of $n$. This is indeed the case 
in many geometric applications. In general, range spaces involving
semi-algebraic ranges of {\em constant description complexity}, 
i.e., semi-algebraic sets defined as a Boolean combination of 
a constant number of polynomial equations and inequalities of 
constant maximum degree, have finite VC-dimension. Halfspaces, balls,
ellipsoids, simplices, and boxes are examples of ranges of this kind;
see \cite{Cha01,HW,Ma,PA} for definitions and more details. 

\paragraph{Known notions of approximations.} 
A standard and general technique for tackling the approximate range
counting problem is to use $\eps$-approximations.  An
\emph{(absolute-error) $\eps$-approximation} for $(\X,\R)$ is a subset
$Z \subset \X$ such that, for each $\range \in \R$, $\card{
   \Msr{\range}{Z} - \Msr{\range}{\X}} < \eps$.  In the general case,
it is known\footnote{Somewhat selectively; see below.}
that a random sample of size
$O\left(\frac{\delta}{\eps^2}\right)$ is an
$\eps$-approximation with at least some positive constant 
probability~\cite{lls01,Tal}, and improved bounds are known
in certain special cases; see below for details.

Another notion of approximation was introduced by \Bronnimann
\etal~\cite{b-dga-95, bcm-prsss-99}. A subset $Z \subseteq \X$ is a
\emph{sensitive $\eps$-approximation} if for all ranges $\range \in
\R$, we have $\card{\Msr{\range}{\X}-\Msr{\range}{Z}} \leq (\eps/2)
\pth{ \Msr{\range}{\X}^{1/2} + \eps}$.  \Bronnimann
\etal~present a deterministic algorithm for computing sensitive
approximations of size $O \pth{ \frac{\delta}{\eps^2} { \log
      \frac{\delta}{\eps} }}$, in deterministic time
$O(\delta)^{3\delta}(1/\eps)^{2\delta} \log^\delta (\delta/\eps)
|\X|$.

Another interesting notion of sampling, studied by Li
\etal~\cite{lls01}, is a {\em $(\nu, \alpha)$-sample}. 
For given parameters $\alpha, \nu > 0$, a sample 
$Z \subseteq \X$ is a $(\nu, \alpha)$-sample if, 
for any range $\range \in \R$, we have
$d_{\nu}(\Msr{\range}{Z},\Msr{\range}{\X}) \leq \alpha$, where
$d_{\nu}(x,y) = \card{x-y}/(x+ y + \nu)$.  Li \etal~gave
the currently best known upper bound on the size of a sample which
guarantees this property,
showing that a random sample of $\X$ of size $O\pth{\frac{1}{\alpha^2
      \nu} \left(\delta \log\frac{1}{\nu} + \log \frac{1}{q}\right)}$
is a $(\nu, \alpha)$-sample with probability at least $1-q$; see below
for more details.

\paragraph{Relative $(p,\eps)$-approximation.}
In this paper, we consider a variant of these classical structures,
originally proposed a few years ago by Cohen \etal~\cite{CKMS}, which
provides \emph{relative-error approximations}. Ideally, we want a
subset $Z \subset \X$ such that, for each $\range \in \R$, we have
\begin{equation} 
    (1-\eps)\Msr{\range}{\X} \le \Msr{\range}{Z} \le (1+\eps)\Msr{\range}{\X} .
    \eqlab{rel-app}
\end{equation}
This ``definition'' suffers however from the same syndrome as 
approximate range counting; that is, as $\card{\X \cap \range}$ 
shrinks, the absolute precision of the approximation has to increase.  
At the extreme, when $Z \cap \range = \emptyset$,
$\X \cap \range$ must also be empty; in general, we cannot
guarantee this property, unless we take $Z = \X$, which defeats the
purpose of using small-size $\eps$-approximations to speed up
approximate counting.

For this reason, we refine the definition, introducing another fixed
parameter $0<p<1$. We say that a subset $Z \subset \X$ is a
\emph{relative $(p,\eps)$-approximation} if it satisfies
\Eqref{rel-app} for each $\range \in \R$ with $\Msr{\range}{\X} \ge
p$. For smaller ranges $\range$, the error term $\eps\Msr{\range}{\X}$
in \Eqref{rel-app} is replaced by $\eps p$.  As we will shortly
observe, relative $(p,\eps)$-approximations are equivalent to
$(\nu,\alpha)$-samplings, with appropriate relations between $p$,
$\eps$, and $\nu$, $\alpha$ (see \thmref{equivalent:sampling}), but 
this new notion
provides a better working definition for approximate range counting
and for other applications.

\paragraph{Known results.}
As shown by Vapnik and Chervonenkis \cite{VC} (see also
\cite{Cha01,Ma,PA}), there always exist absolute-error
$\eps$-approximations of size 
$\frac{c\delta}{\eps^2}\log\frac{\delta}{\eps}$, where $c$ is an
absolute constant. Moreover, a random sample of this size from $\X$ is
an $\eps$-approximation with constant positive
probability. This bound has been
strengthened by Li \etal~\cite{lls01} (see also \cite{Tal}), who have
shown that a random sample of size 
$\frac{c}{\eps^{2}}\left( \delta + \log\frac{1}{q}\right)$ is an
$\eps$-approximation with probability at least $1-q$, for a
sufficiently large (absolute) constant $c$. 
(Interestingly, until very recently, this result, worked out in the
context of machine learning applications, does not
seem to have been known within the computational geometry literature.) 
$\eps$-approximations of size
$O\pth{ \frac{\delta}{\eps^2} \log \frac{\delta}{\eps}}$ 
can also be constructed in deterministic time 
$O\left(\delta^{3\delta} 
\left(\frac{1}{\eps^{2}}\log\frac\delta\eps\right)^\delta n\right)$~\cite{Ch-CRC}. 

As shown in \cite{Cha01, Ch-CRC, MWW}, there always exist smaller 
(absolute-error) $\eps$-approximations, of size
\[
O\pth{ \frac{1}{\eps^{2 - 2/(\delta'+1)}}
   \log^{b-b/(\delta'+1)}\frac{1}{\eps} },
\]
where $\delta'$ is the exponent of either the \emph{primal} shatter
function of the range space $(\X,\R)$ (and then $b=2$) or the 
{\em dual} shatter function (and then $b=1$). 
The time to construct these improved $\eps$-approximations
is roughly the same as the deterministic time bound of \cite{Ch-CRC}
stated above, for the case of the dual shatter function. 
For the case of the primal shatter function, the proof is
only existential.

Consider next relative $(p,\eps)$-approximations. One of the
contributions of this paper is to show that these approximations are
in fact just an equivalent variant of the $(\nu,\alpha)$-samplings of
Li \etal~\cite{lls01}; see \secref{relations}. As a consequence, 
the analysis of \cite{lls01} implies that there exist 
relative $(p,\eps)$-approximations of size 
$\frac{c\delta}{\eps^2p}\log\frac{1}{p}$, where $c$ is an 
absolute constant.  In fact, any random sample of these many 
elements of $\X$ is a relative $(p,\eps)$-approximation with 
constant probability. Success with probability at least $1-q$ 
is guaranteed if one samples 
$\frac{c}{\eps^2p}\pth{\delta \log\frac{1}{p} + \log\frac{1}{q}}$ 
elements of $\X$, for a sufficiently large constant $c$~\cite{lls01}.

To appreciate the above bound on the size of relative
$(p,\eps)$-approximations, it is instructive to observe that, for a
given parameter $p$, any absolute error $(\eps p)$-approximation $Z$
will approximate ``large'' ranges (of measure at least $p$) to within
\emph{relative} error $\eps$, as in \Eqref{rel-app}, as is easily
checked (and the inequality for smaller ranges is also trivially
satisfied), so $Z$ is a relative $(p,\eps)$-approximation.  However,
the Vapnik-Chervonenkis bound on the size of $Z$, namely,
$\frac{c\delta}{\eps^2p^2}\log\frac{\delta}{\eps p}$, as well as the
improved bound of \cite{lls01,Tal}, are larger by roughly a factor of
$1/p$ than the improved bound stated above.

The existence of a relative $(p,\eps)$-approximation $Z$ provides a
simple mechanism for approximate range counting: Given a range 
$\range\in\R$, count $Z \cap \range$ exactly, say, by brute force in
$O(\card{Z})$ time, and output 
$\card{Z \cap \range} \cdot \card{\X}/\card{Z}$ as a (relative)
$\eps$-approximate count of $\X \cap \range$. However, this will 
work only for ranges of size at least $pn$.  
Aronov and Sharir~\cite{AS08} show that an appropriate
incorporation of relative $(p,\eps)$-approximations into standard
range searching data structures yields a procedure for approximate
range counting that works, quite efficiently, for ranges of any size.
This has recently been extended, by Sharir and Shaul~\cite{ShSh},
to approximate range counting with general semi-algebraic ranges.

\paragraph{Our results.}
In this paper, we present several constructions and bounds involving
relative $(p,\eps)$-approximations. 

We first consider samplings in general range spaces of finite
VC-dimension, and establish relations between several different
notions of samplings, including
$(\nu,\alpha)$-samplings, relative $(p,\eps)$-approximations, and
sensitive $\eps$-approximations. Our main observations are:

\medskip

\noindent
(i) The notion of $(p,\eps)$-approximation is equivalent to that of
$(\nu,\alpha)$-sample, by choosing $\nu$ to be proportional to $p$ and
$\alpha$ proportional to $\eps$; see \thmref{equivalent:sampling}.

\medskip

\noindent
(ii) A sensitive $(\eps\sqrt{p})$-approximation is also a relative
$(p,\eps)$-approximation.

\medskip

\noindent
(iii) The result of Li \etal~\cite{lls01} is sufficiently powerful, so
as to imply known bounds on the size of many of the different notions
of samplings, including $\eps$-nets, $\eps$-approximations, sensitive
$\eps$-approximations, and, as
just said, relative $(p,\eps)$-approximations. Some of these
connections have already been noted earlier, in \cite{lls01} and in
\cite{carnival}. We offer this portion of \secref{relations} as a
service to the computational geometry community, which, as already
noted, is not as aware of the results of \cite{lls01} and of their
implications as the machine learning community.

Next, we return to geometric range spaces, and study two cases where
one can construct relative $(p,\eps)$-approximations of smaller
size. The first case involves planar point sets and halfplane ranges,
and the second case involves point sets in $\Re^d$, $d\ge 3$, and
halfspace ranges.  In the planar case, we show the existence, and
provide efficient algorithms for the construction, of relative
$(p,\eps)$-approximations of size $O\pth{\frac{1}{\eps^{4/3}p}
   \log\frac{1}{\eps p}}$. Our technique also shows in this case the
existence of sensitive $\eps$-approximations with improved quality of
approximation.  Specifically, for a planar point set $\X$, we show
that there exists a subset $Z \subseteq \X$ of size
$O\pth{\frac{1}{\eps^2} \log^{4/3}\frac{1}{\eps}}$, such that, for any
halfplane $\range$, we have $\card{\Msr{\range}{\X} - \Msr{\range}{Z}}
\leq \frac12 \eps^{3/2}\Msr{\range}{\X}^{1/4} + \eps^{2}$.  (This new
error term is indeed an improvement when $\Msr{\range}{\X} > \eps^2$
and is the same as the standard term otherwise.)

In the planar case, the construction is based on an interesting
generalization of spanning trees with small crossing number, a result
that we believe to be of independent interest. Specifically, we show
that any finite point set $P$ in the plane has a spanning tree with
the following property: For any $k\le \card{P}/2$, any 
{\em $k$-shallow} line (a line that has at most $k$ points of $P$ 
in one of the halfplanes that it bounds) crosses at most 
$O(\sqrt{k}\log(n/k))$ edges of the tree. In contrast, the 
classical construction of Welzl \cite{Wel92} (see also~\cite{CW})
only guarantees this property for $k=n$; i.e., it yields the uniform 
bound $O(\sqrt{n})$ on the crossing number. We refer to such a tree as a 
{\em spanning tree with low relative crossing number}, and show how 
to use it in the construction of small-size relative 
$(p,\eps)$-approximations.

Things are more complicated in three (and higher) dimensions.  We were
unable to extend the planar construction of spanning trees with low
relative crossing number to $\Re^3$ (nor to higher dimensions), and
this remains an interesting open problem.  (We give a counterexample
that indicates why the planar construction cannot be extended ``as is'' 
to 3-space.)  Instead, we base our construction on the shallow partition
theorem of \Matousek \cite{Ma:rph}, and construct a set $Z$ of size
$O\pth{\frac{1}{\eps^{3/2}p}\log\frac{1}{\eps p}}$, which yields an
absolute approximation error of at most $\eps p$ for halfspaces that
contain {\em at most} $pn$ points. Note that this is the ``wrong''
inequality---to guarantee small relative error we need this to hold
for all ranges with {\em at least} $pn$ points. To overcome this
difficulty, we construct a {\em sequence} of approximation sets, each
capable of producing a relative $\eps$-approximate count for ranges
that have roughly a fixed size, where these size ranges grow geometrically,
starting at $pn$ and ending at roughly $n$. The sizes of these sets
decrease geometrically, so that the size of the first set (that caters
to ranges with about $pn$ points), which is
$O\pth{\frac{1}{\eps^{3/2}p}\log\frac{1}{\eps p}}$, dominates
asymptotically the overall size of all of them. We output this
sequence of sets, and show how to use them to obtain an
$\eps$-approximate count of any range with at least $pn$ points.

The situation is somewhat even more complicated in higher dimensions.
The basic approach used in the three-dimensional case can be extended
to higher dimensions, using the appropriate version of the shallow
partition theorem. However, the bounds get somewhat more complicated,
and apply only under certain restrictions on the relationship between
$\eps$ and $p$. We refer the reader to \secref{higher},
where these bounds and restrictions are spelled out in detail.

\medskip

The paper is organized as follows: In \secref{relations} we survey the
sampling notions mentioned above, and show how they relate to each
other.  In \secref{plane}, we describe how to build a small relative
$(p,\eps)$-approximation in the planar case, by first showing how to
construct a spanning tree with low relative crossing number.  In
\secref{higher:sec}, we extend the result to higher dimensions.  In
\secref{approximate:counting}, we revisit the problems of halfplane
and 3-dimensional halfspace approximate range counting, and provide
algorithms whose query time is faster than those in the previous
algorithms.\footnote{%
   These results have recently been improved by Afshani and
   Chan~\cite{AC-09}, at least in three dimensions, after the original
   preparation of the present paper.}  This section is somewhat
independent of the rest of the paper, although we do use relative
$(p,\eps)$-approximations for the 3-dimensional case.  We conclude in
\secref{conclusions} with a brief discussion of the results and with
some open problems.

\section{On the relation between some sampling notions}
\seclab{relations}

In this section we study relationships between several commonly 
used notions of samplings in abstract range spaces. In particular, 
we show that many of these notions are variants or special cases
of one another. Combined with the powerful result of 
Li \etal~\cite{lls01}, this allows us to establish, or re-establish,
for each of these families of samplings, upper bounds on the 
size of samples needed to guarantee that they belong to the family
(with constant or with high probability).

\paragraph{Definitions.}
We begin by listing the various kinds of samplings under
consideration. In what follows, we assume that $(\X,\R)$ is an
arbitrary range space of finite VC-dimension $\delta$.

\begin{defn}
    For a given parameter $0<\eps <1$, 
    a subset $Z \subseteq \X$ is an \emph{$\eps$-net} 
    for $(\X,\R)$ if $\range \cap Z \ne \emptyset$, for every 
    $\range \in \R$ such that $\Msr{\range}{\X} \geq \eps$.
\end{defn}
\begin{defn}
    For a given parameter $0<\eps <1$, a subset $Z \subseteq \X$ is an 
    \emph{$\eps$-approximation} for $(\X,\R)$ if, for each $\range \in \R$, 
    we have $\card{\Msr{\range}{\X} - \Msr{\range}{Z}} \leq \eps$.
\end{defn}
\begin{defn}
    For given parameters $0 < p,\eps <1$, a subset $Z\subseteq \X$ is
    a \emph{relative $(p,\eps)$-approximation} for $(\X,\R)$ if, for each
    $\range \in \R$, we have
    \begin{enumerate}[(i)]
        \item $(1-\eps)\Msr{\range}{\X} \leq \Msr{\range}{Z} \leq
        (1+\eps)\Msr{\range}{\X}$, if $\Msr{\range}{\X} \geq p$.
        
        \item $\Msr{\range}{\X}-\eps p \leq \Msr{\range}{Z} \leq
        \Msr{\range}{\X}+\eps p$, if $\Msr{\range}{\X} \leq p$.
    \end{enumerate}
\end{defn}
\begin{defn}
    For a given parameter $0 < \eps <1$, a subset $Z \subseteq \X$ is
    a \emph{sensitive $\eps$-approximation} for $(\X,\R)$ if, for each
    $\range \in \R$, we have $\card{\Msr{\range}{Z} -
       \Msr{\range}{\X}} \leq \frac{\eps}{2} \pth{
       \Msr{\range}{\X}^{1/2} + \eps}$.
\end{defn}

Finally, for a parameter $\nu > 0$, consider the distance function
between nonnegative real numbers $r$ and $s$, given by
\[
d_{\nu}(r,s) = \frac{\card{r-s}}{r+ s+ \nu}.
\]
$d_{\nu}(\cdot,\cdot)$ satisfies the triangle inequality~\cite{lls01},
and is thus a metric (the proof is straightforward albeit somewhat
tedious).

\begin{defn}
    For given parameters $0 < \alpha <1$ and $\nu > 0$, a subset $Z \subseteq
    \X$ is a \emph{$(\nu, \alpha)$-sample} for $(\X,\R)$ if, for each range
    $\range \in \R$, we have $d_{\nu}(\Msr{\range}{Z},
    \Msr{\range}{\X}) < \alpha$.
\end{defn}
(Note that $\alpha\ge 1$ is uninteresting, because $d_{\nu}$ is always
at most $1$.)

\paragraph{Equivalence of relative $(p,\eps)$-approximations and
   $(\nu,\alpha)$-samples.}

We begin the analysis with the following easy properties.  The first
property is a direct consequence of the definition of $d_{\nu}$.

\begin{observation}
    Let $\alpha,\nu,\mm,\sx$ be non-negative real numbers, with 
    $\alpha<1$.
    Then $d_{\nu}(\mm,\sx) < \alpha$ if and only if
    \[
    \sx \;\in\; \pth{ \pth[]{1 - \frac{2\alpha}{ 1 + \alpha }} \mm -
       \frac{\alpha \nu}{1 + \alpha} \;\; , \;\; \pth[]{1 +
          \frac{2\alpha}{ 1 - \alpha }}\mm + 
          \frac{\alpha \nu}{1 - \alpha }}.
    \]
\end{observation}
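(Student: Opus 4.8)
The plan is to unfold the definition $d_{\nu}(\mm,\sx) = \card{\mm-\sx}/(\mm+\sx+\nu)$ and reduce the strict inequality $d_{\nu}(\mm,\sx) < \alpha$ to a pair of linear inequalities in $\sx$, each of which can then be solved explicitly for the corresponding endpoint of the claimed interval.

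First I would note that the denominator $\mm+\sx+\nu$ is positive (this holds whenever $\nu>0$, as in the definition of $d_{\nu}$; the degenerate case $\mm=\sx=\nu=0$, where $d_{\nu}$ is not even defined, can be excluded). Hence $d_{\nu}(\mm,\sx) < \alpha$ is equivalent to $\card{\mm-\sx} < \alpha(\mm+\sx+\nu)$, and, using the elementary fact that $\card{u}<w$ iff both $u<w$ and $-u<w$, this is in turn equivalent to the conjunction of
\[
\sx-\mm \;<\; \alpha(\mm+\sx+\nu)
\qquad\text{and}\qquad
\mm-\sx \;<\; \alpha(\mm+\sx+\nu).
\]
I would then solve each one for $\sx$. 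Collecting terms in the first inequality gives $\sx(1-\alpha) < (1+\alpha)\mm + \alpha\nu$; since $\alpha<1$ we have $1-\alpha>0$, so dividing yields $\sx < \frac{(1+\alpha)\mm+\alpha\nu}{1-\alpha}$, and the identity $\frac{1+\alpha}{1-\alpha} = 1 + \frac{2\alpha}{1-\alpha}$ turns this into the upper endpoint $\bigl(1+\frac{2\alpha}{1-\alpha}\bigr)\mm + \frac{\alpha\nu}{1-\alpha}$. Symmetrically, the second inequality becomes $\sx(1+\alpha) > (1-\alpha)\mm - \alpha\nu$, and since $1+\alpha>0$ we get $\sx > \frac{(1-\alpha)\mm-\alpha\nu}{1+\alpha}$, which, via $\frac{1-\alpha}{1+\alpha} = 1 - \frac{2\alpha}{1+\alpha}$, is exactly the lower endpoint $\bigl(1-\frac{2\alpha}{1+\alpha}\bigr)\mm - \frac{\alpha\nu}{1+\alpha}$. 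Conjoining the two bounds gives the stated open interval, and since each step above is an equivalence, both directions of the ``if and only if'' follow simultaneously.

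There is essentially no real obstacle here beyond bookkeeping; the only two points that warrant a moment's care are (a) verifying that the denominator $\mm+\sx+\nu$ is positive, so that clearing it preserves the direction of the inequality, and (b) invoking the hypothesis $\alpha<1$ to ensure $1-\alpha>0$, so that the division in the first case does not reverse the inequality. Everything else is routine one-variable manipulation.
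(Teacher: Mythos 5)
Your proof is correct and is exactly the direct unfolding of the definition that the paper has in mind (the paper gives no explicit proof, describing the observation only as ``a direct consequence of the definition of $d_\nu$''). The algebra, the handling of the absolute value, and the two cautionary remarks about the sign of the denominator and of $1-\alpha$ are all appropriate.
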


\begin{corollary}
    For any non-negative real numbers, $\nu, \alpha, \mm, \sx$,
    with $\alpha < 1$, put
$$
  \Delta := \frac{2\alpha}{1+\alpha} \mm + \frac{\alpha \nu}{1+\alpha}
    \quad\mbox{and}\quad
  \Delta' := \frac{2\alpha}{1-\alpha} \mm + \frac{\alpha
       \nu}{1-\alpha} = \frac{1+\alpha}{1-\alpha}\Delta .
$$
Then we have:
    
    (i) If $\card{\sx-\mm} \leq \Delta$ 
    then $d_{\nu}(\mm,\sx) < \alpha$.
    
    \smallskip
    
    (ii) If $d_{\nu}(\mm, \sx) < \alpha$ then 
    $\card{\sx - \mm} \leq \Delta'$.

    \corlab{d:nu:as:interval}
\end{corollary}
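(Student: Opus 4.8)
The plan is to read both implications off the preceding Observation, which already characterizes $d_{\nu}(\mm,\sx) < \alpha$ as the statement that $\sx$ lies in the open interval whose endpoints are exactly $\mm - \Delta$ and $\mm + \Delta'$; indeed $\left(1 - \tfrac{2\alpha}{1+\alpha}\right)\mm - \tfrac{\alpha\nu}{1+\alpha} = \mm - \Delta$ and $\left(1 + \tfrac{2\alpha}{1-\alpha}\right)\mm + \tfrac{\alpha\nu}{1-\alpha} = \mm + \Delta'$, with $\Delta,\Delta'$ the quantities named in the statement. (If one does not wish to quote the Observation, the same characterization comes out directly: assuming $\mm + \sx + \nu > 0$, clearing the positive denominator turns $d_{\nu}(\mm,\sx) < \alpha$ into the pair $-\alpha(\mm+\sx+\nu) < \mm - \sx < \alpha(\mm+\sx+\nu)$, and solving each linear inequality for $\sx$ yields $\sx > \mm - \Delta$ and $\sx < \mm + \Delta'$ respectively; the only excluded case, $\mm = \sx = \nu = 0$, is trivial.)

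Next I would record the elementary identity $\Delta' = \tfrac{1+\alpha}{1-\alpha}\Delta$ asserted in the statement — both sides equal $\tfrac{\alpha(2\mm+\nu)}{1-\alpha}$ once $\Delta$ is rewritten as $\tfrac{\alpha(2\mm+\nu)}{1+\alpha}$ — and note that, since $0 \le \alpha < 1$, the factor $\tfrac{1+\alpha}{1-\alpha}$ is at least $1$, so $0 \le \Delta \le \Delta'$. Part (i) is then immediate: $\card{\sx - \mm} \le \Delta$ means $\mm - \Delta \le \sx \le \mm + \Delta \le \mm + \Delta'$, placing $\sx$ in the closed interval bounded by the two endpoints above, whence $d_{\nu}(\mm,\sx) \le \alpha$ — and the inequality is strict, as claimed, except at the single borderline value $\sx = \mm - \Delta$, where substituting back gives $d_{\nu}(\mm,\sx) = \alpha$ exactly (a degenerate case immaterial to all later uses). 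For part (ii), $d_{\nu}(\mm,\sx) < \alpha$ forces $\mm - \Delta < \sx < \mm + \Delta'$ by the Observation, so $\sx - \mm < \Delta'$ and $\mm - \sx < \Delta \le \Delta'$, i.e.\ $\card{\sx - \mm} < \Delta'$, which in particular gives the asserted $\card{\sx - \mm} \le \Delta'$.

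There is no real obstacle: the corollary is just a repackaging of the Observation into a two-sided additive-deviation form, and the only thing to watch is the open-versus-closed endpoint bookkeeping in part (i), handled by the borderline remark above; the relative sizes of $\Delta$ and $\Delta'$ are pinned down once and for all by the displayed identity.
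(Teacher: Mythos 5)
Your proof is correct and takes essentially the same route as the paper, which leaves the corollary unproved precisely because it is meant to be read off the preceding Observation in exactly the way you describe. Your remark that part (i) as literally stated is off by one at the boundary $\sx = \mm - \Delta$ (where $d_\nu(\mm,\sx) = \alpha$, not $< \alpha$) is a genuine, if harmless, imprecision in the paper's formulation that you handled appropriately.
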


\begin{lemma}
    If $Y \subseteq \X$ is a $(\nu, \alpha)$-sample for $(\X,\R)$,
    and $Z \subseteq Y$ is a $(\nu, \alpha)$-sample for $(Y,\R)$, then
    $Z$ is a $(\nu, 2\alpha)$-sample for $(\X,\R)$.

    \lemlab{transitive:sample}
\end{lemma}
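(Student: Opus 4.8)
The plan is to obtain the lemma directly from the triangle inequality for $d_\nu$ recorded just above the statement. First I would fix an arbitrary range $\range\in\R$ and abbreviate $a=\Msr{\range}{Z}$, $b=\Msr{\range}{Y}$, and $c=\Msr{\range}{\X}$. The one bit of bookkeeping to get right is that the range of $(Y,\R)$ associated with $\range$ is $\range\cap Y$, whose measure in $Y$ is exactly $\card{\range\cap Y}/\card{Y}=b$; hence $a$, $b$, and $c$ are the measures of one and the same range $\range$ in the three nested sets $Z\subseteq Y\subseteq\X$, so the two distances $d_\nu(a,b)$ and $d_\nu(b,c)$ appearing in the hypotheses are indeed distances between directly comparable quantities.

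With this in place, the hypotheses say precisely that $d_\nu(a,b)<\alpha$ (since $Z$ is a $(\nu,\alpha)$-sample for $(Y,\R)$) and $d_\nu(b,c)<\alpha$ (since $Y$ is a $(\nu,\alpha)$-sample for $(\X,\R)$). Applying the triangle inequality for $d_\nu$ then gives
\[
d_\nu(\Msr{\range}{Z},\Msr{\range}{\X})=d_\nu(a,c)\le d_\nu(a,b)+d_\nu(b,c)<2\alpha ,
\]
and, as $\range$ was arbitrary, this is exactly the statement that $Z$ is a $(\nu,2\alpha)$-sample for $(\X,\R)$. The strictness of the inequalities is carried through, matching the strict inequality in the definition of a $(\nu,2\alpha)$-sample.

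I do not anticipate a real obstacle: essentially all the content is already packed into the triangle inequality for $d_\nu$, which has been imported from \cite{lls01}. The only alternative worth mentioning is a self-contained variant that avoids quoting the metric property: one would instead chain the two-sided inclusions supplied by \corref{d:nu:as:interval}, adding $\card{a-b}\le\Delta'$ and $\card{b-c}\le\Delta'$ (with the appropriate choices of the ``true'' value) via the ordinary triangle inequality on the reals, and then re-read the resulting bound on $\card{a-c}$ through part (i) of \corref{d:nu:as:interval} with $\alpha$ replaced by $2\alpha$. That route is viable but requires a short side computation bounding $\Msr{\range}{Y}$ in terms of $\Msr{\range}{\X}$ in order to absorb the cross terms, so I would present the metric-based argument instead.
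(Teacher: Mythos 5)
Your proof is correct and matches the paper's, which simply invokes the triangle inequality for $d_\nu$; you have merely filled in the (routine) bookkeeping that the same range $\range$ yields the three comparable measures $\Msr{\range}{Z}$, $\Msr{\range}{Y}$, $\Msr{\range}{\X}$. Nothing further is needed.
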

\begin{proof}
    An immediate consequence of the triangle inequality for $d_\nu$.
\end{proof}

\medskip

The following theorem is one of the main observations in this section.
\begin{theorem} 
    Let $(\X,\R)$ be a range space as above.
    \begin{enumerate}[(i)]
        \item If $Z \subseteq \X$ is a $(\nu, \alpha)$-sample for
        $(\X,\R)$, with $0<\alpha<1/4$ and $\nu>0$,
        then $Z$ is a relative $(\nu,4\alpha)$-approximation 
        for $(\X,\R)$.
        \item If $Z$ is a relative $(\nu,\alpha)$-approximation for
        $(\X,\R)$, with $0<\alpha<1$ and $\nu>0$ then $Z$ is a 
        $(\nu,\alpha)$-sample for $(\X,\R)$.
    \end{enumerate}
    
    \thmlab{equivalent:sampling}    
\end{theorem}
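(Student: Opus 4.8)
The plan is to derive both directions from \corref{d:nu:as:interval}, which already packages the equivalence between the bound $d_\nu(\mm,\sx)<\alpha$ and a two-sided inequality $\card{\sx-\mm}\le\Delta$ (resp.\ $\le\Delta'$) in terms of the quantities $\mm=\Msr{\range}{\X}$ and $\sx=\Msr{\range}{Z}$. So throughout, fix a range $\range\in\R$ and write $\mm=\Msr{\range}{\X}$, $\sx=\Msr{\range}{Z}$. The only real work is to match the algebraic form of $\Delta,\Delta'$ against the piecewise definition of a relative approximation, splitting on whether $\mm\ge\nu$ or $\mm\le\nu$.

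For part (ii): assume $Z$ is a relative $(\nu,\alpha)$-approximation, so $\card{\sx-\mm}\le\eps\mm$ when $\mm\ge\nu$ and $\card{\sx-\mm}\le\eps\nu$ when $\mm\le\nu$ (writing $\eps=\alpha$). I want to invoke \corref{d:nu:as:interval}(i), i.e.\ show $\card{\sx-\mm}\le\Delta=\frac{2\alpha}{1+\alpha}\mm+\frac{\alpha\nu}{1+\alpha}$. In the case $\mm\ge\nu$ it suffices that $\alpha\mm\le\frac{2\alpha}{1+\alpha}\mm$, which holds since $\frac{2}{1+\alpha}\ge1$ for $\alpha<1$; in the case $\mm\le\nu$ it suffices that $\alpha\nu\le\frac{\alpha\nu}{1+\alpha}+\frac{2\alpha}{1+\alpha}\mm$, i.e.\ $\alpha(1+\alpha)\nu\le\alpha\nu+2\alpha\mm$, i.e.\ $\alpha^2\nu\le 2\alpha\mm$ — which need NOT hold when $\mm$ is tiny. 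So a cruder bound is needed here: drop the $\mm$ term, and just require $\alpha\nu\le\frac{\alpha\nu}{1+\alpha}$, which is false. The fix is to note that in the small-$\mm$ regime we may simply bound more generously — use $\card{\sx-\mm}\le\alpha\nu$ directly against $d_\nu$: since $\sx\le\mm+\alpha\nu$ and $\sx\ge\mm-\alpha\nu\ge0$ forces $r+s+\nu\ge\nu$ (anyway $r+s+\nu>\nu\ge$ the relevant denominator after checking signs), we get $d_\nu(\mm,\sx)=\frac{\card{\sx-\mm}}{\mm+\sx+\nu}\le\frac{\alpha\nu}{\nu}=\alpha$, with strictness coming from $\mm,\sx$ not both zero or from the strict inequality in the definition — so I will argue the estimate directly from the formula for $d_\nu$ rather than forcing it through the Corollary in this subcase. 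The large-$\mm$ subcase goes through the Corollary cleanly.

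For part (i): assume $Z$ is a $(\nu,\alpha)$-sample with $\alpha<1/4$; by \corref{d:nu:as:interval}(ii), $\card{\sx-\mm}\le\Delta'=\frac{2\alpha}{1-\alpha}\mm+\frac{\alpha\nu}{1-\alpha}$. I must show this implies the relative $(\nu,4\alpha)$-approximation inequalities. When $\mm\ge\nu$: bound $\Delta'\le\frac{2\alpha}{1-\alpha}\mm+\frac{\alpha}{1-\alpha}\mm=\frac{3\alpha}{1-\alpha}\mm\le 4\alpha\mm$, the last step using $\frac{3}{1-\alpha}\le4\iff\alpha\le1/4$. When $\mm\le\nu$: bound $\Delta'\le\frac{2\alpha}{1-\alpha}\nu+\frac{\alpha}{1-\alpha}\nu=\frac{3\alpha}{1-\alpha}\nu\le4\alpha\nu$ by the same computation. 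Both cases give exactly the two conditions in the definition of a relative $(\nu,4\alpha)$-approximation (with the roles of ``$\le p$'' and ``$\ge p$'' handled by these two cases, $p=\nu$), so we are done.

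The main obstacle is the small-$\mm$ subcase of part (ii): the clean route through \corref{d:nu:as:interval}(i) does not quite apply because $\Delta$ contains the shrinking factor $\frac{1}{1+\alpha}$ on the $\nu$-term, so one must either (a) revert to estimating $d_\nu$ directly from its definition in that subcase, or (b) observe that the definition of $(\nu,\alpha)$-sample uses a strict inequality $d_\nu<\alpha$, giving just enough slack — I will take route (a), as it is cleanest. Everything else is a routine check that the constants $2$, $\frac{1}{1\pm\alpha}$, $4$, and the threshold $\alpha<1/4$ line up, using only $0<\alpha<1$ (resp.\ $<1/4$) and $\nu>0$.
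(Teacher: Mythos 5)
Your proof is correct and follows essentially the same route as the paper: part~(i) is read off from \corref{d:nu:as:interval}(ii) after splitting on $\Msr{\range}{\X}\gtrless\nu$ and using $\alpha<1/4$ to convert $\tfrac{3\alpha}{1-\alpha}$ into $4\alpha$, and part~(ii) uses \corref{d:nu:as:interval}(i) in the large-measure case but falls back to estimating $d_\nu$ directly from its formula in the small-measure case, exactly as the paper does. Your observation that the $\mm\le\nu$ branch of part~(ii) cannot be forced through the corollary (because the $\nu$-coefficient $\tfrac{\alpha}{1+\alpha}$ is too small) is accurate and is precisely why the paper switches to the direct argument there.
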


\begin{proof}
    (i) By \corref{d:nu:as:interval}\Space(ii), we have, for each
    $\range\in\R$,
    \[
    \card{\Msr{\range}{\X} - \Msr{\range}{Z} } \leq
    \frac{2\alpha}{1-\alpha} \Msr{\range}{\X} +
    \frac{\alpha \nu}{1-\alpha} < 
    \frac83 \alpha \Msr{\range}{\X} + \frac43 \alpha \nu.
    \]
    Thus, if $\Msr{\range}{\X}\geq \nu$ then 
    $(1-4\alpha)\Msr{\range}{\X} \leq \Msr{\range}{Z}
    \leq (1+4\alpha)\Msr{\range}{\X}$, and if
    $\Msr{\range}{\X}\leq \nu$ then 
    $\card{\Msr{\range}{\X} - \Msr{\range}{Z} } < 4\alpha\nu$,
    establishing the claim.
    
    \smallskip

    (ii) If $\Msr{\range}{\X}\geq \nu$ then
    $$
    \card{\Msr{\range}{\X}- \Msr{\range}{Z}} \leq \alpha
    \Msr{\range}{\X} < \frac{2\alpha}{1+\alpha}
    \Msr{\range}{\X} + \frac{\alpha \nu}{1+\alpha},
    $$
    which implies, by \corref{d:nu:as:interval}\Space(i), that 
    $d_{\nu}(\Msr{\range}{\X}, \Msr{\range}{Z}) \leq \alpha$.
    
    If $\Msr{\range}{\X}\leq \nu$ then 
    \[
    \card{ \Msr{\range}{\X}- \Msr{\range}{Z}} \leq \alpha\nu
    \leq \alpha\pth{ \Msr{\range}{\X} + \Msr{\range}{Z} + \nu},
    \]
    and the claim follows.

\end{proof}
\begin{corollary}
For a range space $(\X,\R)$, if $Y \subseteq \X$ is a relative
$(\nu,\alpha)$-sample for $(\X,\R)$, and $Z \subseteq Y$
is a relative $(\nu, \alpha)$-sample for $(Y,\R)$,
with $0<\alpha<1/8$ and $\nu>0$,
then $Z$ is a relative $(\nu, 8\alpha)$-approximation for $(\X,\R)$.
\corlab{29iii}
\end{corollary}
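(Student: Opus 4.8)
The plan is to chain the three results already established: convert both relative approximations into $(\nu,\alpha)$-samples via \thmref{equivalent:sampling}(ii), compose them with the transitivity lemma for $d_\nu$-samples, and then convert back to a relative approximation via \thmref{equivalent:sampling}(i). (Here the phrase ``relative $(\nu,\alpha)$-sample'' in the statement is read as ``relative $(\nu,\alpha)$-approximation''.)

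First I would invoke \thmref{equivalent:sampling}(ii): since $0<\alpha<1/8<1$ and $\nu>0$, the hypothesis that $Y$ is a relative $(\nu,\alpha)$-approximation for $(\X,\R)$ implies that $Y$ is a $(\nu,\alpha)$-sample for $(\X,\R)$, and likewise $Z$ is a $(\nu,\alpha)$-sample for $(Y,\R)$. Next I would apply \lemref{transitive:sample} to the pair $Z\subseteq Y\subseteq \X$, concluding that $Z$ is a $(\nu,2\alpha)$-sample for $(\X,\R)$. Finally, because $\alpha<1/8$ we have $2\alpha<1/4$, so \thmref{equivalent:sampling}(i) applies with the parameter $2\alpha$ in place of $\alpha$, and it yields that $Z$ is a relative $(\nu,4\cdot 2\alpha)$-approximation, i.e., a relative $(\nu,8\alpha)$-approximation, for $(\X,\R)$, as claimed.

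There is essentially no hard step here: the argument is a three-link chain, and the only thing to be careful about is bookkeeping of the error parameters — in particular, that the constraint $\alpha<1/8$ in the statement is precisely what is needed so that the doubled parameter $2\alpha$ from the transitivity step still lies in the range $(0,1/4)$ required by \thmref{equivalent:sampling}(i), and that the factor $4$ in that theorem combines with the factor $2$ from \lemref{transitive:sample} to give the stated factor $8$. If one wished to state the corollary for relative approximations directly (rather than going through $d_\nu$), one could instead argue by cases on whether $\Msr{\range}{\X}\ge\nu$, but the $d_\nu$-route is cleaner because the metric absorbs all the case analysis into the already-proved triangle inequality.
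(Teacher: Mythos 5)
Your argument is exactly the paper's: apply \thmref{equivalent:sampling}(ii) to both hypotheses, compose via \lemref{transitive:sample} to get a $(\nu,2\alpha)$-sample for $(\X,\R)$, and apply \thmref{equivalent:sampling}(i) (valid since $2\alpha<1/4$) to conclude. Your reading of ``relative $(\nu,\alpha)$-sample'' as ``relative $(\nu,\alpha)$-approximation'' is also the intended one, as the paper's own proof confirms.
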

\begin{proof}
By \thmref{equivalent:sampling}(ii) and \lemref{transitive:sample}, 
$Z$ is a $(\nu,2\alpha)$-sample for $(\X,\R)$. 
By \thmref{equivalent:sampling}(i), it is then a relative
$(\nu,8\alpha)$-approximation for $(\X,\R)$.
\end{proof}

\medskip

We next recall the bound established in \cite{lls01}, and then apply it to
\thmref{equivalent:sampling}. Specifically, we have:

\begin{theorem} 
    (i) {\bf (Li \etal~\cite{lls01})} A random sample of $\X$ of size
    \[
    \frac{c}{\alpha^2 \nu} \left(\delta \log\frac{1}{\nu} + \log
         \frac{1}{q}\right) ,
    \]
    for an appropriate absolute constant $c$,
    is a $(\nu,\alpha)$-sample for $(\X,\R)$ with probability at least
    $1-q$.
    
    (ii) Consequently, a random sample of $\X$ of size
    \[
    \frac{c'}{\eps^2 p} \left(\delta \log\frac{1}{p} + \log
         \frac{1}{q}\right) ,
    \]
    for another absolute constant $c'$,
    is a relative $(p,\eps)$-approximation for $(\X,\R)$ with probability
    at least $1-q$.
    
    \thmlab{l:l:s}
\end{theorem}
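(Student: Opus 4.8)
Part (i) is quoted verbatim from Li et al.\ and requires no proof; the only real work is deriving part (ii) from part (i) together with \thmref{equivalent:sampling}. The plan is as follows. Given the target parameters $p$ and $\eps$, I will invoke part (i) with a carefully chosen pair $(\nu,\alpha)$: set $\nu := p$ and $\alpha := \eps/4$ (shrinking $\alpha$ by the constant factor $4$ is exactly what \thmref{equivalent:sampling}(i) demands, since that theorem turns a $(\nu,\alpha)$-sample into a relative $(\nu,4\alpha)$-approximation). We must of course check that $\alpha = \eps/4 < 1/4$, which holds because $\eps < 1$; this is the hypothesis needed to apply \thmref{equivalent:sampling}(i).

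With these substitutions, part (i) guarantees that a random sample of size
\[
\frac{c}{\alpha^2\nu}\left(\delta\log\frac1\nu + \log\frac1q\right)
= \frac{16c}{\eps^2 p}\left(\delta\log\frac1p + \log\frac1q\right)
\]
is a $(p,\eps/4)$-sample with probability at least $1-q$. By \thmref{equivalent:sampling}(i), any such $(p,\eps/4)$-sample is a relative $(p,\eps)$-approximation for $(\X,\R)$. Hence the same random sample is a relative $(p,\eps)$-approximation with probability at least $1-q$, and setting $c' := 16c$ yields the claimed bound with $c'$ an absolute constant (it inherits absoluteness from $c$).

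There is essentially no obstacle here: the argument is a one-line composition of a quoted sampling bound with the already-established equivalence between $(\nu,\alpha)$-samples and relative $(p,\eps)$-approximations. The only points that need a moment's care are the bookkeeping of the constant-factor loss in $\alpha$ (so that the final relative error is exactly $\eps$, not $\eps/4$ or $4\eps$) and verifying the side condition $\eps/4 < 1/4$ so that \thmref{equivalent:sampling}(i) applies. Both are immediate. One could, if desired, also remark that taking $q$ to be a small absolute constant (say $q = 1/2$) recovers the ``constant probability'' version with sample size $\frac{c''\delta}{\eps^2 p}\log\frac1p$, matching the bound advertised in the introduction.
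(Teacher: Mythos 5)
Your proposal is correct and follows exactly the route the paper intends: the paper gives no explicit proof of part~(ii), stating only that part~(i) is ``applied to'' \thmref{equivalent:sampling}, and your one-line composition (set $\nu=p$, $\alpha=\eps/4$, check $\eps/4<1/4$, absorb the factor $16$ into the constant) is precisely the calculation that sentence elides.
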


We next observe that $\eps$-nets and $\eps$-approximations are 
special cases of $(\nu,\alpha)$-samples, where the second observation
has already been made in \cite{lls01}. The bound on the size of
$(\nu,\alpha)$-samples (\thmref{l:l:s}(i)) then implies the known
bounds on the size of $\eps$-nets (see \cite{HW}) and of
$\eps$-approximations (see \cite{lls01}). Specifically, we have:
\begin{theorem}
Let $(\X,\R)$ be a range space, as above, and let $\eps>0$.
    \begin{enumerate}[(i)] 
        \item For any $\alpha<1/2$ and $\nu=\eps$, a
        $(\nu,\alpha)$-sample from $X$ is an
        $\eps$-net for $(\X,\R)$.  Consequently, a random
        sample of $\X$ of size $O\pth{\frac{1}{\eps} \left(\delta
             \log\frac{1}{\eps} + \log \frac{1}{q}\right)}$, with an
        appropriate choice of the constant of proportionailty, is an
        $\eps$-net for $(\X,\R)$ with probability at least $1-q$.
        
        \item If $\alpha \le \eps/3$ and $\nu\le 1$, then a 
        $(\nu,\alpha)$-sample from $X$ is an
        $\eps$-approximation for $(\X,\R)$.  Consequently,
        a random sample of $\X$ of size
        $O\pth{\frac{1}{\eps^2} \left(\delta+\log \frac{1}{q}\right)}$,
        with an appropriate choice of the constant of proportionailty, 
        is an $\eps$-approximation for $(\X,\R)$
        with probability at least $1-q$.
    \end{enumerate}
    
    \thmlab{carnival}
\end{theorem}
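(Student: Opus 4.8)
The plan is to obtain both parts by unwinding the definition of the metric $d_\nu$ and then reading off the size bounds from \thmref{l:l:s}(i). For part~(i), I would first show that a $(\nu,\alpha)$-sample $Z$ with $\nu=\eps$ and $\alpha<1/2$ is automatically an $\eps$-net. Suppose not: some range $\range\in\R$ has $\Msr{\range}{\X}\ge\eps$ but $\range\cap Z=\emptyset$, i.e., $\Msr{\range}{Z}=0$. Then, since $\nu=\eps$,
\[
d_{\nu}\pth{\Msr{\range}{Z},\Msr{\range}{\X}}
 = \frac{\Msr{\range}{\X}}{\Msr{\range}{\X}+\eps}
 = 1-\frac{\eps}{\Msr{\range}{\X}+\eps},
\]
which is increasing in $\Msr{\range}{\X}$, hence, using $\Msr{\range}{\X}\ge\eps$, is at least $1/2>\alpha$ --- contradicting the $(\nu,\alpha)$-sample property. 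So $\Msr{\range}{Z}>0$ for every range of measure at least $\eps$, i.e., $Z$ is an $\eps$-net. The stated sampling bound then follows by substituting $\nu=\eps$ and any fixed $\alpha<1/2$ (say $\alpha=1/3$) into \thmref{l:l:s}(i).

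For part~(ii) the argument is even more direct. If $Z$ is a $(\nu,\alpha)$-sample with $\alpha\le\eps/3$ and $\nu\le1$, then for every $\range\in\R$ the inequality $d_{\nu}(\Msr{\range}{Z},\Msr{\range}{\X})<\alpha$ rewrites as
\[
\card{\Msr{\range}{Z}-\Msr{\range}{\X}}
 < \alpha\pth{\Msr{\range}{Z}+\Msr{\range}{\X}+\nu}
 \le 3\alpha \le \eps,
\]
since $\Msr{\range}{Z},\Msr{\range}{\X}\le1$ and $\nu\le1$; hence $Z$ is an $\eps$-approximation. (One could instead invoke \corref{d:nu:as:interval}(ii), but the direct computation is cleaner.) For the size bound I would again use \thmref{l:l:s}(i) with $\alpha=\eps/3$, but choosing $\nu$ to be a fixed absolute constant in $(0,1)$, e.g.\ $\nu=1/2$: this still satisfies $\nu\le1$, so the $\eps$-approximation conclusion holds, and it keeps $\log(1/\nu)$ a positive constant, so that the $\delta\log(1/\nu)$ term in \thmref{l:l:s}(i) does not degenerate, and we obtain a sample of size $O\pth{\frac{1}{\eps^2}\pth{\delta+\log\frac1q}}$.

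I do not anticipate any genuine obstacle: the theorem is basically a dictionary between the $d_\nu$-metric and the $\eps$-net / $\eps$-approximation notions, combined with the Li \etal bound of \thmref{l:l:s}. The only point that needs a moment's care is the choice of $\nu$ in part~(ii): it must be kept away from $1$ (else the $\delta$ term is lost) and away from $0$ (else the $1/\nu$ factor blows up), while still being at most $1$ for the qualitative claim to hold --- taking $\nu$ to be an absolute constant such as $1/2$ satisfies all three requirements simultaneously.
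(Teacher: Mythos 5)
Your proof is correct and follows essentially the same route as the paper: part~(i) rules out the case $\Msr{\range}{Z}=0$, $\Msr{\range}{\X}\ge\eps$ by observing that the fraction $\Msr{\range}{\X}/(\Msr{\range}{\X}+\eps)$ is at least $1/2>\alpha$, and part~(ii) unfolds $d_\nu<\alpha$ into $\card{\Msr{\range}{\X}-\Msr{\range}{Z}}<\alpha(\Msr{\range}{\X}+\Msr{\range}{Z}+\nu)\le3\alpha\le\eps$; both size bounds come from \thmref{l:l:s}(i). Your extra remark about pinning $\nu$ to a fixed constant in $(0,1)$ for part~(ii) --- so that the $\delta\log(1/\nu)$ term in \thmref{l:l:s}(i) neither blows up ($\nu\to0$) nor vanishes ($\nu\to1$) --- is a legitimate point that the paper leaves implicit; it is a small but genuine clarification rather than a departure.
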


\begin{proof}
    (i) We need to rule out the possibility that, for some range
    $\range\in\R$, we have $\Msr{\range}{Z} = 0$ and $\Msr{\range}{\X}
    \ge \eps$. Since $Z$ is an $(\eps,\alpha)$-sample, we must then
    have
    \[
    \frac{\Msr{\range}{\X}} {\Msr{\range}{\X} + \eps} < \alpha ,
    \]
    which is impossible, since the fraction is at least $1/2$.
    
    (ii) With this choice of parameters, we have, for any range
    $\range\in\R$,
    \[
    \card{ \Msr{\range}{\X} - \Msr{\range}{Z} } < \frac{\eps}{3}
    \pth{ \Msr{\range}{\X} + \Msr{\range}{Z} + 1 } \le \eps ,
    \]
    as desired.  As noted, the bounds on the sample sizes follow 
    \thmref{l:l:s}(i).
\end{proof}

\medskip

We also note the following (weak) converse of \thmref{carnival}(ii):
If $Z\subseteq \X$ is an $(\alpha\nu)$-approximation for $(\X,\R)$ then $Z$
is a $(\nu,\alpha)$-sample for $(\X,\R)$. Indeed, we have already noted in
the introduction that an $(\alpha\nu)$-approximation for $(\X,\R)$ is also
a relative $(\nu,\alpha)$-approximation, so the claim follows by
\thmref{equivalent:sampling}(ii). This is a weak implication, though,
because, as already noted in the introduction, the bound that it
implies on the size of $(\nu,\alpha)$-samples is weaker than that
given in \cite{lls01} (see \thmref{l:l:s} (i)).

\paragraph{Sensitive approximations.}
We next show that the existence of sensitive $\eps$-approximations
of size $O \pth{ \frac{\delta}{\eps^2} \log \frac{1}{\eps} }$
can also be established using $(\nu,\alpha)$-samples. The proof
is slightly trickier than the preceding ones, because it uses the 
fact that a sample of an appropriate size is (with high probability)
a $(\nu_i,\alpha_i)$-sample, for an entire sequence of pairs
$(\nu_i,\alpha_i)$. The bound yielded by the following theorem is in
fact (slightly) better than the bound established 
by \cite{b-dga-95, bcm-prsss-99}, as mentioned in the introduction.
\begin{theorem}
    Let $(\X,\R)$ be a range space, as above, and let $\eps>0$.
    A random sample from $\X$ of size
    \[
    O \pth{ \frac{1}{\eps^2} \pth{
          \delta \log \frac{1}{\eps} + \log \frac{1}{q}}} ,
    \]
    is a sensitive $\eps$-approximation, with probability 
    $\geq 1-q$.

    \thmlab{sensitive}
\end{theorem}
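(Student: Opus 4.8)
The plan is to cover the single "error profile" of a sensitive $\eps$-approximation by a geometric sequence of $(\nu_i,\alpha_i)$-sample conditions, and then argue that one random sample of the stated size simultaneously satisfies all of them with probability $\ge 1-q$. Recall that a sensitive $\eps$-approximation requires, for every range $\range$, $|\Msr{\range}{Z} - \Msr{\range}{\X}| \le \tfrac{\eps}{2}(\Msr{\range}{\X}^{1/2}+\eps)$. The right-hand side behaves like $\eps\cdot\Msr{\range}{\X}^{1/2}$ for large ranges and like $\eps^2$ for small ones — i.e.\ it interpolates between a pure additive $\eps^2$-error and a relative-type $\eps$-error. The key observation is that on a dyadic band $\Msr{\range}{\X}\in[2^{-i-1},2^{-i}]$, a $(\nu_i,\alpha_i)$-sample with $\nu_i \approx 2^{-i}$ and $\alpha_i\approx \eps\cdot 2^{-i/2}$ gives, via \corref{d:nu:as:interval}(ii), $|\Msr{\range}{Z}-\Msr{\range}{\X}| = O(\alpha_i\Msr{\range}{\X} + \alpha_i\nu_i) = O(\eps 2^{-i/2}\cdot 2^{-i} + \eps 2^{-i/2}\cdot 2^{-i}) = O(\eps \cdot 2^{-3i/2})$, which is $O(\eps\,\Msr{\range}{\X}^{3/2})$ — even stronger than what we need; one really only needs $\alpha_i \approx \eps \cdot 2^{i/2}$, capped at a constant. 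Let me set it up more carefully: for $i=0,1,\dots,\log_2(1/\eps^2)$, take $\nu_i = 2^{-i}$ and $\alpha_i = c\,\eps\,2^{i/2}$ (truncated so $\alpha_i<1/4$), and let $i_{\max}$ be the last index, with $\nu_{i_{\max}}\approx \eps^2$; for ranges with $\Msr{\range}{\X}\le \eps^2$ we will instead use the single condition that $Z$ is an $O(\eps^2)$-approximation, which is the $\nu\to 0$, $\alpha = O(\eps^2)$ case already covered by \thmref{carnival}(ii).

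The steps, in order: (1) Fix the sequence $(\nu_i,\alpha_i)_{i=0}^{i_{\max}}$ as above, plus the extra requirement that $Z$ be an $(\eps^2)$-approximation. (2) For each $i$, \thmref{l:l:s}(i) says a random sample of size $\frac{c}{\alpha_i^2\nu_i}(\delta\log\frac1{\nu_i} + \log\frac1{q_i})$ is a $(\nu_i,\alpha_i)$-sample with probability $\ge 1-q_i$; compute $\frac{1}{\alpha_i^2\nu_i} = \frac{1}{c^2\eps^2 2^i\cdot 2^{-i}} = O(1/\eps^2)$, so each required sample size is $O\!\big(\frac1{\eps^2}(\delta\log\frac1{\eps} + \log\frac1{q_i})\big)$ — the crucial cancellation that makes every band cost the same. (3) Set $q_i = q/(i_{\max}+2)$; since $i_{\max} = O(\log\frac1{\eps})$, we have $\log\frac1{q_i} = O(\log\frac1{q} + \log\log\frac1{\eps}) = O(\log\frac1{q} + \log\frac1{\eps})$, so the sample size is absorbed into $O\!\big(\frac1{\eps^2}(\delta\log\frac1\eps + \log\frac1q)\big)$. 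By the union bound, a single random sample $Z$ of this size is, with probability $\ge 1-q$, simultaneously a $(\nu_i,\alpha_i)$-sample for every $i$ and an $\eps^2$-approximation. (4) Deduce the sensitive bound: given a range $\range$, if $\Msr{\range}{\X}\le \eps^2$ then $|\Msr{\range}{Z}-\Msr{\range}{\X}|\le \eps^2 \le \tfrac\eps2(\Msr{\range}{\X}^{1/2}+\eps)$ using $\eps \le$ the bracket; otherwise pick $i$ with $\Msr{\range}{\X}\in[2^{-i-1},2^{-i}]$ (so $i\le i_{\max}$), apply \corref{d:nu:as:interval}(ii) with $(\nu_i,\alpha_i)$ to get $|\Msr{\range}{Z}-\Msr{\range}{\X}| = O(\alpha_i\Msr{\range}{\X}+\alpha_i\nu_i) = O(\eps\,2^{i/2}\cdot 2^{-i}) = O(\eps\,\Msr{\range}{\X}^{1/2})$, and then choose the constant $c$ small enough that this is $\le \tfrac\eps2\Msr{\range}{\X}^{1/2} \le \tfrac\eps2(\Msr{\range}{\X}^{1/2}+\eps)$.

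The main obstacle — and the only place the argument is genuinely tight rather than routine — is bookkeeping the constants so that the per-band error really lands under $\tfrac\eps2(\Msr{\range}{\X}^{1/2}+\eps)$ uniformly, including the boundary bands $i=0$ (where $\alpha_i$ is truncated to stay below $1/4$, so \corref{d:nu:as:interval} still applies) and $i=i_{\max}$ (the handoff to the additive $\eps^2$-approximation regime, where one must check the two estimates overlap and neither leaves a gap). A secondary point to handle cleanly is that $\alpha_i$ must satisfy $\alpha_i<1$ (indeed $<1/4$) for \corref{d:nu:as:interval} to be invocable; since $\alpha_i = c\eps 2^{i/2}$ and $2^{i/2}\le 2^{i_{\max}/2} = O(1/\eps)$, we have $\alpha_i = O(c)$, so taking $c$ a sufficiently small absolute constant makes this automatic. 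Everything else — the size computation, the union bound, the geometric-sum argument that $\log(1/q_i)$ does not blow up — is straightforward.
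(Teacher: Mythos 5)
Your overall strategy --- covering the sensitive error profile with a ladder of $(\nu_i,\alpha_i)$-sample conditions and union-bounding --- is the same as the paper's. Your dyadic ladder ($\nu_i=2^{-i}$, $\alpha_i=c\eps\, 2^{i/2}$, with constant product $\alpha_i^2\nu_i=c^2\eps^2$ and only $O\!\left(\log\frac1\eps\right)$ rungs) differs from the paper's arithmetic one ($\nu_i = i\eps^2/400$, $\alpha_i=1/(4i)^{1/2}$, $\Theta(1/\eps^2)$ rungs); your version is slightly more economical, but since the $\log(\text{number of rungs})$ term is absorbed into $O\!\left(\log\frac1\eps\right)$ either way, the asymptotic sample size is the same. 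Your per-band estimate for $\Msr{\range}{\X}\in[2^{-i-1},2^{-i}]$ via \corref{d:nu:as:interval}\Space(ii), yielding error $O\!\left(\alpha_i\Msr{\range}{\X}+\alpha_i\nu_i\right)=O\!\left(c\eps\sqrt{\Msr{\range}{\X}}\right)$, is correct and matches the paper in spirit.

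There is, however, a genuine gap in your handling of small ranges, i.e.\ $\Msr{\range}{\X}\le\eps^2$. In steps (1) and (4) you impose the extra requirement that $Z$ be an $\eps^2$-approximation (via \thmref{carnival}(ii)) and then claim $\card{\Msr{\range}{Z}-\Msr{\range}{\X}}\le\eps^2\le\frac\eps2\left(\Msr{\range}{\X}^{1/2}+\eps\right)$. Two problems: (a) \thmref{carnival}(ii) would require a sample of size $\Omega\!\left(\frac1{\eps^4}\left(\delta+\log\frac1q\right)\right)$ to guarantee an $\eps^2$-approximation --- far larger than the claimed $O\!\left(\frac1{\eps^2}\left(\delta\log\frac1\eps+\log\frac1q\right)\right)$; and (b) the inequality $\eps^2\le\frac\eps2\left(\sqrt r+\eps\right)$ is equivalent to $\sqrt r\ge\eps$, i.e.\ $r\ge\eps^2$, which is the \emph{opposite} of the case you are in. The correct move --- and what the paper in effect does with its $i=1$ band --- is to drop the superfluous $\eps^2$-approximation entirely and notice that the top rung of your own ladder already does the job: at $i=i_{\max}$ you have $\nu_{i_{\max}}\approx\eps^2$ and $\alpha_{i_{\max}}\approx c$ (a small absolute constant), so for any $r\le\eps^2\le\nu_{i_{\max}}$, \corref{d:nu:as:interval}\Space(ii) gives $\card{s-r}=O(\alpha_{i_{\max}}\nu_{i_{\max}})=O(c\eps^2)$, which is $\le\frac{\eps^2}{2}\le\frac\eps2\left(\sqrt r+\eps\right)$ once $c$ is chosen small enough. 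With that correction, your proof is sound.
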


\begin{proof}
Put $\nu_i = i\eps^2/400$, $\alpha_i = 1/(4i)^{1/2}$, 
for $i=1, \ldots, M = \ceil{400/\eps^2}$. Note that
$\alpha_i^2\nu_i = \eps^2/800$ for each $i$. 

Let $Z$ be a random sample of size
\[
m = O \pth{ \frac{1}{\eps^2} \pth{ \delta \log \frac{1}{\eps} +
\log \frac{M}{q}}} .
\]
\thmref{l:l:s} implies that, with an appropriate choice of the
constant of proportionality, the following holds: For each $i$,
$Z$ is a $(\nu_i,\alpha_i)$-sample, with probability at least
$1-\delta/M$. Hence, with probability at least $1-\delta$,
$Z$ is a $(\nu_i,\alpha_i)$-sample for every $i$.

Now consider any range $\range \in \R$, and put $r=\Msr{\range}{\X}$,
$s=\Msr{\range}{Z}$. 
Let $i$ be the index satisfying $(i-1)\eps^2/400 \le r < i\eps^2/400$.
Assume first that $i>1$, so we have $\frac12\nu_i \leq r \leq \nu_i$, 
and thus
\begin{equation}
    \alpha_i r \leq \alpha_i \nu_i =
    \sqrt{ \alpha_i^2 \nu_i\cdot \nu_i}
    \leq \sqrt{ \alpha_i^2 \nu_i} \sqrt{2r}
    = \sqrt{\frac{\eps^2}{800}} \sqrt{2r} 
    = \frac{\eps\sqrt{r}}{20}.
    \eqlab{stupid}
\end{equation}
Since $Z$ is a $(\nu_i,\alpha_i)$-sample, we have 
    \[
    d_{\nu_i}(r,s) =
    \frac{\card{r-s}}{r+ s+ \nu_i} < \alpha_i.
    \]
If $s \leq \nu_i$ then this implies
    \[
    \card{r-s} \leq 3\nu_i\alpha_i\leq \frac{3\eps \sqrt{r}}{20}
    < \frac{\eps}{2} \pth{ \sqrt{r} + \eps},
    \]
so sensitivity holds in this case.
Otherwise, if $s \geq \nu_i \geq r$, then
    \begin{align*}
        s-r \leq \alpha_i\pth{ r +s +\nu_i} &\;\;\Rightarrow \;\;
        (1-\alpha_i)(s-r) \leq \alpha_i\pth{ 2r  +\nu_i} \\
        &\;\;\Rightarrow \;\; s-r \leq \frac{ \alpha_i\pth{ 2r +\nu_i}
        }{(1-\alpha_i)} \leq 2 \alpha_i\pth{ 2r +\nu_i},
    \end{align*}
since $\alpha_i \leq 1/2$. Hence, by \Eqref{stupid}, we have
    \[
    \card{s-r} \leq 6 \alpha_i \nu_i \leq 6 \frac{\eps\sqrt{r}}{20}
    < \frac{\eps}{2} \pth{ \sqrt{r} + \eps } ,
    \]
so sensitivity holds in this case too.

Finally, assume $i=1$, so $r \le \eps^2/400$. In this case we have
    \[
    d_{\nu_1}(r,s) =
    \frac{\card{r-s}}{r+ s+ \nu_1} < \alpha_1=1/2.
    \]
If $s\le \nu_1$ then 
$$
\card{r-s} < \frac32 \nu_1 < \frac{\eps^2}{2} \le
    \frac{\eps}{2} \pth{ \sqrt{r} + \eps},
$$
as required. If $s > \nu_1$ then we have
$$
s-r < \frac12 \pth{ r+s+\nu_1 }, \quad\mbox{or}\quad
s-r < 2r+\nu_1 \le 3\nu_1 < \frac{\eps^2}{2} \le
    \frac{\eps}{2} \pth{ \sqrt{r} + \eps},
$$
showing that sensitivity holds in all cases.
\end{proof}

\medskip

It is ineresting to note that the bound on the size of
sensitive $\eps$-approximations cannot be improved (for general
range spaces with bounded VC-dimension). This is because
a sensitive $\eps$-approximation is also an $\eps^2$-net,
and there exist range spaces (of any fixed VC-dimension $\delta$)
for which any $\eps^2$-net must be of
size $\Omega((\delta/\eps^2) \log(1/\eps))$ \cite{kpw-92}.

\paragraph{From sensitive to relative approximations.}
Our next observation is that sensitive approximations are also
relative approximations, with an appropriate calibration of
parameters. In a way, this can be regarded as a converse of
\thmref{sensitive}, which shows that a set which is simultaneously 
a relative approximation (i.e., a $(nu,\alpha)$-sample)
for an entire appropriate sequence of pairs of parameters
is a sensitive approximation. Specifically, we have:
\begin{theorem}
    Let $0<\eps, p<1$ be given parameters, and set $\eps' =
    \eps\sqrt{p}$.  Then, if $Z\subseteq \X$ is a sensitive
    $\eps'$-approximation for $(\X,\R)$, it is also a relative
    $(p,\eps)$-approximation for $(\X,\R)$.
    
    \thmlab{sensitive:to:relative}
\end{theorem}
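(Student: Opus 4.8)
The plan is to take a range $\range\in\R$, set $r=\Msr{\range}{\X}$ and $s=\Msr{\range}{Z}$, and verify the two defining inequalities of a relative $(p,\eps)$-approximation directly from the sensitive bound $\card{s-r}\le \frac{\eps'}{2}\pth{\sqrt{r}+\eps'}$, where $\eps'=\eps\sqrt{p}$. There are naturally two cases, matching the two clauses of the definition.

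First, suppose $r\ge p$. Then I would bound $\sqrt{r}\le \sqrt{r}\cdot\frac{\sqrt{r}}{\sqrt{p}}=r/\sqrt{p}$, and also $\eps'=\eps\sqrt{p}\le \eps\sqrt{r}\le \eps r/\sqrt{p}$ (using $r<1$ in the last step, or just $r\ge p$ for the first). Plugging into the sensitive bound gives
\[
\card{s-r}\;\le\;\frac{\eps\sqrt{p}}{2}\pth{\frac{r}{\sqrt{p}}+\eps\sqrt{p}}
\;=\;\frac{\eps r}{2}+\frac{\eps^2 p}{2}\;\le\;\frac{\eps r}{2}+\frac{\eps r}{2}\;=\;\eps r,
\]
where the last inequality uses $\eps p\le r$ (since $\eps<1$ and $r\ge p$). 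Hence $(1-\eps)r\le s\le(1+\eps)r$, which is clause (i).

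Second, suppose $r\le p$. Now I want $\card{s-r}\le \eps p$. From the sensitive bound, $\card{s-r}\le \frac{\eps'}{2}\sqrt{r}+\frac{(\eps')^2}{2}=\frac{\eps\sqrt{p}}{2}\sqrt{r}+\frac{\eps^2 p}{2}$. Since $r\le p$ we have $\sqrt{p}\sqrt{r}\le p$, so the first term is at most $\frac{\eps p}{2}$; and since $\eps<1$, the second term $\frac{\eps^2 p}{2}\le\frac{\eps p}{2}$. Adding, $\card{s-r}\le \eps p$, which is clause (ii).

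This is entirely elementary; there is no real obstacle. The only thing to be slightly careful about is the constants — the factor $\frac12$ in the sensitive definition is exactly what makes both halves of each case sum to the desired bound rather than twice it, so I would make sure to carry that $\frac12$ through rather than absorbing it. One could also remark that the choice $\eps'=\eps\sqrt p$ is tight in the sense that a sensitive $\eps'$-approximation with a larger $\eps'$ would not in general give relative error $\eps$ on ranges of measure exactly $p$, but this is not needed for the statement.
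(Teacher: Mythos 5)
Your proof is correct and follows essentially the same route as the paper's: split into the cases $\Msr{\range}{\X}\ge p$ and $\Msr{\range}{\X}\le p$, substitute $\eps'=\eps\sqrt p$ into the sensitive bound, and use $r\ge p$ (resp.\ $r\le p$) together with $\eps<1$ to absorb each of the two terms into $\tfrac{\eps}{2}r$ (resp.\ $\tfrac{\eps}{2}p$). The paper's only cosmetic difference is that it writes $\Msr{\range}{\X}=\alpha p$ with $\alpha\ge 1$ in the first case, but the algebra is the same; the auxiliary bound $\eps'\le\eps r/\sqrt p$ that you mention in passing is not actually used in your displayed computation, so its justification does not matter.
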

\begin{proof}
    We are given that
    $\card{\Msr{\range}{\X} - \Msr{\range}{Z}} \leq
    \frac{\eps'}{2} \pth{ \Msr{\range}{\X}^{1/2} + \eps'}$
    for each $\range \in \R$.
    
    Now let $\range\in\R$ be a range with $\Msr{\range}{\X} \ge p$, 
    that is, $\Msr{\range}{\X} = \alpha p$, for some $\alpha\ge 1$.
    Then
    \[
    \card{\Msr{\range}{\X} - \Msr{\range}{Z}} \leq
    \frac{\eps\sqrt{p}}{2} \pth{ \sqrt{\alpha p} + \eps\sqrt{p}} =
    \frac{\eps^2p}{2} + \frac{\eps}{2} \sqrt{\alpha} p \leq
    \pth{\frac{\eps^2}{2} + \frac{\eps}{2}} \alpha p \leq \eps
    \Msr{\range}{\X}.
    \]
    Similarly, if $\Msr{\range}{\X} \leq p$, then
    \[
    \card{\Msr{\range}{\X} - \Msr{\range}{Z}} \leq
    \frac{\eps\sqrt{p}}{2} \pth{ \sqrt{ p} + \eps\sqrt{p}} =
    \frac{\eps^2p + \eps p}{2}  \leq \eps p.
    \]
    Hence, $Z$ is a relative $(p,\eps)$-approximation.
\end{proof}

\medskip

This observation implies that one can compute relative
$(p,\eps)$-approximations efficiently, in a deterministic fashion,
using the algorithms in \cite{b-dga-95, bcm-prsss-99} for deterministic 
construction of sensitive approximations. We thus obtain 
the following result.

\begin{lemma}
    Let $(\X, \R)$ be a range space with finite 
    VC-dimension $\delta$, where $\card{\X} = n$, and let 
    $0<\eps, p < 1$ be given parameters.  
    Then one can construct a relative $(p,\eps)$-approximation 
    for $(\X,\R)$ of size 
    $O\left(\frac{\delta}{\eps^2 p} \log\frac{\delta}{\eps p}\right)$, 
    in
    \[
    \min\left\{
      O(\delta)^{3\delta}{\,\pth{\frac{1}{p\eps^2} 
            \log \frac{\delta}{\eps}}^\delta n },\;\;
      O(n^{\delta+1}) \right\}
    \]
    deterministic time.
    
    \lemlab{r:e:build}
\end{lemma}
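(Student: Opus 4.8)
The plan is to obtain the lemma as an essentially immediate corollary of \thmref{sensitive:to:relative} together with the known deterministic constructions of \emph{sensitive} approximations. Set $\eps' := \eps\sqrt{p}$. By \thmref{sensitive:to:relative}, every sensitive $\eps'$-approximation for $(\X,\R)$ is also a relative $(p,\eps)$-approximation, so it suffices to construct a sensitive $\eps'$-approximation within the stated size and time bounds.

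First I would run the deterministic algorithm of \Bronnimann \etal~\cite{b-dga-95, bcm-prsss-99} (quoted in the introduction) with parameter $\eps'$ in place of $\eps$. It returns a sensitive $\eps'$-approximation $Z$ of size $O\pth{\frac{\delta}{(\eps')^2}\log\frac{\delta}{\eps'}}$ in deterministic time $O(\delta)^{3\delta}\pth{1/\eps'}^{2\delta}\log^\delta\pth{\delta/\eps'}\,n$. Substituting $\eps' = \eps\sqrt{p}$ and using $\sqrt{p}>p$, the size becomes $O\pth{\frac{\delta}{\eps^2 p}\log\frac{\delta}{\eps\sqrt{p}}} = O\pth{\frac{\delta}{\eps^2 p}\log\frac{\delta}{\eps p}}$, and the running time becomes $O(\delta)^{3\delta}\pth{\frac{1}{p\eps^2}}^{\delta}\log^\delta\pth{\frac{\delta}{\eps\sqrt{p}}}\,n$, which collapses to $O(\delta)^{3\delta}\pth{\frac{1}{p\eps^2}\log\frac{\delta}{\eps}}^{\delta}n$ once the $\log(1/\sqrt{p})$ factor is absorbed; in the residual regime where it cannot be absorbed (roughly $p\lesssim 1/n$) the size bound already exceeds $n$, so one may simply output $Z=\X$. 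Applying \thmref{sensitive:to:relative} to $Z$ finishes this branch.

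For the second, $O(n^{\delta+1})$, time bound I would instead first materialize the restricted set system $\mathcal{R}|_\X$: by the Sauer--Shelah lemma it has only $O(n^{\delta})$ distinct members, and in the geometric settings of interest (or through a subsystem oracle) each of them can be listed in $O(n)$ time, for a total of $O(n^{\delta+1})$. On this explicit finite system one can then run any deterministic construction of a sensitive $\eps'$-approximation (for instance the iterated-halving / reweighting scheme) whose cost is polynomial in the number of set--element incidences and does not blow up with $1/\eps'$; \thmref{sensitive:to:relative} again converts the output into a relative $(p,\eps)$-approximation. Returning whichever of the two constructions is faster yields the claimed $\min\{\cdot,\cdot\}$ bound.

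I do not expect a genuine obstacle here: all the real work is already done in \thmref{sensitive:to:relative} and in the cited sensitive-approximation algorithms, and the lemma is essentially a repackaging of them. The only points requiring care are (a) the arithmetic of the substitution $\eps'=\eps\sqrt{p}$ — checking that the logarithmic factors and the $\delta$-th powers simplify to exactly the stated expressions, and disposing of the degenerate range $p\lesssim 1/n$ — and (b) stating clearly the computational assumptions behind each branch: a subsystem oracle of the type used by \Bronnimann \etal for the first time bound, versus explicit enumeration of $\mathcal{R}|_\X$ for the $O(n^{\delta+1})$ bound, the latter being the one that wins when $p$ and $\eps$ are small.
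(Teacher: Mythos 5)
Your proposal matches the paper's own (very brief) argument: Lemma~\ref{lemma:r:e:build} is stated immediately after the remark that \thmref{sensitive:to:relative} lets one obtain relative $(p,\eps)$-approximations by running the deterministic sensitive-approximation algorithms of \Bronnimann~\etal~\cite{b-dga-95, bcm-prsss-99} with $\eps' = \eps\sqrt{p}$, which is exactly the reduction you describe. One small inaccuracy worth noting: the regime in which $\log(\delta/(\eps\sqrt{p}))$ fails to be $O(\log(\delta/\eps))$ is $p$ smaller than a power of $\eps/\delta$, not merely $p\lesssim 1/n$, so the absorption step is slightly looser than you claim; but this looseness is already present in the paper's stated time bound and does not affect the substance of the argument.
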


\section{Relative \TPDF{$(p,\eps)$}{(p,eps)}-approximations in the
   plane}
\seclab{plane}

In this section, we present a construction of smaller-size relative
$(p,\eps)$-approximations for the range space involving a set of
points in the plane and halfplane ranges.  The key
ingredient of the construction is the result of the following
subsection, interesting in its own right.

\subsection{Spanning trees with small relative crossing number}

We derive a refined ``weight-sensitive'' version of the classical 
construct of {\em spanning trees with small crossing number}, 
as obtained by Chazelle and Welzl~\cite{CW}, with a simplified 
construction given in \cite{Wel92}. We believe that this refined 
version is of independent interest, and expect it to have
additional applications.

In accordance with standard notation used in the literature, we
denote from now on the underlying point set by $P$.

We first recall the standard result:
\begin{theorem}[\cite{Wel92}]
    Let $P$ be a set of $n$ points in $\Re^d$.  Then there exists
    a straight-edge spanning tree $\Tree$ of $P$ such that each hyperplane in
    $\Re^d$ crosses at most $O(n^{1-1/d})$ edges of $\Tree$.

    \thmlab{crossing}
\end{theorem}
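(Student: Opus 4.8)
The plan is to use the classical iterative reweighting (multiplicative-weights) technique, building $\Tree$ in $O(\log n)$ rounds, each of which roughly halves the number of ``active'' points. The first step is to pass to a finite set of hyperplanes: two hyperplanes that induce the same partition of $P$ into the points strictly above and strictly below them cross exactly the same set of straight segments spanned by $P$, and (after a generic perturbation that puts no point of $P$ on any hyperplane) the number of partitions of $P$ realizable by hyperplanes in $\Re^d$ is $N = O(n^d)$ (the dual shatter function bound, i.e.\ the complexity of an arrangement of $n$ hyperplanes). Hence it suffices to bound the number of tree edges crossed by each of $N = O(n^d)$ representative hyperplanes $h_1,\dots,h_N$; give each an initial weight $w(h_i)=1$, so the total weight is $W=N$.

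The geometric heart of the argument is the following subroutine: given the current weights and a current set $Q$ of $m$ active points, there exist two points $p,q\in Q$ whose connecting segment $\overline{pq}$ is crossed by hyperplanes of total weight $O(W/m^{1/d})$. This follows from the cutting lemma for (weighted) hyperplane arrangements: there is a $(1/r)$-cutting of $\Re^d$ into $O(r^d)$ simplices, each crossed by hyperplanes of total weight at most $W/r$; choosing $r=\Theta(m^{1/d})$ so that $O(r^d)\le m/2$, pigeonhole gives a simplex containing two points of $Q$, and the segment joining them stays inside that simplex. (For a purely existential statement one can instead obtain the cutting from a weighted random sample of the hyperplanes via an $\eps$-net argument, at the cost of extra logarithmic factors that then have to be absorbed.)

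A round then proceeds by repeatedly picking such a pair $p,q$ from the active set, adding the edge $\overline{pq}$ to $\Tree$, doubling $w(h)$ for every representative hyperplane $h$ that crosses $\overline{pq}$, and deleting one of $p,q$ from the active set; the round stops after $m/2$ such steps, having halved the active set. Viewing each round as a perfect matching on the current points followed by promoting one endpoint of each matched pair, the union of all added edges is, after $O(\log n)$ rounds, a connected straight-edge spanning tree on $P$ with $n-1$ edges. To bound crossings, run a potential argument on $W$: adding an edge raises $W$ by the weight it crosses, i.e.\ by a factor $1+O(m^{-1/d})$ with $m$ the active size, so over the $m/2$ edges of a round $W$ grows by a factor at most $\exp(O(m^{1-1/d}))$; multiplying over rounds with $m=n,n/2,n/4,\dots$ the exponents telescope (a geometric series) to $O(n^{1-1/d})$, so $W_{\mathrm{final}}\le N\exp(O(n^{1-1/d}))$. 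If a representative hyperplane $h$ crosses $t$ tree edges, its weight was doubled $t$ times, so $2^{t}\le w(h)\le W_{\mathrm{final}}$, whence $t\le \log_2 N + O(n^{1-1/d}) = O(d\log n)+O(n^{1-1/d}) = O(n^{1-1/d})$ for any constant $d\ge 2$ (the case $d=1$ being trivial). By the first-step reduction this bounds the crossing number of every hyperplane.

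The step I expect to be the main obstacle is the geometric subroutine: guaranteeing a pair whose connecting segment has small weighted crossing, which rests on the existence of good cuttings (equivalently, simplicial partitions) for hyperplane arrangements, and in getting the clean $O(n^{1-1/d})$ bound one really wants the optimal $O(r^d)$-size cutting rather than the sampling-based one. Everything else — the finiteness of the test set of hyperplanes and the telescoping of the multiplicative potential — is routine bookkeeping.
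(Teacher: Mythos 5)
The paper cites this theorem to Welzl~\cite{Wel92} and gives no proof of its own, so there is no internal proof to compare against. Your proposal is a correct reconstruction of the iterative-reweighting (multiplicative-weights) argument that is exactly the approach of \cite{Wel92} (and Chazelle--Welzl): reduce to $O(n^d)$ representative hyperplanes via the dual shatter function, use a weighted cutting with $r=\Theta(m^{1/d})$ to extract a pair of active points whose segment has crossing weight $O(W/m^{1/d})$, double the weights of crossing hyperplanes, and let the telescoping geometric sum $\sum_k (n/2^k)^{1-1/d}=O(n^{1-1/d})$ dominate the additive $\log N = O(d\log n)$ term.
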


Let $P$ be a set of $n$ points in the plane.  
For a line $\Line$, let $w^+_{\Line}$ (resp.,
$w^-_{\Line}$) be the number of points of $P$ lying
above (resp., below or on) $\Line$, and define the 
\emph{weight} of $\Line$, denoted by $w_{\Line}$, 
to be $\min\{ w^+_{\Line}, w^-_{\Line}\}$. 

Let $\D_k = \D(P, k)$ be the intersection of all closed halfplanes
that contain at least $n-k$ points of $P$.  Note that, by the
centerpoint theorem (see \cite{Mat03}), $\D_k$ is not empty for $k <
n/3$. Moreover, $\D_k$ is a convex polygon, since it is equal to the
intersection of a finite number of halfplanes.  (Indeed, it is equal
to the intersection of all halfplanes containing at least $n-k$ points
of $P$ and bounded by lines passing through a pair of points of $P$.)

The region $\D_k$ can be interpreted as a level set of the 
{\em Tukey depth} induced by $P$; see \cite{abet-rdcp-00}.

\begin{lemma}
    Let $P$ be a set of $n$ points in the plane.  
    (i) Any line $\Line$ that avoids the interior of 
    $\D_k$ has weight $w_{\Line} \leq 2k$.
    (ii) Any line $\Line$ that intersects the interior of
    $\D_k$ has weight $w_{\Line} > k$.
    \lemlab{weight:line}
\end{lemma}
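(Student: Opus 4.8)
The plan is to prove both parts by relating the weight of a line $\Line$ to the Tukey-depth interpretation of $\D_k$. Recall that $\D_k$ is the intersection of all closed halfplanes containing at least $n-k$ points of $P$; equivalently, a point $x$ lies in $\D_k$ iff every closed halfplane whose bounding line passes through $x$ contains at least $n-k$ points of $P$, i.e., iff the Tukey depth of $x$ with respect to $P$ is at least... one must be slightly careful here with the ``at least $n-k$'' convention, but the upshot is that $\D_k$ is the set of points of Tukey depth exceeding $k$ in the appropriate sense.

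For part (i): suppose $\Line$ avoids the interior of $\D_k$. Let $h^+$ and $h^-$ be the two closed halfplanes bounded by $\Line$, and let $w^+_{\Line} = |P \cap h^+|$, $w^-_{\Line} = |P \cap h^-|$ (with the on-line convention as in the statement). I claim $w_{\Line} = \min\{w^+_\Line, w^-_\Line\} \le 2k$. The key step is: if $w_{\Line} > 2k$, then \emph{both} $w^+_{\Line} > 2k$ and $w^-_{\Line} > 2k$, hence $w^+_{\Line}, w^-_{\Line}$ are both large, and one wants to conclude that the strip-like region forced by these two ``heavy'' halfplane constraints must meet $\D_k$. More precisely, translate $\Line$ slightly into $h^+$ until it is the bounding line of a halfplane $h'^+ \subseteq h^+$ still containing at least $n-k$ points (possible since $w^+_\Line > 2k > k$; we can afford to shed up to $k$ points), and similarly translate into $h^-$; the two translated halfplanes both appear in the defining family of $\D_k$, and since $w^\pm_\Line$ exceed $2k$ there is ``room'' for the translated lines to lie on opposite sides of $\Line$, so the intersection of these two halfplanes is a nondegenerate strip straddling $\Line$, which therefore contains points on both sides of $\Line$ and in particular meets the interior side of $\Line$; combined with the fact that this intersection lies inside $\D_k$, we contradict the assumption that $\Line$ avoids $\mathrm{int}\,\D_k$. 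The honest bookkeeping is in choosing the translation amounts so that each translated halfplane still has $\ge n-k$ points while the two translates properly straddle $\Line$; this is where the factor $2$ (rather than $1$) in the bound $2k$ comes from.

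For part (ii): suppose $\Line$ meets the interior of $\D_k$, say at a point $x$. Consider the two closed halfplanes $h^+, h^-$ bounded by $\Line$. Since $x \in \mathrm{int}\,\D_k$ and $\D_k$ is the intersection of all closed halfplanes with $\ge n-k$ points of $P$, every such defining halfplane contains $x$ in its interior; in particular, taking the halfplane bounded by $\Line$ on the side $h^+$: if it had $\ge n-k$ points it would be a defining halfplane of $\D_k$, but $x \in \Line = \bd h^+$ is not in its interior, contradiction — so $|P \cap h^+| < n - k$, and symmetrically $|P \cap h^-| < n-k$. Since $w^+_\Line + w^-_\Line \ge n$ (every point is in $h^+$ or $h^-$), we get $w^+_\Line > n - (n-k) = k$ and likewise $w^-_\Line > k$... wait, from $w^-_\Line < n-k$ and $w^+_\Line + w^-_\Line = n$ (if we partition, not double-counting the line) we get $w^+_\Line > k$; hence $w_\Line = \min\{w^+_\Line, w^-_\Line\} > k$. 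One must reconcile this with the exact ``above'' / ``below or on'' split in the paper's definition of $w^\pm_\Line$, which partitions $P$ exactly, making $w^+_\Line + w^-_\Line = n$ and the argument clean.

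The main obstacle I anticipate is the off-by-a-constant and boundary-convention juggling in part (i): getting the translated halfplanes to simultaneously retain at least $n-k$ points \emph{and} straddle $\Line$, while squeezing exactly the factor $2$ out of it, is the only nonroutine estimate; part (ii) is essentially a one-line consequence of the definition of $\D_k$ once the counting conventions are pinned down.
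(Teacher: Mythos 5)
Your argument for part (ii) is correct and is essentially the contrapositive of the paper's one-line observation (if $w_\Line\le k$ then one of the closed halfplanes bounded by $\Line$ contains at least $n-k$ points, hence is a defining halfplane of $\D_k$, hence the interior of $\D_k$ lies strictly on one side of $\Line$); your direct version, showing each closed halfplane bounded by $\Line$ has fewer than $n-k$ points, is equally valid once the ``above''/``on-or-below'' convention is fixed.

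Part (i), however, has a genuine gap, and it is exactly the containment direction. You construct two halfplanes $H_1,H_2$ from the defining family of $\D_k$ whose bounding lines lie on opposite sides of $\Line$, and then assert that their intersection (a strip straddling $\Line$) ``lies inside $\D_k$.'' This is backwards: $\D_k$ is the intersection of \emph{all} closed halfplanes containing at least $n-k$ points, so $\D_k\subseteq H_1\cap H_2$, not the reverse. The fact that the strip $H_1\cap H_2$ straddles $\Line$ therefore tells you nothing about where $\D_k$ sits relative to $\Line$; $\D_k$ could be a small convex set hugging one wall of the strip, entirely on one side of $\Line$. (There is also a quantitative snag in the translation step: if $w^+_\Line$ is the smaller count, say just above $2k$, there is no sub-halfplane $h'^+\subseteq h^+$ with $\ge n-k$ points at all, so ``afford to shed up to $k$ points'' does not apply as stated.) To make the approach work one would need to exhibit points of $\D_k$ itself on both sides of $\Line$, which is what your argument does not do.

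The paper's proof of (i) takes a different and cleaner route: it translates $\Line$ toward $\D_k$ until it becomes a supporting line $\LineA$ at a vertex $v$ of $\D_k$. The vertex $v$ is the intersection of two bounding lines of $\D_k$, each of which has at most $k$ points of $P$ in the open halfplane away from $\D_k$. The open halfplane bounded by $\Line$ on the far side (the side from which we translated) is contained in the union of those two open halfplanes, and hence contains at most $2k$ points, giving $w_\Line\le 2k$. No claim about a region lying inside $\D_k$ is needed; one only uses that the region beyond $\Line$ is covered by the two ``light'' halfplanes at $v$.
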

\begin{proof}
    (i) Translate $\Line$ in parallel until it supports $\D_k$. The
    new line $\LineA$ must pass through a vertex $v$ of $\D_k$
    which is the intersection of two lines bounding two respective
    closed halfplanes,
    each having $k$ points in its complement. Thus, the union of the
    complements of these two halfplanes contains at most $2k$ points,
    and it contains $\LineA$ and $\Line$. Thus, $\Line$ has at most
    $2k$ points on one of its sides.
    
    (ii) The second claim is easy: If the weight of $\Line$ were at most
    $k$ then, by definition, the interior of $\D_k$ would be
    completely contained on one side of $\Line$.
\end{proof}

\begin{lemma}
    The set $P \setminus \D_k$ can be covered by pairwise openly
    disjoint triangles ${C_1},\ldots, {C_u}$, each containing at most
    $2k$ points of $P\setminus \D_k$, 
    such that \emph{any} line intersects at most
    $O( \log( n/ k ))$ of these triangles.  Moreover, $C_i\cap\bd \D_k
    \ne\emptyset$, for each $i=1,\ldots,u$.
    
    \lemlab{cover}
\end{lemma}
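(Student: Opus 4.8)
The plan is to build the triangle cover by a recursive geometric divide-and-conquer driven by the nested family of Tukey-depth regions $\D_1 \subseteq \D_2 \subseteq \cdots$. The key point is that $P \setminus \D_k$ consists of the ``shallow'' points, and by \lemref{weight:line} the boundary of each $\D_j$ is crossed only by lines of weight roughly $j$. First I would consider the sequence of indices $k = k_0 < k_1 < \cdots < k_t \approx n/3$ with $k_{i+1} = 2k_i$ (so $t = O(\log(n/k))$), and look at the ``annular'' regions $A_i = \D_{k_{i+1}} \setminus \D_{k_i}$. Each annulus $A_i$ contains at most $2k_{i+1}$ points of $P$ lying outside $\D_{k_i}$ but inside $\D_{k_{i+1}}$, but this is still too many points for a single triangle, so within each annulus I would further decompose: walk around $\bd \D_{k_i}$ and cut $A_i$ into $O(1)$-per-piece chunks by chords emanating from the boundary, grouping consecutive shallow points so that each resulting cell holds at most $2k$ of them. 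The requirement $C_i \cap \bd\D_k \neq \emptyset$ forces every triangle to touch the innermost region $\D_k$; this suggests that rather than true annuli I should use triangles that are ``anchored'' at $\D_k$ — e.g. take a point inside $\D_k$ (a centerpoint, which exists since $k < n/3$) and form cones/triangles from it outward, truncating each cone so it contains $\le 2k$ points, and refining cones recursively as we move to deeper layers.

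Concretely, the cleaner route is the recursion itself: to cover $P \setminus \D_k$, take the convex polygon $\D_{2k} \supseteq \D_k$, triangulate the ``shell'' $\D_{2k}\setminus \D_k$ from $\bd\D_k$ so that each triangle contains $\le 2k$ points of $P$ (possible since the whole shell contains $O(k)$ such points by a weight argument, using \lemref{weight:line}(i) applied to the supporting lines of $\D_k$), then recurse on $P \setminus \D_{2k}$ but contract all its triangles back toward $\D_k$ — or, more simply, accept that the recursion produces triangles anchored on $\bd\D_{2^i k}$ and observe at the end that one can slide each such triangle inward so it touches $\bd\D_k$ without increasing its point count or destroying the crossing bound. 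At level $i$ we are covering $P \setminus \D_{2^i k}$, and the number of levels is $O(\log(n/k))$ because after $O(\log(n/k))$ doublings the region $\D_{2^i k}$ swallows all but $O(1)$ layers worth of points (stopping once $2^i k \ge n/3$, at which point the centerpoint theorem says $\D$ is nonempty and the remaining $P\setminus\D$ has $O(n)$ points handled by one more $O(\log)$-depth recursion, or we just stop the doubling at $n/3$ and note $t=O(\log(n/k))$).

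The crossing bound is the reason the doubling structure is the right one. Fix any line $\Line$ and let $w_\Line$ be its weight. By \lemref{weight:line}(ii), $\Line$ enters the interior of $\D_j$ only when $j \ge w_\Line$, hence $\Line$ is disjoint from the interiors of all the ``deep'' shells $\D_{2^i k}\setminus\D_{2^{i-1}k}$ with $2^{i-1} k \ge w_\Line$ — in those shells $\Line$ lies entirely outside $\D_{2^{i-1}k}$ and so can meet the shell's triangles only in a bounded way. More carefully: a line that avoids the interior of a convex polygon $\D_{2^{i-1}k}$ and meets a collection of triangles each touching $\bd\D_{2^{i-1}k}$, arranged in convex position around it, meets only $O(1)$ of them (the triangles it crosses are consecutive around the polygon, and it can cross at most a constant number since it sees the polygon from outside). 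In the ``shallow'' shells with $2^{i-1}k < w_\Line \le n$ — of which there are only $O(\log(n/w_\Line)) = O(\log(n/k))$ many — I use the trivial bound that $\Line$ crosses $O(1)$ triangles per shell as well, since within each shell the triangles are again arranged around a convex polygon and a line crosses a convex polygon's boundary twice, entering and exiting $O(1)$ of the anchored triangles near each crossing. Summing $O(1)$ over $O(\log(n/k))$ shells gives the claimed $O(\log(n/k))$.

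The main obstacle I expect is the triangulation of a single shell $\D_{2k}\setminus \D_k$ with the simultaneous guarantees that (a) each triangle has $\le 2k$ points of $P$, (b) each triangle touches $\bd\D_k$, and (c) a line crossing the shell meets only $O(1)$ triangles. Guarantee (a) wants many triangles when points cluster; guarantee (c) wants few; reconciling them requires that the ``few'' triangles a line crosses are exactly the ones near where it pierces $\bd\D_k$, which in turn needs the triangles to be fans anchored on $\bd\D_k$ and ordered consistently with the cyclic order on $\bd\D_k$. Making this ordering argument airtight — in particular handling a line that grazes $\D_k$ tangentially, or that crosses a vertex where several fan triangles meet — is the delicate part; I would handle it by choosing the anchor chords generically (through pairs of points of $P$, consistent with the definition of $\D_k$) and charging each crossing of $\Line$ with a triangle to one of the $\le 2$ points where $\Line \cap \bd\D_k$ sits, plus the $O(1)$ triangles incident to those boundary points.
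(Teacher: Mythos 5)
Your proposal takes a genuinely different route from the paper, but it contains a gap that I do not think can be patched without essentially re-inventing the paper's construction. You organize the cover by concentric Tukey-depth shells $A_i = \D_{2^{i+1}k}\setminus\D_{2^{i}k}$, and you need the bound ``$\Line$ crosses $O(1)$ triangles per shell.'' You call this a trivial bound, but it is false: a fan of pairwise disjoint triangles anchored around a convex polygon can be crossed many times by a single line. Concretely, attach a long thin triangle to each vertex of $\bd\D_{2^{i}k}$, pointing roughly tangentially; a line that misses $\D_{2^{i}k}$ but skims past it can enter and exit $\Theta(\text{number of triangles in the shell})$ of these spikes in sequence, because a line crosses each triangle only once but nothing stops it from hitting one after another along its length. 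Since a shell can contain $\Theta(n)$ points and hence $\Theta(n/k)$ triangles, this breaks the crossing argument entirely. Your final paragraph in fact identifies this exact tension — (a) many triangles vs.\ (c) few crossings — but the ``charge each crossing to the $\le 2$ points where $\Line$ meets $\bd\D_k$'' idea does not resolve it, because the problematic crossings happen far from $\D_k$.

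The paper's proof uses a different and essential idea that your proposal lacks: a \emph{balanced} binary decomposition of the complement of $\D_k$, not shells. Starting from the two vertical supporting halfplanes, it repeatedly picks a pocket with more than $2k$ points and splits it with a line supporting $\D_k$ at a vertex of that pocket, chosen so that the two resulting sub-pockets have \emph{equal} point counts; the piece cut off on the far side of the splitting line has at most $2k$ points by \lemref{weight:line}(i). This balance means the ``weight'' of a pocket halves with each split, so one can define $O(\log(n/k))$ discrete levels. The crossing bound then follows because (outside a constant number of exceptional cases) a line that enters a polygon through its base comes from a polygon of strictly smaller level, so it can cross at most one polygon per level. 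Nothing in the shell-by-shell decomposition produces this monotone level structure, and without it there is no way to charge crossings to only $O(\log(n/k))$ events. If you want to pursue your variant, the missing ingredient is to make each shell's internal subdivision a balanced recursion whose depth is controlled, at which point you will essentially have reconstructed the paper's pocket hierarchy.

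One smaller point: you assert that ``the whole shell contains $O(k)$'' points of $P$. That is not true in general; $P\setminus\D_{2k}$ can already have $\Theta(n)$ points (e.g.\ $P$ in convex position), so $\D_{4k}\setminus\D_{2k}$ can as well. The per-shell point count is not what is controlled here — what \lemref{weight:line}(i) gives you is that a single cut at a vertex of $\D_k$ removes at most $2k$ points, which is exactly what the paper's one-cut-at-a-time construction exploits.
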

\begin{proof} 
    We construct polygons $\Zone_i$ iteratively, as follows. Let
    $\lambda_L$ and $\lambda_R$ be the two vertical lines supporting
    $\D_k$ on its left and on its right, respectively.  The
    polygon $\Zone_1$ (resp., $\Zone_2$) is the halfplane to the left
    (resp., right) of $\lambda_L$ (resp., $\lambda_R$). The
    construction maintains the invariant that the complement of the
    union of the polygons $\Zone_1, \ldots, \Zone_i$ constructed so
    far is a convex polygon $K_i$ that contains $\D_k$ and each edge
    of the boundary of $K_i$ passes through some vertex of $\D_k$, 
    so that $K_i\setminus \D_k$ consists of
    pairwise disjoint connected ``pockets''.  (Initially, after
    constructing $\Zone_1$ and $\Zone_2$, we have two pockets---the
    regions lying respectively above and below $\D_k$, between
    $\lambda_L$ and $\lambda_R$.)

    \begin{figure}[htb]
        \begin{center}
            \includegraphics{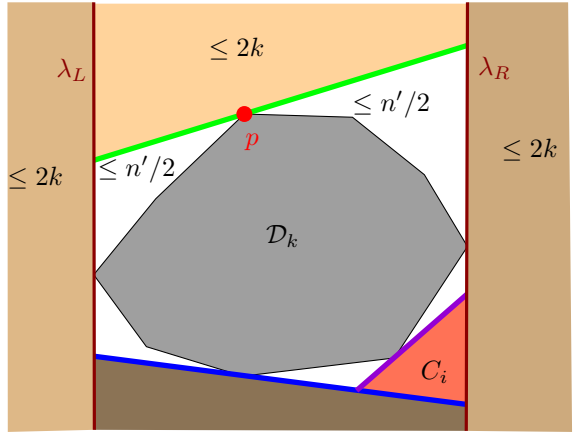}
            \caption{The polygon $\D_k$ and the decomposition of its
               complement.}
            \label{cuts}
        \end{center}
    \end{figure}
    
    
    Each step of the construction picks a pocket\footnote{%
       Apologies for the pun.}  that contains more than $2k$ points of
    $P$, finds a line $\Line$ that supports $\D_k$ at a vertex of the
    pocket, and subdivides the pocket into two sub-pockets and a third
    piece that lies on the other side of $\Line$.  The line $\Line$ is
    chosen so that the two resulting sub-pockets contain an equal number
    of points of $P$.  The third piece, which clearly contains at most
    $2k$ points of $P$ (see \lemref{weight:line}), is taken to be the next
    polygon $\Zone_{i+1}$, and the construction continues in this manner
    until each pocket has at most $2k$ points. We refer to the
    polygons $\Zone_{i}$ constructed up to this point as 
    {\em non-terminal}. We then terminate the
    construction, adding all the pockets to the output collection of
    polygons, referring to them as {\em terminal} polygons. 
    Note that each non-terminal $\Zone_i$ is a (possibly unbounded) 
    triangle, having a ``base'' whose relative interior passes
    through a vertex of $\D_k$, and two other sides, each of which is a
    portion of a base of an earlier triangle. The terminal polygons are 
    {\em pseudo-triangles}, each bounded by two straight edges and by a
    ``base'' which is a connected portion of the boundary of $\D_k$,
    possibly consisting of several edges.  See Figure~\ref{cuts}.
    
    We claim that each line $\Line$ intersects at most $O( \log(n/k))$
    polygons. For this, define the {\em weight} $w(\Zone_i)$ of a
    non-terminal polygon $\Zone_i$, for $i\ge 3$, to be the number of
    points of $P$ in the pocket that was split when $\Zone_i$ was
    created; the weight of each terminal polygon is the number of points
    of $P$ that it contains, which is at most $2k$. Define the 
    {\em level} of $\Zone_i$ to be $\floor{ \log_2 w(\Zone_i) }$.  
    It is easily checked that $\Line$ crosses at most two terminal
    polygons (two if it crosses $\D_k$ and at most one if it misses $\D_k$),
    and it can cross both (non-base) sides of at most one (terminal or
    non-terminal) polygon.  Any other polygon $\Zone_i$ crossed by
    $\Line$ is such that $\Line$ enters it through its base, reaching
    it from another polygon whose level is, by construction, strictly
    smaller than that of $\Zone_i$.  Since there are only $O(\log
    (n/k))$ distinct levels, the claim follows.
    
    It is easy to verify that the convex hulls 
    $C_1 = \CH( \Zone_1 ), \ldots, C_u = \CH( \Zone_u )$ are 
    triangles with pairwise disjoint interiors,
    which have the required properties.
\end{proof}

\medskip

\begin{lemma}
    Let $1\le k \le n$ be a prespecified parameter. One can construct 
    a spanning tree $\Tree$ for $P' = P \setminus \D_k$, 
    such that each line 
    intersects at most $O( \sqrt{k} \log(n/k))$ edges of $\Tree$.
    
    \lemlab{level:k}
\end{lemma}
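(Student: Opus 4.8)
The plan is to combine \lemref{cover} with the classical spanning-tree construction of \thmref{crossing}, applied separately inside each triangle of the covering, and then to glue the resulting trees together.

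First I would invoke \lemref{cover} to obtain the pairwise openly disjoint triangles $C_1,\ldots,C_u$ covering $P'=P\setminus\D_k$, where each $C_i$ contains at most $2k$ points of $P'$ and any line meets at most $O(\log(n/k))$ of them. Let $P_i = (P'\cap C_i)\setminus(C_1\cup\cdots\cup C_{i-1})$ be the points assigned to $C_i$ (break ties in the boundary overlaps arbitrarily, so the $P_i$ partition $P'$); note $\card{P_i}\le 2k$. For each nonempty $P_i$, apply \thmref{crossing} in the plane ($d=2$) to build a straight-edge spanning tree $\Tree_i$ of $P_i$ such that every line crosses at most $O\pth{\sqrt{\card{P_i}}} = O(\sqrt{k})$ edges of $\Tree_i$. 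Finally, pick one representative point $v_i\in P_i$ from each nonempty part and add $u-1$ connecting edges joining consecutive representatives $v_i, v_{i+1}$, so that $\Tree = \bigcup_i \Tree_i$ together with these bridge edges is a single spanning tree $\Tree$ of $P'$.

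Next I would bound the number of edges of $\Tree$ crossed by an arbitrary line $\Line$. The edges split into two groups. For the tree edges inside the $\Tree_i$'s: $\Line$ meets at most $O(\log(n/k))$ of the triangles $C_i$, and each $\Tree_i$ lives inside $C_i$ (its edges are chords of the convex region $C_i$, since the triangles are convex), so for a $C_i$ not met by $\Line$ no edge of $\Tree_i$ is crossed; for each of the $O(\log(n/k))$ triangles that $\Line$ does meet, at most $O(\sqrt{k})$ edges of $\Tree_i$ are crossed. This contributes $O(\sqrt{k}\log(n/k))$. For the $O(u)$ bridge edges $v_iv_{i+1}$ the crude bound $O(u)$ is far too weak, since $u$ can be as large as $\Theta(n/k)$; the fix is to route the bridges carefully. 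Here is the main obstacle and how I would handle it: order the nonempty parts so that consecutive representatives lie in triangles that are "adjacent" in the decomposition of \lemref{cover} (the pockets/triangles there are produced by a recursive splitting, giving them a natural tree-like adjacency structure), and observe that a line $\Line$ can cross a bridge $v_iv_{i+1}$ only if it separates $v_i$ from $v_{i+1}$, which (with an appropriate ordering, e.g.\ an Euler-tour / DFS order of the triangle adjacency tree, traversing each bridge $O(1)$ times) forces $\Line$ to enter or leave one of the $O(\log(n/k))$ triangles it meets; a bridge whose two endpoints lie in triangles both missed by $\Line$, or both on the same side of $\Line$, is not crossed. Charging each crossed bridge to a triangle-entry event of $\Line$ then gives only $O(\log(n/k))$ crossed bridge edges. (If this routing turns out to be delicate, an alternative is: contract each $P_i$ to its representative $v_i$, apply \thmref{crossing} to the point set $\{v_1,\ldots,v_u\}$ of size $u\le n/k$ to get a spanning tree on the representatives crossed at most $O(\sqrt{u}) = O(\sqrt{n/k})$ times by any line, and use its edges as bridges; this yields the clean bound $O(\sqrt{k}\log(n/k) + \sqrt{n/k})$, and since $\sqrt{n/k}\le\sqrt{k}\log(n/k)$ throughout the range $1\le k\le n$, it still gives the claimed $O(\sqrt{k}\log(n/k))$.)

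Adding the two contributions, every line crosses $O(\sqrt{k}\log(n/k))$ edges of $\Tree$, as required. The only point needing care beyond routine verification is the handling of the bridge edges, and the representative-contraction argument in the parenthetical disposes of it cleanly; I would present that version in the actual proof, since it avoids any reliance on the fine structure of the decomposition in \lemref{cover} beyond the two properties already stated there.
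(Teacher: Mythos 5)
Your overall framework---cover $P'$ by the triangles of \lemref{cover}, build a low-crossing tree inside each, and glue---matches the paper's, but the gluing step is exactly where your argument breaks, and the fallback you say you would actually present is numerically wrong. The claim that $\sqrt{n/k}\le\sqrt{k}\log(n/k)$ for all $1\le k\le n$ is false: it is equivalent to $n\le k^2\log^2(n/k)$, which fails for every $k$ up to roughly $\sqrt{n}/\log n$ (try $k=2$, where it reads $n\le 4\log^2(n/2)$). Since \thmref{good:partition} invokes this lemma with $k=2^i$ for $i=1,2,\ldots$, those small values of $k$ are precisely the ones that matter, so the representative-tree alternative does not deliver the stated bound. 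Your primary Euler-tour idea has a different problem: even if two triangles $C_i,C_j$ share boundary, $C_i\cup C_j$ need not be convex, so a straight bridge $v_iv_j$ can exit both triangles and be crossed by a line that enters neither; the proposed charging scheme ``crossed bridge $\to$ triangle-entry event'' therefore does not go through as stated.

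The paper avoids all of this by exploiting the last clause of \lemref{cover}, which you did not use: each $C_i$ meets $\bd\D_k$. One connects a single point of $P\cap C_i$ to a vertex of $\bd C_i\cap\bd\D_k$ by a segment that, by convexity of $C_i$, lies entirely inside $C_i$, and one throws the whole convex polygon $\bd\D_k$ into the graph as a hub. A line crosses $\bd\D_k$ at most twice, crosses at most $O(\log(n/k))$ of the connecting segments (each lives in one triangle), and crosses $O(\sqrt k)$ edges in each of the $O(\log(n/k))$ subtrees it meets; taking any spanning tree of this union and shortcutting away the Steiner vertices of $\bd\D_k$ finishes the proof. The point you should take away is that the hub must be a single object with $O(1)$ crossing number, not a chain of $u-1$ bridges, and \lemref{cover} hands you exactly such a hub for free.
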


\begin{proof}
    Construct the decomposition of $P \setminus \D_k$ into
    $u$ covering triangles $C_1, \ldots, C_u$, using \lemref{cover}.
    
    For each $i=1,\ldots,u$, construct a spanning tree $\Tree_i$ of
    $P \cap C_i$ with crossing number $O(k^{1/2})$, using
    \thmref{crossing}. In addition, connect one point of 
    $P \cap C_i$ to an arbitrary vertex of $\bd{C_i} \cap \bd{\D_k}$.  
    Since $\CH(P \cap C_i)$ and $\CH(P \cap C_j)$ are disjoint, 
    for $i \ne j$, it follows that no pair of edges of any pair of
    (the plane embeddings of the) trees among 
    $\Tree_1,\ldots,\Tree_u$ cross each other.
    
    Let $G$ be the planar straight-line graph formed by the union of
    $\bd{\D_k}, \Tree_1, \ldots, \Tree_u$, plus the connecting
    segments just introduced, and let $\Tree^*$ be a spanning tree of 
    $G$; the vertex set of $\Tree^*$ contains $P \setminus \D_k$.
    
    Let $\Line$ be a line in the plane.  The proof of the preceding
    lemma implies that $\Line$ intersects at most $O( \log(n/k))$ of
    the polygons $C_i$. Hence, $\Line$ crosses at most two edges of
    $\bd{\D_k}$, at most $O(\log(n/k))$ of the connecting
    segments, and it can cross edges of at most $O(\log(n/k))$
    trees $\Tree_i$, for $i=1,\ldots,u$. Since $\Line$ crosses at most
    $O(k^{1/2})$ edges of each such tree, we conclude that $\Line$
    crosses at most $O( \sqrt{k} \log(n/k))$ edges of $\Tree^*$.
    
    Finally, we get rid of the extra ``Steiner vertices'' of $\Tree^*$
    (those not belonging to $P \setminus \D_k$) in a
    straightforward manner, by making $\Tree^*$ a rooted tree, at 
    some point of $P \setminus \D_k$, and by replacing each
    path connecting a point $u\in P \setminus \D_k$ 
    to an ancestor $v\in P \setminus \D_k$, where all
    inner vertices of the path are Steiner points, by the straight
    segment $uv$. This produces a straight-edge spanning tree 
    $\Tree$ of $P \setminus \D_k$, whose crossing number is 
    at most that of $\Tree^*$.
\end{proof} 

\begin{theorem}
    Given a set $P$ of $n$ points in the plane, one can
    construct a spanning tree $\Tree$ for $P$ such that any line
    $\Line$ crosses at most $O(\sqrt{w_{\ell}}
    \log(n/w_{\ell}))$ edges of $\Tree$. The tree $\Tree$ can be
    constructed in $O(n^{1+\eps})$ (deterministic) time, for any fixed 
    $\eps>0$. 
    
    \thmlab{good:partition}
\end{theorem}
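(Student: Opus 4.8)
The plan is to combine the shallow spanning trees of \lemref{level:k} over a geometric sequence of depth parameters, and to glue them together into a single spanning tree of $P$. Concretely, set $k_j = n/2^j$ for $j = 0, 1, \ldots, \lceil \log_2 (n/2)\rceil$, so that $k_0 = n$ (a trivial layer, handled by \thmref{crossing} directly) and the last $k_j$ is $\Theta(1)$. For each $j$ I would apply \lemref{level:k} with parameter $k_j$ to build a spanning tree $\Tree_j$ of $P \setminus \D_{k_j}$ with the property that every line crosses at most $O(\sqrt{k_j}\,\log(n/k_j))$ edges of $\Tree_j$. The nested regions $\D_{k_0} \supseteq \D_{k_1} \supseteq \cdots$ (larger $j$ gives a smaller region, since fewer points are dropped) induce a partition of $P$ into the ``annular'' layers $A_j = \D_{k_{j+1}} \setminus \D_{k_j}$ together with the innermost surviving set; each point of $P$ lies in exactly one layer, and the tree $\Tree_j$ spans all points outside $\D_{k_j}$, hence in particular all of $P$ that lies in layers of index $\le j$.

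The gluing step: form the graph $G$ which is the union of all the $\Tree_j$'s together with, for each $j$, one connecting edge joining a vertex of $\Tree_{j+1}$ on (or near) $\bd \D_{k_j}$ to a vertex of $\Tree_j$ near $\bd \D_{k_j}$ — analogous to the connecting segments used in the proof of \lemref{level:k}. Since $\Tree_j$ already spans $P \setminus \D_{k_j}$, only $O(\log n)$ such connecting edges are needed to make $G$ connected and spanning all of $P$; take a spanning tree $\Tree$ of $G$ and then suppress Steiner vertices by the same path-shortcutting trick used in \lemref{level:k}, which does not increase any crossing count. The crux of the analysis is bounding, for a line $\Line$ of weight $w_\Line$, the number of edges of $\Tree$ it crosses. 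Let $j^*$ be the smallest index with $k_{j^*} \le w_\Line$, i.e. $k_{j^*} = \Theta(w_\Line)$. By \lemref{weight:line}(ii), a line of weight $> k_j$ must intersect the interior of $\D_{k_j}$; contrapositively, if $\Line$ avoids $\operatorname{int}\D_{k_j}$ then $w_\Line \le 2k_j$. The point is that for layers with $k_j \ll w_\Line$ — that is, $j > j^*$ — the line $\Line$ passes through $\D_{k_j}$ and so its interaction with $\Tree_j$ is confined to the cover triangles of \lemref{cover} that it actually meets; but those triangles all live in $P \setminus \D_{k_j}$, and the relevant bound is that $\Line$ crosses $O(\sqrt{k_j}\log(n/k_j))$ edges of $\Tree_j$ regardless, and this sum over $j > j^*$ is geometrically dominated. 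Wait — I need the sum $\sum_{j \ge j^*} \sqrt{k_j}\,\log(n/k_j)$ to telescope; since $k_j$ halves each time, $\sqrt{k_j}$ forms a geometric series with ratio $1/\sqrt 2$, and the logarithmic factors grow only linearly in $j$, so the whole tail sum is $O(\sqrt{k_{j^*}}\,\log(n/k_{j^*})) = O(\sqrt{w_\Line}\,\log(n/w_\Line))$, which is exactly the claimed bound. For the $O(\log n)$ connecting edges we must be slightly more careful: a naive bound gives an extra $O(\log n)$ term, which can exceed $O(\sqrt{w_\Line}\log(n/w_\Line))$ when $w_\Line$ is very small. The fix is to route the connecting edges along $\bd\D_{k_j}$ so that a line of weight $w_\Line$ only crosses connecting edges at levels $j \gtrsim j^*$ (by \lemref{weight:line}(i), a low-weight line stays near the boundary of the small regions and misses the connectors attached to the large regions); this again telescopes into the target bound.

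For the running time, I would invoke the fact that \thmref{crossing} (equivalently the Chazelle–Welzl construction) can be implemented in $O(m^{1+\eps})$ time on $m$ points, and note that all the ingredients of \lemref{cover} and \lemref{level:k} — computing $\D_k$, the pocket decomposition, and the cover triangles — take time polynomial in $n$, in fact $O(n\log n)$ or $O(n^{1+\eps})$ per layer with standard data structures for $k$-level / Tukey-depth computations; summing over $O(\log n)$ layers keeps the total at $O(n^{1+\eps})$ after adjusting $\eps$. The main obstacle I anticipate is not the crossing-number telescoping (which is routine once the layers are set up) but the careful placement and accounting of the $O(\log n)$ connecting edges between consecutive layers, so that a small-weight line does not pay the full $O(\log n)$ for connectors belonging to the outer, high-$k$ layers; getting this localization right — essentially arguing via \lemref{weight:line} that $\Line$ only ``sees'' connectors at levels $j \gtrsim j^*$ — is the delicate part of the proof.
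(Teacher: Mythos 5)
Your high-level plan — sum $\sqrt{k_j}\log(n/k_j)$ over a geometric sequence of depth parameters and invoke \lemref{weight:line} to argue that a line of weight $w_\Line$ only ``sees'' a bounded range of layers — is the right plan, and it is essentially what the paper does. But the concrete construction you propose does not yield disjoint layers, and it has the nesting of the regions $\D_k$ inverted, and both of these are fatal.

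First, the sign error: $\D_k$ is the intersection of all closed halfplanes containing at least $n-k$ points, so as $k$ \emph{increases} more halfplanes enter the intersection and $\D_k$ \emph{shrinks}, not grows. With $k_j = n/2^j$ decreasing, $\D_{k_j}$ therefore \emph{grows} with $j$, so $\D_{k_0}\subseteq\D_{k_1}\subseteq\cdots$, the opposite of what you assert, and the set $P\setminus\D_{k_j}$ that $\Tree_j$ spans \emph{shrinks} with $j$. In particular $\Tree_0$ (with $k_0=n$) spans essentially all of $P$, with crossing number $\Theta(\sqrt n)$ on some lines; this tree alone already ruins the claimed bound for any small-weight line, and extracting a spanning tree from $G = \bigcup_j \Tree_j$ gives you no control over which edges survive. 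Your analysis only bounds the contribution of the tail $j>j^*$ (the small trees); you never control the layers $j\le j^*$, and for your construction those layers are the large, expensive ones.

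Second, and more structurally: the trees in your construction cover nested, overlapping sets, not a partition of $P$ into annular rings. The paper gets disjointness by \emph{peeling}: set $P_0=P$ and at step $i$ take $Q_i = \D(P_{i-1},2^i)$ with an \emph{increasing} parameter $k_i = 2^i$, put $P_i = P_{i-1}\cap Q_i$, and build $\Tree_i$ on the ring $P_{i-1}\setminus Q_i$ by \lemref{level:k}. Crucially, $\D$ is applied to the \emph{residual} set $P_{i-1}$, not to the fixed $P$, so the rings are pairwise disjoint and the process terminates in $O(\log n)$ steps. A line $\Line$ of weight $k$ then cannot cross $Q_i$ for $i > \lceil\log_2 k\rceil$ (by \lemref{weight:line}(ii), using that $\Line$'s weight w.r.t.~$P_{i-1}\subseteq P$ is at most $k$), so it only touches the first $O(\log k)$ rings, and the sum $\sum_{i\le\lceil\log_2 k\rceil} O(\sqrt{2^i}\log(n/2^i)) = O(\sqrt k\log(n/k))$ telescopes exactly as you anticipated — but from the small end, not the large end. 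The connecting edges between consecutive rings are just $O(\log n)$ segments, one per ring, and the same localization argument bounds the ones $\Line$ can cross. To repair your proof, replace your fixed-reference, decreasing-$k$ construction with the paper's peeling scheme (residual sets, increasing $k_i=2^i$, trees on the disjoint rings $P_{i-1}\setminus Q_i$); the rest of your telescoping argument and your use of \lemref{weight:line} then goes through.
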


\begin{proof}
    We construct a sequence of subsets of $P$, as follows.  Put
    $P_0 = P$. At the $i$\th step, $i\geq 1$, consider the
    polygon $Q_i = \D(P_{i-1}, 2^i)$, and let
    $P_i = P_{i-1} \cap Q_{i}$.  We stop when $P_i$ becomes empty.
    By construction, the $i$th step removes at least $2^i$ points from
    $P_{i-1}$, so $|P_i| \le |P_{i-1}|-2^i$, and the process terminates 
    in $O(\log n)$ steps.
%
   
    For each $i$, construct a spanning tree $\Tree_i$ for 
    $P_{i-1} \setminus Q_i$, using \lemref{level:k} 
    (with $k=2^i$).  Connect the resulting trees by straight segments
    into a single spanning tree $\Tree$ of $P$.
    
    We claim that $\Tree$ is the desired spanning tree. Indeed, 
    consider an arbitrary line $\Line$ of weight $k$. Observe that 
    $\Line$ cannot cross any of the polygons $Q_i$,
    for $i > U = \ceil{\log_2 k}$, since any line that crosses
    such a polygon must be of weight at least $2^{U+1}>k$, by
    \lemref{weight:line}(ii).
    
    Thus $\Line$ crosses only the first $U$ layers of 
    our construction. Hence, the number of edges of $\Tree$ that
    $\Line$ crosses is at most
    \[
    \sum_{i=1}^{U} O \pth{ \sqrt{2^i} \log(n/2^i) } = 
    O\pth{ \sum_{i=1}^{U} (\sqrt{2})^i (\log n - i) } = 
    O( \sqrt{k} \log (n/k)),
    \]
    as is easily verified. This establishes the bound on the crossing
    number of $\ell$. 
    
    \paragraph{Running time.}  
    Computing $Q_i$ can be done in the dual plane, by constructing the
    convex hulls of the levels $2^i$ and $n-2^i$ in the arrangement of
    the lines dual to the points of $P_{i-1}$.  We use the algorithm
    of \Matousek~\cite{Ma:center}, which constructs the convex hull of
    a level in $O\pth{\card{P_{i-1}}\log^4n}$ time, for a total of
    $O(n\log^5n)$ time.  This also subsumes the time needed to
    construct $P_i$ from $P_{i-1}$, for all $i$.  Next, we carry out
    the constructive proof of \lemref{cover} for $P_{i-1}\setminus
    Q_i$, which can be implemented to run in $O(n \log^{O(1)}n)$ time.
    Finally, for each set in the cover, we apply the algorithm of
    \cite{Wel92} to construct the corresponding subtree with low
    crossing number.  This takes $O(m^{1+\eps})$ time for a set of
    size $m$, for any fixed $\eps >0$. We continue in this fashion, as
    described in the first part of the proof. It is now easy to verify
    that the resulting construction takes overall $O(n^{1+\eps})$
    time, for any fixed $\eps >0$.
\end{proof}

\subsubsection{The underlying partition and a counterexample in
   three dimensions}

The main technical step in the construction of
\thmref{good:partition} is the partition of the set of
{\em $k$-shallow} points of $P$, namely, those that are contained
in some halfplane with at most $k$ points of $P$, into subsets, 
each containing at most $2k$ points, so that any line crosses
the convex hulls of at most $O( \log (n/k))$ of these subsets.

It is natural to try to extend this construction to three 
(or higher) dimensions. 
However, as we show next, there are examples of 
point sets in $\Re^3$ where no such partition exists,
even when the points are in convex position.

To see this, let $m$ be an integer, and put $n = m^2$. Define
\[
P = \brc{ (i, j, i^2 + j^2 ) \mid i,j=1,\ldots, m }.
\]
That is, $P$ is the set of the vertices of the $m\times m$ 
integer grid in the $xy$-plane, lifted to the standard (convex)
paraboloid $z=x^2+y^2$. Thus, all points of $P$ are 
in convex position (and are thus $1$-shallow).

Let $k>1$ be an arbitrary parameter, and consider any partition
$\Partition = \brc{Q_1, \ldots, Q_u}$ of $P$ into $u = \Theta(n/k)$
sets, where $k \leq \card{Q_i} \leq 2k$, for $i=1,\ldots,u$. Let $C_i$
denote the two-dimensional convex hull of the projection of $Q_i$ onto
the $xy$-plane, for $i=1,\ldots, u$. The sum of the $x$-span and the
$y$-span of $C_i$ is at least $\sqrt{\card{Q_i}}$ (or else there would
be no room for $Q_i$ to contain all its points). 
Hence, the total length of these spans
of $C_1, \ldots, C_u$ is at least $\Omega( (n/k) \sqrt{k} )
= \Omega(n/\sqrt{k})$. Consider the $2m-2$ vertical planes $x=1+1/2,
x=2+1/2,\ldots, x=m-1+1/2$, and $y=1+1/2, y=2+1/2,\ldots,
y=m-1+1/2$. Clearly, the overall number of intersection points between 
the boundaries of the $C_i$'s and these planes is proportional to the 
sum of their $x$-spans and $y$-spans.
Hence, there is a plane in this family
that intersects $\Omega( (n/\sqrt{k})/ \sqrt{n} ) = \Omega( \sqrt{n/k}
)$ sets among $C_1, \ldots, C_u$, and thus it intersects the $\Omega(
\sqrt{n/k} )$ corresponding convex hulls among $\CH(Q_1), \ldots,
\CH(Q_u)$.

Note that a similar argument can be applied to the set of the 
vertices of the $m^{1/3}\times m^{1/3}\times m^{1/3}$
integer lattice. In this case, for any partition of this set 
of the above kind, there always exists a plane that crosses 
the convex hulls of at least $\Omega((n/k)^{2/3})$ of the subsets. 
(This matches the upper bound in the partition theorem of
\Matousek~\cite{Ma:ept}.) Of course, here, (most of) the 
points are not shallow.

To summarize, there exist sets $P$ of $n$ points in convex position in
3-space (so they are all $1$-shallow), such that any partition of $P$
into sets of size (roughly) $k$ will have a plane that crosses at
least $\Omega(\sqrt{n/k})$ sets in the partition.  (Without the convex
position, or shallowness, assumption, there exist sets for which this
crossing number is at least $\Omega((n/k)^{2/3})$.)  We do not know
whether this lower bound is worst-case tight. That is, can a set $P$
of $n$ $k$-shallow points in 3-space be partitioned into $\Theta(n/k)$
subsets, so that no plane separates more than $O(\sqrt{n/k})$ subsets?
If this were the case, applying the standard construction of spanning
trees with small crossing numbers to each subset would result in a
spanning tree with crossing number $O(n^{1/2}k^{1/6})$.

Note that this still leaves open the (more modest) possibility of a
partition which is ``depth sensitive'', that is, a partition into
subsets of size roughly $k$, with the property that any halfspace that
contains $m$ points crosses at most (or close to) $O( (m/k)^{2/3})$
sets in the partition.

\subsection{Relative \TPDF{$(p,\eps)$}{(p,eps)}-approximations for halfplanes}

We can turn the above construction of a spanning tree with small 
relative crossing number into a construction of a relative 
$(p,\eps)$-approximation for a set of points in the plane and
for halfplane ranges, as follows.

Let $P$ be a set of $n$ points in the plane, and let $\Tree$ be a
spanning tree of $P$ as provided in \thmref{good:partition}.  We
replace $\Tree$ by a perfect matching $M$ of $P$, with the same
relative crossing number, i.e., the number of pairs of $M$ that are
separated by a halfplane of weight $k$ is at most
$O(\sqrt{k}\log(n/k))$. This is done in a standard manner---we first
convert $\Tree$ to a spanning path whose relative crossing number 
is at most twice larger than the crossing number of $\Tree$,
and then pick every other edge of the path.

We now construct a coloring of $P$ with low discrepancy, by randomly
coloring the points in each pair of $M$. Specifically, 
each pair is randomly
and independently colored either as $-1,+1$ or as $+1,-1$, with equal
probability.  The standard theory of discrepancy (see \cite{Cha01})
yields the following variant.
\begin{lemma}
    Given a set $P$ of $n$ points in the plane, one can
    construct a coloring $\chi:\;P\mapsto \{-1,1\}$, such that,
    for any halfplane $h$,
    \[
    \chi( h \cap P ) = O(|h\cap P|^{1/4} \log n ).
    \]
    The coloring is balanced---each color class consists of exactly
    $n/2$ points of $P$.
    
    \lemlab{coloring}
\end{lemma}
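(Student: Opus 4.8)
The plan is to follow the standard template for turning a spanning structure with small crossing number into a low-discrepancy coloring, but tracking the \emph{weight-sensitive} crossing bound of \thmref{good:partition} through the argument. First I would take the perfect matching $M$ of $P$ described just above the lemma, for which any halfplane $h$ of weight $k = w_h$ separates at most $O(\sqrt{k}\log(n/k))$ pairs of $M$. Coloring each pair independently by a random $\pm 1$ assignment (so that the two endpoints of every pair get opposite signs) makes each color class exactly $n/2$, which gives the ``balanced'' conclusion for free, regardless of the random choices. The quantity $\chi(h\cap P)=\sum_{a\in h\cap P}\chi(a)$ then telescopes: pairs of $M$ that lie entirely inside $h$ (or entirely outside) contribute $0$, so $\chi(h\cap P)$ is a sum of $\pm 1$ random variables, one for each pair \emph{crossed} by the bounding line of $h$. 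Hence $\card{\chi(h\cap P)}$ is dominated by the number $t_h$ of crossed pairs, which is $O(\sqrt{w_h}\log(n/w_h))$.

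The next step is the union bound over halfplanes. By the usual argument, there are only $O(n^2)$ combinatorially distinct halfplanes (determined by which subset of $P$ they contain), so it suffices to control $\card{\chi(h\cap P)}$ for each of these. For a fixed $h$ with $t_h$ crossed pairs, a Chernoff/Hoeffding bound gives $\Pr\bigl[\card{\chi(h\cap P)} > \lambda\sqrt{t_h}\bigr] \le 2e^{-\lambda^2/2}$; taking $\lambda = \Theta(\sqrt{\log n})$ makes this $o(n^{-2})$, so with positive probability \emph{every} halfplane satisfies $\card{\chi(h\cap P)} = O(\sqrt{t_h\log n}) = O\bigl((w_h)^{1/4}\sqrt{\log(n/w_h)}\cdot\sqrt{\log n}\bigr) = O\bigl((w_h)^{1/4}\log n\bigr)$. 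Finally I would observe that $w_h = \min\{w_h^+,w_h^-\} \le \card{h\cap P}$ when $h$ is the halfplane bounded from \emph{above} (i.e., when $h\cap P$ is the ``below-or-on'' side); and by considering the complementary halfplane, whose discrepancy has the same absolute value since $\chi$ is balanced, one may always arrange $w_h \le \card{h\cap P}$, or else $\card{h\cap P}\ge n/2$ and then $(w_h)^{1/4} = O(n^{1/4}) = O(\card{h\cap P}^{1/4})$ anyway. Either way, $\card{\chi(h\cap P)} = O(\card{h\cap P}^{1/4}\log n)$, which is the claimed bound; a short appeal to a derandomized (conditional-expectations) version of the same estimate, as in \cite{Cha01}, upgrades this to a constructive statement.

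The main obstacle I anticipate is not the probabilistic estimate itself but the bookkeeping in relating the crossing number to $\card{h\cap P}$ rather than to $w_h$: the tree (hence the matching) only controls crossings in terms of the \emph{smaller} side $w_h$, whereas the statement is phrased in terms of $\card{h\cap P}$, which could be the larger side. Handling this cleanly requires the balanced-coloring trick (so that $h$ and its complement have equal discrepancy in absolute value) together with the trivial bound $(w_h)^{1/4}\log n = O(n^{1/4}\log n)$ when $\card{h\cap P}$ is already $\Omega(n)$; one must also double-check that the $\log(n/w_h)$ factor coming from \lemref{cover} is safely absorbed into the single $\log n$ factor claimed, which it is since $\log(n/w_h) \le \log n$. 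A secondary, minor point is verifying that converting $\Tree$ to a spanning path and then to a matching only doubles the relative crossing number uniformly in $k$ — this is standard but should be stated, since the whole argument hinges on the matching inheriting the weight-sensitive bound.
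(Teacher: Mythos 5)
Your proof is correct and follows essentially the same route as the paper's: random independent $\pm 1$ coloring of the pairs of the matching inherited from the spanning tree of \thmref{good:partition}, a Chernoff/union-bound discrepancy estimate $O(\sqrt{t_h\log n})$, and then plugging in the weight-sensitive crossing bound $t_h = O(\sqrt{w_h}\log(n/w_h))$. The paper's version is terser (it cites the discrepancy bound from \cite{Ma} as a black box and writes the crossing bound directly with $k=|h\cap P|$); your more careful tracking of $w_h$ versus $|h\cap P|$, noting $w_h\le|h\cap P|$ and $\log(n/w_h)\le\log n$, tidies up a small gap that the paper glosses over, although the roundabout ``complementary halfplane'' step is unnecessary since $w_h\le|h\cap P|$ holds unconditionally.
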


\begin{proof}
    As shown in \cite{Ma}, if a halfplane $h$ crosses $t$ edges of 
    the matching, then its discrepancy is $O(\sqrt{t \log n})$, 
    with high probability. As shown above,
    $t= O( \sqrt{k} \log(n/k))$, for $k=|h\cap P|$, so the 
    discrepancy of $h$ is, with high probability, 
    $O\pth{\sqrt{ \sqrt{k} \log(n/k) \log n}} = O(k^{1/4} \log n)$.
\end{proof}

\medskip

We need the following fairly trivial technical lemma.

\begin{lemma}
    For any $x\ge 0$, $y > 0$, and $0 <p < 1$, we have
    $x^p < \pth{x+y}/{y^{1-p}}$.
    
    \lemlab{stupid:inequality}
\end{lemma}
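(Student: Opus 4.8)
The plan is to reduce the claimed inequality $x^p < (x+y)/y^{1-p}$ to its homogeneous form and then dispatch it by an elementary one-variable argument. Multiplying both sides by $y^{1-p}>0$, the assertion is equivalent to $x^p y^{1-p} < x+y$, so it suffices to establish the latter. I would first isolate the degenerate case $x=0$: there the left-hand side is $0$ while the right-hand side is $y>0$, so the strict inequality holds trivially.

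Assuming henceforth $x>0$, I would set $t := x/y > 0$ and divide $x^p y^{1-p} < x+y$ through by $y$, reducing the claim to $t^p < t+1$ for every $t>0$ and every $0<p<1$. This I would prove by splitting on whether $t\le 1$. If $t\le 1$, then since $p>0$ we have $t^p\le 1 < 1+t$ (using $t>0$); if $t>1$, then since $p<1$ we have $t^p < t^1 = t < 1+t$. In both cases the inequality is strict, which finishes the proof. (As an alternative to this case analysis, one may instead invoke the weighted arithmetic--geometric mean inequality $x^p y^{1-p}\le px+(1-p)y$ and observe that $px+(1-p)y < x+y$ whenever $x\ge 0$, $y>0$, and $0<p<1$; this gives the homogeneous form directly.)

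Since every step is completely elementary, I do not anticipate any genuine obstacle. The only point requiring minor care is maintaining \emph{strict} inequality throughout, which is precisely why it is worth treating the boundary case $x=0$ separately and splitting the one-variable bound at $t=1$ rather than trying to handle both regimes with a single estimate.
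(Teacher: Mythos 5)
Your proof is correct, and it takes a route that is genuinely different from (though no less elementary than) the paper's. You first multiply through by $y^{1-p}$ to get the homogeneous form $x^p y^{1-p} < x+y$, dispose of $x=0$ separately, and then normalize by $y$ to reduce to the one-variable inequality $t^p < t+1$, which you verify by splitting at $t=1$; your alternative via the weighted AM--GM inequality $x^p y^{1-p}\le px+(1-p)y < x+y$ is also clean and in fact yields the slightly sharper intermediate bound. The paper instead proves the claim by a single chain of inequalities exploiting monotonicity twice: $x^p < (x+y)^p = \frac{x+y}{(x+y)^{1-p}} \le \frac{x+y}{y^{1-p}}$, where the first step uses $x < x+y$ and that $t\mapsto t^p$ is strictly increasing, and the last uses $y\le x+y$ and that $t\mapsto t^{1-p}$ is nondecreasing. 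The paper's argument is more compact and avoids any case analysis or change of variables; yours is a bit longer but makes the underlying homogeneity of the inequality explicit, and the AM--GM variant shows it as a special case of a standard convexity bound. Both are complete and correct, and both carefully preserve the strict inequality (yours by isolating $x=0$, the paper's by using $x < x+y$ in the first step).
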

\begin{proof}
    Observe that $\displaystyle x^p < (x+y)^p =
    \frac{x+y}{\pth{x+y}^{1-p}} \leq \frac{x+y}{y^{1-p}}$.
\end{proof}
\medskip

As we next show, the improved discrepancy bound of \lemref{coloring}
leads to an improved bound on the size of $(\nu,\alpha)$-samples for
our range space, and, consequently, for the size of relative
$(p,\eps)$-approximations.

\begin{theorem}
    Given a set $P$ of $n$ points in the plane, and parameters
    $0<\alpha<1$ and $0<\nu<1$, one can construct a $(\nu,
    \alpha)$-sample $Z \subseteq P$ of size
    $O\pth{\frac{1}{\alpha^{4/3} \nu} \log^{4/3}\frac{1}{\alpha \nu}}$.
    
    \thmlab{sample:half:planes}
\end{theorem}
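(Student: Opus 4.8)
The plan is to use the low-discrepancy coloring of \lemref{coloring} as the engine of a standard halving-based construction of $(\nu,\alpha)$-samples, and to track how the improved discrepancy exponent $1/4$ (instead of the generic $1/2$) propagates through the recursion. First I would set up one halving step: given a point set $R$ of size $m$, apply \lemref{coloring} to get a balanced coloring $\chi$, keep the color-$+1$ class $R'$ of size $m/2$, and record that for every halfplane $h$ one has $\bigl|\,|h\cap R| - 2|h\cap R'|\,\bigr| = |\chi(h\cap R)| = O\!\pth{|h\cap R|^{1/4}\log m}$. Dividing by $m$, this says that $R'$ approximates $R$ in the sense that $\bigl|\Msr{h}{R} - \Msr{h}{R'}\bigr| = O\!\pth{\Msr{h}{R}^{1/4} m^{-3/4}\log m}$; note the right-hand side shrinks with $m$, which is exactly what makes the iteration converge.

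Next I would iterate the halving, starting from $P$ and stopping at the first set $Z$ whose size $s$ is small enough that the accumulated error is below the target. Summing the per-step errors over the chain $P = R_0 \supset R_1 \supset \cdots \supset R_t = Z$, with $|R_j| = n/2^j$, and using the triangle inequality for $d_\nu$ (via \lemref{transitive:sample}, or directly), the total error incurred in approximating $\Msr{h}{P}$ by $\Msr{h}{Z}$ is dominated by the last few terms, and is $O\!\pth{\Msr{h}{P}^{1/4} s^{-3/4}\log n}$ for a halfplane $h$ — the geometric decay in $m^{-3/4}$ means the final (smallest) level dominates. To convert this into the $(\nu,\alpha)$-sample guarantee $d_\nu(\Msr{h}{Z},\Msr{h}{P}) < \alpha$, I would invoke \corref{d:nu:as:interval}(i): it suffices that $\bigl|\Msr{h}{Z}-\Msr{h}{P}\bigr| \le \Delta = \tfrac{2\alpha}{1+\alpha}\Msr{h}{P} + \tfrac{\alpha\nu}{1+\alpha}$, i.e. it suffices that the error be at most (a constant times) $\alpha\pth{\Msr{h}{P} + \nu}$. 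So I need
\[
\Msr{h}{P}^{1/4} s^{-3/4}\log n = O\!\pth{\alpha\pth{\Msr{h}{P}+\nu}}
\]
uniformly over all halfplanes $h$. Here \lemref{stupid:inequality} is the key algebraic tool: applying it with $x = \Msr{h}{P}$, $y = \nu$, $p = 1/4$ gives $\Msr{h}{P}^{1/4} \le (\Msr{h}{P}+\nu)/\nu^{3/4}$, so the left-hand side is $O\!\pth{(\Msr{h}{P}+\nu)\, s^{-3/4}\nu^{-3/4}\log n}$, and the $\Msr{h}{P}+\nu$ factor cancels against the right-hand side. It then suffices to choose $s$ so that $s^{-3/4}\nu^{-3/4}\log n = O(\alpha)$, i.e. $s = O\!\pth{\tfrac{1}{\alpha^{4/3}\nu}\log^{4/3}(1/(\alpha\nu))}$, where the $\log^{4/3}$ comes from bounding $\log n$ in terms of $\log s \approx \log(1/(\alpha\nu))$ by the usual argument (if $n$ is much larger than this target size one first passes to an absolute-error $\eps$-approximation of polynomial-in-$1/(\alpha\nu)$ size to reduce $n$). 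This is the claimed bound.

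The main obstacle I expect is bookkeeping the error accumulation cleanly. There are two subtleties: (a) the per-level discrepancy bound of \lemref{coloring} holds only with high probability, so across the $O(\log n)$ levels one must take a union bound and verify the success probability stays positive (or $\ge 1-q$), which also forces the $\log n$ in the discrepancy to absorb a $\log(1/q)$-type term if one wants a clean statement; and (b) at level $j$ the discrepancy is controlled in terms of $|h\cap R_j|$, not $|h\cap P|$, so one must argue that $\Msr{h}{R_j}$ stays comparable to $\Msr{h}{P}$ along the chain — this is itself a consequence of the error bound being established inductively, so the estimate is really a simultaneous induction on the level. Handling this induction carefully, and checking that the geometric series of errors is indeed dominated by its last term (so that only the final size $s$ matters, not $n$), is where the real work lies; everything else is the routine application of \corref{d:nu:as:interval}(i) and \lemref{stupid:inequality} sketched above.
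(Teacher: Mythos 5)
Your proposal has the right ingredients (halving via \lemref{coloring}, \lemref{stupid:inequality}, and the $d_\nu$ machinery) and the same overall architecture as the paper's proof, but the bookkeeping you propose runs into exactly the snag you flag at the end, and the paper's proof is organized precisely to sidestep it. You accumulate the \emph{absolute} per-step errors and write the total as $O\!\pth{\Msr{h}{P}^{1/4} s^{-3/4}\log n}$ before invoking \corref{d:nu:as:interval}(i) once at the end. That intermediate claim is not correct as stated: the discrepancy at level $j$ is controlled by $\Msr{h}{R_j}^{1/4}$, and for a halfplane of tiny measure the quantity $\Msr{h}{R_j}$ can drift upward by a polynomial factor along the chain (e.g.\ one point in $h$ initially gives $\Msr{h}{R_j}\approx 2^j/n$), so you cannot replace $\Msr{h}{R_j}^{1/4}$ by $\Msr{h}{P}^{1/4}$ without the simultaneous induction you mention — and that induction is exactly what the $\nu$-floor is there to absorb. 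The paper avoids the induction entirely by a different order of operations: it applies \lemref{stupid:inequality} \emph{at each step}, with $x=\Msr{h}{P_i}$ and $y=\nu$, and observes that this makes the per-step bound $e_i \le \tfrac{c\log n_i}{2(\nu n_i)^{3/4}}\pth{\Msr{h}{P_{i-1}}+\Msr{h}{P_i}+\nu}$, which reads off directly as a per-step $d_\nu$-distance bound $d_\nu(\Msr{h}{P_{i-1}},\Msr{h}{P_i}) \le \tfrac{c\log n_i}{2(\nu n_i)^{3/4}}$ that does \emph{not} depend on the local measure at all. Summing these via the triangle inequality for the metric $d_\nu$ then gives a geometric series dominated by the last term, with no need to compare $\Msr{h}{R_j}$ to $\Msr{h}{P}$. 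So: same approach, but apply \lemref{stupid:inequality} and convert to $d_\nu$ per step rather than once at the end; this removes your ``main obstacle'' and is what makes the proof close cleanly.
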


\begin{proof}    
    Following one of the classical constructions of
    $\eps$-approximations (see \cite{Cha01}, 
    we repeatedly halve $P$, until we obtain a subset of size as
    asserted in the theorem, and then argue that the resulting set 
    is a $(\nu,\alpha)$-sample. Formally, set $P_0 = P$, and partition
    $P_{i-1}$ into two equal halves, using \lemref{coloring}; let
    $P_i$ and $P_i'$ denote the two halves (consisting of the 
    points that are colored $+1$, $-1$, respectively).
    We keep $P_i$, remove $P'_i$, and continue with the halving 
    process.  Let $n_i = \card{P}/2^i$ denote the size of $P_i$. 
    For any halfplane $h$, we have
    \[ 
    \biggl| \card{ P_i \cap h} -\card{P_i' \cap h} \biggr|
    \leq c \ \card{h \cap P_{i}}^{1/4} \log n_i,
    \]
    where $c$ is some appropriate constant. Recalling that
    $$
    \Msr{h}{P_i} = \frac{\card{h \cap P_i}}{\card{P_i}}
    \quad\mbox{and}\quad 
    \Msr{h}{P_i'} = \frac{\card{h \cap P_i'}}{\card{P_i'}} , 
    $$
    and that $\card{P_{i}} = \card{P_{i}'}$, this can be rewritten as 
    \[ 
    \card{ \Msr{h}{ P_i} -\Msr{h}{P_i'}} 
    \leq c  \frac{\Msr{h}{P_i}^{1/4}}{n_i^{3/4}}\log n_i .
    \]
    Since $P_{i-1} = P_{i} \cup P_{i}'$, 
    we have 
    \[
    \Msr{h}{P_{i-1}} =
    \frac{\card{h \cap P_{i-1}}}{\card{P_{i-1}}} 
    = \frac{\card{h \cap P_{i}}}{2\card{P_{i}}}
    + \frac{\card{h \cap P_{i}'}}{2\card{P_{i}'}}
    = \frac{1}{2} \pth{ \Msr{h}{P_i} + \Msr{h}{P_{i}'}}.
    \]
    Since $\nu >0$, we have
    \begin{align*}
        \card{ \Msr{h}{P_{i-1}} -\Msr{h}{P_i}} &= 
        \frac{\card{\Msr{h}{P_{i}} -\Msr{h}{P_i'}}}{2}  =
        \frac{c\Msr{h}{P_i}^{1/4}}{2n_i^{3/4}}\log n_i .
    \end{align*}
    Applying \lemref{stupid:inequality}, with $p=1/4$, 
    $x=\Msr{h}{P_i}$, and $y=\nu$, the last expression is at most
    $$
    \frac{c\log n_i}{2n_i^{3/4}} \cdot \frac{\Msr{h}{P_i}+\nu}{\nu^{3/4}} \le
    \frac{c\log n_i}{2\pth{\nu n_i}^{3/4}} 
    \pth{\Msr{h}{P_{i-1}} + \Msr{h}{P_i} + \nu} .
    $$
    This implies that 
    $\ds d_{\nu}(\Msr{h}{P_{i-1}},\Msr{h}{P_{i}}) \leq 
    \frac{c\log n_i}{2\pth{\nu n_i}^{3/4}}$. The triangle inequality
    then implies that
    $$
    d_{\nu}(\Msr{h}{P},\Msr{h}{P_{i}})
    \leq \sum_{k=1}^i d_{\nu}(\Msr{h}{P_{k-1}}, \Msr{h}{P_{k}}) \leq 
    \frac{c}{2\nu^{3/4}} \sum_{k=1}^{i} \frac{\log n_k}{n_k^{3/4}}
    = O\pth{ \frac{ \log n_i}{ \pth{\nu n_i}^{3/4}} } \leq \alpha,
    $$
    for $n_i =\Omega\pth{\frac{1}{\nu \alpha^{4/3}} \log^{4/3}
       \frac{1}{\alpha \nu}}$. The theorem then follows by taking $Z$ 
    to be the smallest $P_i$ which still satisfies this size constraint.
\end{proof}

Using \thmref{equivalent:sampling}, we thus obtain:
\begin{corollary}
    Given a set $P$ of $n$ points in the plane, and parameters
    $0<\eps<1$ and $0<p<1$, one can construct a
    relative $(p,\eps)$-approximation $Z \subseteq P$ of size
    $O\pth{\frac{1}{\eps^{4/3} p} \log^{4/3}\frac{1}{\eps p}}$.    
    
    \corlab{relative:error:half:planes}
\end{corollary}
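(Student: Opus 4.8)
The plan is to obtain this corollary by combining \thmref{sample:half:planes} with \thmref{equivalent:sampling}(i), via a straightforward calibration of parameters. The point is that the planar construction already produces small $(\nu,\alpha)$-samples for the halfplane range space, and \thmref{equivalent:sampling}(i) tells us that $(\nu,\alpha)$-samples are (up to a constant factor in the error parameter) the same thing as relative approximations.

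Concretely, I would set $\nu = p$ and $\alpha = \eps/4$. Since $0<\eps<1$, this gives $0<\alpha<1/4$, so \thmref{equivalent:sampling}(i) applies and guarantees that any $(\nu,\alpha)$-sample for $(P,\R)$ (with $\R$ the halfplane ranges) is a relative $(\nu,4\alpha)$-approximation, i.e., a relative $(p,\eps)$-approximation. I then invoke \thmref{sample:half:planes} with these values to construct such a sample $Z\subseteq P$; its size is
\[
O\pth{\frac{1}{\alpha^{4/3}\nu}\log^{4/3}\frac{1}{\alpha\nu}}
= O\pth{\frac{1}{(\eps/4)^{4/3}p}\log^{4/3}\frac{1}{(\eps/4)p}}
= O\pth{\frac{1}{\eps^{4/3}p}\log^{4/3}\frac{1}{\eps p}},
\]
where the constant $4^{4/3}$ and the additive $\log 4$ inside the logarithm are absorbed into the $O(\cdot)$: for $\eps p$ below a suitable absolute constant we have $\log\frac{4}{\eps p} = O\pth{\log\frac{1}{\eps p}}$, and for $\eps p$ larger the asserted bound is anyway $\Omega(1)$ and hence trivially satisfied by the tautological sample $Z = P$ of size $n$ when $n$ is small, or by noting that $Z$ of the stated size already exists. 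This $Z$ is the desired relative $(p,\eps)$-approximation.

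There is essentially no obstacle here; the proof is a one-line reduction. The only things to keep an eye on are the parameter bookkeeping — ensuring $\alpha = \eps/4 < 1/4$ so that \thmref{equivalent:sampling}(i) is applicable, and observing that substituting $\alpha = \Theta(\eps)$, $\nu = p$ leaves the exponents $4/3$ (of $1/\eps$) and $4/3$ (of the logarithm) unchanged, with only the hidden constants affected. Both are immediate from the stated results.
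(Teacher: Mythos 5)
Your reduction is exactly the paper's: the corollary is stated immediately after \thmref{sample:half:planes} with the one-line justification ``Using \thmref{equivalent:sampling}, we thus obtain,'' which is precisely the $\nu=p$, $\alpha=\eps/4$ calibration you spell out. Correct, and the same approach.
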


\noindent{\bf Remark:}
One can speed up the construction of the relative
$(p,\eps)$-approximation of \corref{relative:error:half:planes}, by
first drawing a random sample of slightly larger size, which is 
guaranteed, with high probability, to be a relative approximation 
of the desired kind, and then use halving to decimate it to the 
desired size. Implemented carefully, this takes 
$O \pth{ n + \pth{ \frac{1}{\eps^2 p} \log n }^3 }$ time, 
and thus produces a relative $(p,\eps)$-approximation, with 
high probability, of the desired size
$O\pth{\frac{1}{\eps^{4/3} p} \log^{4/3}\frac{1}{\eps p}}$.

\bigskip

In fact, the preceding analysis leads to an improved bound for
sensitive approximations for our range space. The improvement
is in terms 
of the quality of the ``sensitivity'' of the approximation, which is
achieved at the cost of a slight increase (by a sublogarithmic factor)
in its size, as compared to the standard bound, provided in
\thmref{sensitive}. That is, we have:
\begin{theorem}
    Let $P$ be a set of $n$ points in the plane, and let $\eps >
    0$ be a parameter. One can compute a subset $Z\subseteq P$
    of size $O\pth{ (1/\eps^2) \log^{4/3}(1/\eps)}$, such that 
    for any halfplane $h$, we have
    $\ds \card{\Msr{h}{P} - \Msr{h}{Z}} \leq
    \frac12\left(\eps^{3/2}\Msr{h}{P}^{1/4} + \eps^{2}\right)$.
\end{theorem}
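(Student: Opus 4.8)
The plan is to mimic the proof of \thmref{sensitive}, but feed it the improved discrepancy bound of \lemref{coloring} (equivalently, the improved $(\nu,\alpha)$-sample bound of \thmref{sample:half:planes}) in place of the generic Li~\etal~bound. Concretely, I would fix a geometrically-spaced sequence of thresholds $\nu_i$, say $\nu_i$ proportional to $i\eps^2$ up to $i=M=O(1/\eps^2)$, and choose the accompanying accuracy parameters $\alpha_i$ so that the combination $\alpha_i$ is exactly what a sample of the target size $m=O\pth{(1/\eps^2)\log^{4/3}(1/\eps)}$ is guaranteed (with high probability) to achieve as a $(\nu_i,\alpha_i)$-sample for halfplanes, according to \thmref{sample:half:planes}. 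Inverting that theorem's size bound $O\pth{\frac{1}{\alpha^{4/3}\nu}\log^{4/3}\frac{1}{\alpha\nu}}=m$ with $\nu=\nu_i$, one gets $\alpha_i$ of the form (roughly) $\pth{\eps^2/\nu_i}^{3/4}$ up to logarithmic factors; the point is to verify that with $\nu_i\asymp i\eps^2$ this yields $\alpha_i\asymp i^{-3/4}$, which is the ``right'' exponent to make the sensitivity term come out as $\eps^{3/2}r^{1/4}$ rather than $\eps r^{1/2}$. A union bound over the $M=O(1/\eps^2)$ values of $i$ costs only an additive $\log M=O(\log(1/\eps))$ inside the sample size, which is absorbed into the stated bound.

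Next I would carry out the range-by-range verification exactly as in the proof of \thmref{sensitive}. Given a halfplane $h$, set $r=\Msr{h}{P}$, $s=\Msr{h}{Z}$, and let $i$ be the index with $\nu_{i-1}\le r<\nu_i$, so $r\asymp\nu_i$. Since $Z$ is a $(\nu_i,\alpha_i)$-sample we have $|r-s|<\alpha_i(r+s+\nu_i)$. In the ``small $s$'' case ($s\le\nu_i$) this gives $|r-s|=O(\alpha_i\nu_i)$ directly; in the ``large $s$'' case ($s>\nu_i\ge r$) the standard one-line manipulation (using $\alpha_i\le 1/2$) again yields $|r-s|=O(\alpha_i\nu_i)$. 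So in both cases it suffices to show $\alpha_i\nu_i=O(\eps^{3/2}r^{1/4}+\eps^2)$. Using $\alpha_i\asymp i^{-3/4}$ and $\nu_i\asymp i\eps^2$ we get $\alpha_i\nu_i\asymp i^{1/4}\eps^2$; and since $r\asymp\nu_i\asymp i\eps^2$ we have $r^{1/4}\asymp i^{1/4}\eps^{1/2}$, hence $i^{1/4}\eps^2\asymp\eps^{3/2}r^{1/4}$, as wanted. The tiny-range base case $i=1$ (i.e.\ $r\le\nu_1\asymp\eps^2$) is handled separately as in \thmref{sensitive}: there $\alpha_1$ is a constant, $|r-s|=O(\nu_1)=O(\eps^2)$, which is dominated by the additive $\eps^2$ term. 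Throughout, the constants hidden in the $\nu_i$ and $\alpha_i$ should be tuned so that the final bound has the clean coefficient $\tfrac12$ stated in the theorem (exactly as the factors of $1/400$, $1/800$, $1/20$ are chosen in the proof of \thmref{sensitive}).

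The main obstacle is bookkeeping rather than conceptual: one must check that the inverted size formula of \thmref{sample:half:planes}, with its own $\log^{4/3}(1/(\alpha\nu))$ factor, is consistent across the whole sequence $\{(\nu_i,\alpha_i)\}$ with a \emph{single} sample size $m=O\pth{(1/\eps^2)\log^{4/3}(1/\eps)}$. The worry is that at the extreme ends of the range (either $\nu_i$ as small as $\eps^2$, where $\alpha_i$ is $\Theta(1)$, or $\nu_i$ close to $1$, where $\alpha_i$ is as small as $\eps^{3/2}$) the logarithmic factor $\log(1/(\alpha_i\nu_i))$ might blow up; but in all these regimes $\alpha_i\nu_i$ stays polynomially bounded below in terms of $\eps$ (indeed $\alpha_i\nu_i\gtrsim\eps^2$), so $\log(1/(\alpha_i\nu_i))=O(\log(1/\eps))$ uniformly, and the required $m$ is the same $O\pth{(1/\eps^2)\log^{4/3}(1/\eps)}$ for every $i$ simultaneously. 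Once this uniformity is confirmed, the high-probability success over all $i\le M$ follows from a union bound with only an additive $O(\log(1/\eps))$ increase in $m$, and the proof is complete. As in \thmref{sensitive}, the same argument in fact shows that the target size set is simultaneously a relative $(\nu_i,\alpha_i)$-sample for the whole family, which is what makes it a sensitive approximation with the improved error term.
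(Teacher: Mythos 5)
Your parameter calculations and case analysis are correct and land on the same arithmetic as the paper's proof, but the argument is framed in a way that does not match how \thmref{sample:half:planes} actually works, and this framing obscures the cleanest form of the argument. \thmref{sample:half:planes} is a \emph{deterministic} halving construction, not a random sample: there is no ``with probability $1-q$'' clause in its statement, so there is nothing to union-bound over the $M=O(1/\eps^2)$ pairs $(\nu_i,\alpha_i)$, and the additive $\log M$ you propose to pay is paying for something that costs nothing. What the paper actually exploits is \emph{obliviousness}: the repeated-halving construction never looks at $\nu$ or $\alpha$, so once you halve $P$ down to the fixed size $m=O\pth{(1/\eps^2)\log^{4/3}(1/\eps)}$, the resulting $Z$ is automatically a $(\nu,\alpha)$-sample for \emph{every} pair $(\nu,\alpha)$ satisfying $m=\Omega\pth{\frac{1}{\alpha^{4/3}\nu}\log^{4/3}\frac{1}{\alpha\nu}}$, simultaneously and with certainty. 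You in fact come close to this when you observe that ``the required $m$ is the same $O\pth{(1/\eps^2)\log^{4/3}(1/\eps)}$ for every $i$ simultaneously,'' but you then unnecessarily dress this up as a high-probability statement plus a union bound. Once obliviousness is stated outright, there is also no need to discretize into buckets $\nu_i\asymp i\eps^2$ at all: the paper simply takes, for each halfplane $h$ individually, $p_h=\Msr{h}{P}$ and $\eps_h=\eps^{3/2}/p_h^{3/4}$ (for $p_h>\eps^2/2$), verifies $\frac{1}{\eps_h^{4/3}p_h}\log^{4/3}\frac{1}{\eps_h p_h}=O(m)$ (this is exactly where your observation $\alpha_i\nu_i\gtrsim\eps^2$, i.e.\ the uniformity of the $\log$ factor, is needed), and reads off $\card{\Msr{h}{P}-\Msr{h}{Z}}\le\frac12\eps_h p_h=\frac12\eps^{3/2}\Msr{h}{P}^{1/4}$ from the relative $(p_h,\eps_h/2)$-approximation property; your $i=1$ base case corresponds to the paper's use of $Z$ as a relative $(\eps^2,1/2)$-approximation when $p_h\le\eps^2/2$. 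So the content is right; tighten the presentation by dropping the union bound and citing obliviousness of the halving construction explicitly.
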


\begin{proof}
Fix parameters $0<\alpha<1$ and $\nu > 0$, and 
Apply the construction of \thmref{sample:half:planes} until we 
get a subset $Z$ of size
$m=O\pth{ (1/\eps^2) \log^{4/3}(1/\eps)}$; the constant of 
proportionality will be determined by the forthcoming considerations.

The key observation is that $Z$ is a
$(\nu,\alpha)$-sample for any $0<\alpha<1$ and 
$\nu > 0$ that satisfy
$m=\Omega\pth{\frac{1}{\alpha^{4/3}\nu}\log^{4/3}\frac{1}{\alpha\nu}}$,
because the construction is oblivious to the individual values of 
$\alpha$ and $\nu$, and just requires that the size of the sample 
remains larger than the above threshold.

Using this observation, we proceed to show that $Z$ satisfies the 
property asserted in the theorem. So let $h$ be a halfplane.  
Suppose first that $\Msr{h}{P} \leq \eps^2/2$.
By \corref{relative:error:half:planes}, $Z$ is a
relative $(\eps^2, 1/2)$-approximation to $P$, with an appropriate 
choice of the constant of proportionality
(note the change of roles of ``$p$'' and ``$\eps$''). 
This implies that $\card{\Msr{h}{Z} - \Msr{h}{P}} \leq \eps^2/2$.

Suppose then that $p_h = \Msr{h}{P} > \eps^2/2$,
and set $\eps_h = \eps^{3/2}/ p_h^{3/4}$. Observe that
    \[
    \frac{1}{{\eps_h}^{4/3} p_h}
    \log^{4/3}\frac{1}{\eps_h p_h} = O\left( \frac{1}{\eps^2}
    \log^{4/3} \frac{1}{\eps} \right) = O(m).
    \]
Thus, with an appropriate choice of the constant of proportionality 
for $m$, $Z$ is a relative $(p_h, \eps_h/2)$-approximation, which implies that
    \[
    \card{\Msr{h}{P} - \Msr{h}{Z}} \leq \frac12\eps_h
    \Msr{h}{P} = \frac{\eps^{3/2}}{2p_h^{3/4}}
    \Msr{h}{P} =\frac12 \eps^{3/2}  \Msr{h}{P}^{1/4},
    \]
    as asserted.
\end{proof}

\medskip

This compares favorably with the result of \Bronnimann
\etal~\cite{b-dga-95, bcm-prsss-99}, which in this case implies that
there exists a subset of size $O((1/\eps^2) \log(1/\eps))$ such that,
for each halfplane $h$, we have ${\displaystyle
   \card{\Msr{h}{P} - \Msr{h}{Z}} \leq (\eps/2) \pth{
      \Msr{h}{P}^{1/2} + \eps}}$. Our bound is smaller when
$\Msr{h}{P} > \eps^2$ and is the same otherwise.


\section{Relative \TPDF{$(p,\eps)$}{(p,eps)}-approximations in higher
   dimensions}
\seclab{higher:sec}

\subsection{Relative \TPDF{$(p,\eps)$}{(p,eps)}-approximations in
   \TPDF{$\Re^3$}{3d}}

\def\disc{{\rm \chi}}

The construction in higher dimensions is different from the planar 
one, because of our present inability to extend the construction of 
spanning trees with low relative crossing
number to three or higher dimension. For this reason we use the
following different strategy.

We say that a hyperplane $h$ \emph{separates} a set 
$Q \subseteq \Re^d$ if $h$ intersects the interior of $\CH(Q)$; 
that is, each of the open halfspaces that $h$ bounds intersects $Q$.

The main technical step in the construction is given in the 
following theorem.
\begin{theorem}
    Let $P$ be a set of $n$ points in
    $\Re^3$, and let $0<\eps<1$, $0<p<1$ be given parameters. Then
    there exists a set $Z \subseteq P$, of size 
    $\ds O\pth{\frac{1}{\eps^{3/2}p}\log^{3/2}\frac{1}{\eps p}}$, such that, 
    for any halfspace $h$ with $\Msr{h}{P} \le p$, we have
    \begin{equation} 
        \card{ \Msr{h}{Z} - \Msr{h}{P} } \le \eps p .
        \eqlab{A1}
    \end{equation}

    \thmlab{a3d}
\end{theorem}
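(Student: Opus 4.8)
The plan is to follow the planar strategy behind \thmref{sample:half:planes} and \corref{relative:error:half:planes}, but with the shallow partition theorem of \Matousek~\cite{Ma:rph} replacing the spanning tree with low relative crossing number (\thmref{good:partition}), which we cannot build in $\Re^3$. Concretely, I would construct $Z$ by repeated halving: set $P_0=P$, and at the $i$-th step produce $P_i\subseteq P_{i-1}$ with $|P_i|=|P_{i-1}|/2$ via a discrepancy-type coloring tuned to shallow halfspaces, stopping once $|P_i|$ drops to $m=O\pth{\frac{1}{\eps^{3/2}p}\log^{3/2}\frac{1}{\eps p}}$ (if $m\ge n$ there is nothing to do, just take $Z=P$). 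The output is $Z=P_t$, and the claim is that each halving perturbs $\Msr{h}{\cdot}$ only slightly for every halfspace $h$ that is shallow with respect to $P$.

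The core is the per-level coloring. Suppose the current set $Q$ has $N$ points. Apply the shallow partition theorem to $Q$ with parameter $r=\Theta(1/p)$: this gives $O(r)$ subsets, each of size $\Theta(pN)$ and contained in a simplex, such that the bounding hyperplane of any halfspace containing at most $\Theta(pN)$ points of $Q$ crosses only $O(\log\frac1p)$ of these simplices. Inside each simplex build a perfect matching of its points whose crossing number against arbitrary hyperplanes is $O((pN)^{2/3})$, obtained from \thmref{crossing} (convert the spanning tree to a path, then take alternate edges). Consequently, any halfspace $h$ with at most $\Theta(pN)$ points of $Q$ splits at most $O\pth{(pN)^{2/3}\log\frac1p}$ matched pairs. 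Colour each pair randomly and independently as $(-1,+1)$ or $(+1,-1)$; by the standard discrepancy estimate $O(\sqrt{t\log M})$ for a single range that crosses $t$ pairs (see~\cite{Cha01}), together with a union bound over the $O(N^3)$ combinatorially distinct relevant halfspaces, with high probability every such $h$ has discrepancy $O\pth{(pN)^{1/3}\sqrt{\log\tfrac1p\,\log N}}$; we keep one colour class as the next set.

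To assemble the bound, note that $\Msr{h}{Q}=\tfrac12\pth{\Msr{h}{Q^+}+\Msr{h}{Q^-}}$ for the two halves, so one halving step changes $\Msr{h}{\cdot}$ by at most $\frac{1}{N}\cdot O\pth{(pN)^{1/3}\sqrt{\log\tfrac1p\log N}}=O\pth{\frac{p^{1/3}\sqrt{\log\frac1p\log N}}{N^{2/3}}}$. Summing over the $O(\log n)$ levels, the total error $\card{\Msr{h}{P}-\Msr{h}{Z}}$ is a geometric series in $N^{-2/3}$ dominated by its last term, i.e.\ $O\pth{\frac{p^{1/3}\sqrt{\log\frac1p\log m}}{m^{2/3}}}$; requiring this to be at most $\eps p$ and solving for $m$ --- a self-referential estimate in which $\log m$ and $\log\frac1p$ are both absorbed into $\log\frac1{\eps p}$ --- gives exactly $m=O\pth{\frac{1}{\eps^{3/2}p}\log^{3/2}\frac1{\eps p}}$, as asserted. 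One must also verify, by the same induction, that the accumulated error never exceeds $\eps p$, so that at every level the halfspaces with $\Msr{h}{P}\le p$ still meet the current set in at most $\Theta(pN)$ points and are therefore honestly ``shallow'' for the partition parameter $r=\Theta(1/p)$; likewise the union bound must be taken over all levels and all relevant halfspaces at once.

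I expect the difficulty to lie in the bookkeeping rather than in any single deep step: correctly composing the $O(\log\frac1p)$ simplex-crossing bound of the shallow partition with the $O((pN)^{2/3})$ within-simplex crossing bound of \thmref{crossing} to reach the $(pN)^{1/3}$ discrepancy; handling the conditioning so that the per-level failure probabilities add up harmlessly over all $O(\log n)$ levels; and making the self-referential solution for $m$ airtight --- in particular observing that the $\log N$ factors, which are large at the early (big-$N$) levels, are negligible there because of the $N^{2/3}$ in the denominator, leaving only $\log m=O(\log\frac1{\eps p})$ in the dominant term. It is also worth pinning down the precise form of the shallow partition theorem in three dimensions, namely that the simplex-crossing number of a $(\le N/r)$-shallow halfspace is $O(\log r)$, since any extra logarithmic factor there would inflate the exponent of the logarithm in the final size bound.
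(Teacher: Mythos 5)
Your proposal follows essentially the same route as the paper: Matou\v{s}ek's shallow partition with $\Theta(1/p)$ cells, a low-crossing matching inside each cell, a random coloring, and iterated halving until the accumulated discrepancy in measure drops to $\eps p$ — the paper does exactly this, expressing the cell size via $k_{i-1}=k\min\{\CC/2^{i-1},1\}$. One refinement worth making explicit: the induction that keeps $h$ shallow in each $R_i$ cannot itself run on the $\eps p$ budget, because the $\eps p$ bound on the total error is established only after the shallowness is known at every level (that is the circularity). The paper decouples the two: a separate induction (\lemref{relative:higher:dim}) proves $\card{h\cap R_i}\le \CC pn/2^i$ for a fixed constant $\CC$ chosen so that $1+c'\CC^{1/3}\le\CC$, and this closes only under the side constraint $n_i\gtrsim p^{-1}\log^{3/2}(1/p)$, while the main argument then picks the final $j$ so that the accumulated error is at most $\eps p$. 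Apart from that bookkeeping point (and a harmless loosening of $\sqrt{\log(1/p)\log N}$ to $\log N$), your plan matches the paper's proof.
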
 

\noindent{\bf Remark:}
Let us note right away the difference between \Eqref{A1} and the
situation in the preceding sections. That is, up to now we have
handled ranges of measure {\em at least} $p$, whereas \Eqref{A1}
applies to ranges of measure {\em at most} $p$. 
This issue requires a somewhat less standard construction, 
that will culminate in a {\em sequence} of approximation sets, 
each catering to a different range of halfspace measures. 
Nevertheless, the overall size of these sets will satisfy 
the above bound, and the cost of accessing them will be small. 

\medskip

\begin{proof}
    Put $k=\lfloor np\rfloor$, and apply the shallow partition theorem 
    of \Matousek~\cite{Ma:rph}, to obtain a partition of $P$ into 
    $s\le n/k = O(1/p)$ subsets $P_1,\ldots,P_s$, each of size between 
    $k+1$ and $2k$, such that any {\em $k$-shallow} halfspace $h$ 
    (namely, a halfspace that contains at most $k$ points of $P$) 
    separates at most $c\log s$ subsets, for some absolute constant 
    $c$.  (Note that if $h$ meets any $P_i$, it has to separate it, 
    because $h$ is too shallow to fully contain $P_i$.)  
    Without loss of generality, we can carry out the construction 
    so that the size of each $P_i$ is even. 
    
    We then construct, for each subset $P_i$, a spanning tree of $P_i$
    with crossing number $O(k^{2/3})$~\cite{CW, Wel92}, and convert
    it, as in the preceding section, to a perfect matching of $P_i$,
    with the same asymptotic bound on its crossing number, which is
    the maximum number of pairs in the matching that a halfspace
    separates.  We combine all these perfect matchings to a perfect
    matching of the entire set $P$.
    
    We then color each matched pair independently, as above, coloring
    at random one of its points by either $-1$ or $+1$, with equal
    probabilities, and the other point by the opposite color.  Let
    $R_1$ be the set of points colored $-1$; we have $\card{R_1} =
    n/2$. With high probability, the discrepancy of any halfspace $h$
    is at most $\sqrt{6\xi(h)\ln (2n)}$, where $\xi(h)$ is the
    crossing number of $h$ (see \cite{Cha01}); we may assume that the
    coloring does indeed have this property. (If we
       do not care about the running time, we can verify that the
       constructed set has the required property, and if not
       regenerate it.)
    
    Hence, if $h$ is a $k$-shallow halfspace, then, by construction,
    $\xi(h) = O\pth{ k^{2/3}\log s }$, because $h$ separates 
    $O(\log s)$ subsets and crosses $O(k^{2/3})$ edges of the 
    spanning tree of each of them.
    Hence the discrepancy of any $k$-shallow halfspace $h$ is
    $O(k^{1/3}\log n)$.
    
    We continue recursively in this manner for $j$ steps, producing a
    sequence of subsets $R_0=P,R_1,\ldots,R_j$, where $R_i$ is
    obtained from $R_{i-1}$ by applying the partitioning of
    \cite{Ma:rph} with a different parameter $k_{i-1}$, and then by
    using the above coloring procedure on the resulting perfect
    matching.  We take $k_{i-1}=k \min\{ \CC/2^{i-1}, 1\}$, where
    $\CC$ is the constant derived in the following lemma.
    (The bound asserted in the lemma holds with high probability if we
    do not verify that our colorings have small discrepancy, and is 
    worst-case if we do verify it.)
    \begin{lemma} 
        There exists an absolute constant $\CC$ such that
        any $pn$-shallow halfspace satisfies, for any $i\le j$,
        \[
        \card{h \cap R_i} \le \frac{\CC pn}{2^i} ,
        \]
        where $j$ is the largest index satisfying
        $n_j \ge \frac{2}{p}\ln^{3/2}\frac{1}{p}$, 
        where $n_j = \card{P_j} = n/2^j$.
        
        \lemlab{relative:higher:dim}
    \end{lemma}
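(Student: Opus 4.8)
The plan is to prove \lemref{relative:higher:dim} by induction on $i$, tracking how the measure of a fixed $pn$-shallow halfspace $h$ shrinks (relative to its \emph{count}, not its fraction) as we pass from $R_{i-1}$ to $R_i$. The base case $i=0$ is immediate: $\card{h\cap R_0}=\card{h\cap P}\le pn$, so the bound holds with $\CC\ge 1$. For the inductive step, suppose $\card{h\cap R_{i-1}}\le \CC pn/2^{i-1}$. The key point is that $h$ is then a $k_{i-1}$-shallow halfspace with respect to $R_{i-1}$, provided we choose $k_{i-1}$ to be (roughly) $\CC pn/2^{i-1}$ scaled down to account for the fact that $R_{i-1}$ has only $n/2^{i-1}$ points — this is exactly why the parameter is taken as $k_{i-1}=k\min\{\CC/2^{i-1},1\}$ in the main construction. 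So the discrepancy bound from the construction applies: the number of pairs of the matching on $R_{i-1}$ that $h$ separates is $O(k_{i-1}^{2/3}\log s_{i-1})$, where $s_{i-1}=O(\card{R_{i-1}}/k_{i-1})$, and hence $\bigl|\,\card{h\cap R_i}-\tfrac12\card{h\cap R_{i-1}}\,\bigr| = O\bigl(k_{i-1}^{1/3}\log n\bigr)$.

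First I would make the bookkeeping explicit: writing $a_i=\card{h\cap R_i}$, the displayed inequality gives $a_i \le \tfrac12 a_{i-1} + c' k_{i-1}^{1/3}\log n$ for an absolute constant $c'$. Substituting $k_{i-1}\le \CC pn/2^{i-1}$ and the inductive hypothesis $a_{i-1}\le \CC pn/2^{i-1}$, this becomes $a_i \le \tfrac12\cdot\tfrac{\CC pn}{2^{i-1}} + c'\bigl(\tfrac{\CC pn}{2^{i-1}}\bigr)^{1/3}\log n = \tfrac{\CC pn}{2^{i}} + c'\CC^{1/3}(pn)^{1/3}2^{-(i-1)/3}\log n$. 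To close the induction I need the second (error) term to be dominated by a constant fraction of $\CC pn/2^{i}$, say by $\tfrac12\cdot\CC pn/2^{i}$; but that cannot hold for all $i$ simultaneously, which is precisely why the recursion is stopped at step $j$. So instead I would argue that for all $i\le j$ the error term is at most, say, half of $\CC pn/2^{i}$ — equivalently $c'\CC^{1/3}(pn)^{1/3}2^{2i/3}\cdot 2^{2/3}\log n \le \tfrac12\CC pn$, i.e. $2^{2i/3}\lesssim \CC^{2/3}(pn)^{2/3}/\log n$ — and check that the stopping rule $n_j=n/2^j\ge \tfrac2p\ln^{3/2}\tfrac1p$, i.e. $2^j \le \tfrac{pn}{2}\ln^{-3/2}\tfrac1p$, exactly guarantees this for $i\le j$ once $\CC$ is chosen large enough. (One should double-check that $\ln n$ versus $\ln(1/p)$ in the error term is handled correctly; since we only care about $k_{i-1}$-shallow halfspaces and $k_{i-1}\le\CC pn/2^{i-1}$, the relevant logarithmic factor in the discrepancy is $\log\card{R_{i-1}}=O(\log n)$, but the stopping condition is phrased in terms of $\ln(1/p)$ — reconciling these, presumably using that the recursion only runs while $n_i$ is polynomially related to the final threshold, is a point to be careful about.)

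I would then fix $\CC$ to be the absolute constant coming out of these two requirements: large enough that $\CC/2\ge c'\CC^{1/3}(\cdots)$ throughout the valid range of $i$, and large enough that the base case and the shallow-partition applicability (needing $k_{i-1}\le$ a constant fraction of $\card{R_{i-1}}$, so that $h$ genuinely separates rather than contains each piece it meets) both go through. The main obstacle is the tension I flagged above: the halving of the count competes against an additive $k^{1/3}\log n$ error that does \emph{not} halve as fast, so the induction only survives for a bounded number of steps, and the whole point is to verify that the prescribed stopping index $j$ is within that safe range — this is really a calculation matching $2^j$ against $pn/\operatorname{polylog}(1/p)$, and getting the exponents and logarithmic factors to line up is where the care is needed. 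Everything else (the discrepancy bound, the crossing-number bound $O(k^{2/3})$ from \cite{CW,Wel92}, the fact that a $pn$-shallow halfspace is $k_{i-1}$-shallow for $R_{i-1}$) is quoted from earlier results or from the construction preceding the lemma.
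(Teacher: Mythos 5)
Your proposed induction does not close, and this is a genuine structural flaw rather than a technicality. You substitute the inductive hypothesis $a_{i-1}\le \CC pn/2^{i-1}$ and obtain
\[
a_i \;\le\; \tfrac12 a_{i-1} + c' k_{i-1}^{1/3}\log n
\;\le\; \frac{\CC pn}{2^{i}} + \text{(error)} ,
\]
which is already strictly larger than the target $\CC pn/2^i$, no matter how small the error is, since halving the bound $\CC pn/2^{i-1}$ spends the entire budget. Making the error term ``at most half of $\CC pn/2^i$'' only shows $a_i\le\tfrac32\CC pn/2^i$, which is not what you need. This has nothing to do with the stopping rule; the step would fail already at $i=1$ if you fed in $a_0\le\CC pn$. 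You have conflated two separate issues: (a) the error does not fit inside a fixed multiplicative slack $\CC$ that is re-spent at every step, and (b) the errors grow with $i$. The stopping rule controls (b), but (a) is what kills this particular organization of the argument.

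The paper avoids (a) by \emph{not} using the inductive hypothesis as the numeric input to the step. Instead, the hypothesis $\lambda_t\le\CC k/2^t$ is used only to certify that $h$ is $k_t$-shallow in $R_t$, so that the discrepancy bound applies at each stage. The numeric bound on $\lambda_i$ is then re-derived by telescoping
\[
\bigl|\lambda_0 - 2^i\lambda_i\bigr| \;\le\; \sum_{t=1}^{i}\bigl|2^{t-1}\lambda_{t-1}-2^t\lambda_t\bigr|
\;\le\; \sum_{t=0}^{i-1} 2^t c\, k_t^{1/3}\log n_t
\;\le\; c'\CC^{1/3}2^{2i/3}k^{1/3}\log n_i ,
\]
and dividing by $2^i$, anchoring to the \emph{ungenerous} base bound $\lambda_0\le k$ (not $\CC k$). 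This leaves a $(\CC-1)k/2^i$ buffer to absorb the accumulated error, and the stopping condition $n_i/\log^{3/2}n_i\ge 1/p$, i.e.\ $k/2^i\ge\log^{3/2}n_i$, is exactly what is needed so the accumulated error $\tfrac{c'\CC^{1/3}k^{1/3}\log n_i}{2^{i/3}}$ fits into that buffer once $\CC$ satisfies $1+c'\CC^{1/3}\le\CC$. A repaired version of your route would have to carry the sharper invariant $a_i\le\tfrac{pn}{2^i}\bigl(1+\sum_{t\le i}\epsilon_t\bigr)$ rather than rounding up to $\CC pn/2^i$ at every stage; this is equivalent to the paper's telescoping, just stated recursively. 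Your concern about $\log n$ versus $\log(1/p)$ is legitimate but routine: the paper carries $\log n_t$ throughout, and the stopping rule pins $n_j$ to $\Theta\bigl((1/p)\ln^{3/2}(1/p)\bigr)$, so at the last relevant step $\log n_j = \Theta(\log(1/p))$.
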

    
    \begin{proof}
        Delegated to \apndref{proof:a}.
    \end{proof}
    
    \global\def\ProofA{
       \subsection{Proof of \lemref{relative:higher:dim}} 
       \apndlab{proof:a}
       
       \begin{proof}
           Put $k=\lfloor pn\rfloor$ and let $k_i$ be as defined in 
           the proof of \thmref{a3d}.
           Put $\lambda_i = \card{R_i \cap h}$, for
           $i=0,\ldots, j$ (so $\lambda_0=|P\cap h|$).  We prove the
           inequality in the lemma by induction on $i$, which continues
           as long as $n_i\ge \frac{2}{p}\ln^{3/2}\frac{1}{p}$; the
           induction will dictate the correct choice of $\CC$.
           The claim is
           trivial for $i=0$, if we choose $\CC\ge 1$. Assume then
           that the inequality holds for each $t<i$, and consider
           $\lambda_i$.  We apply the improved discrepancy bound, given
           in the proof of \thmref{a3d}, to $P_{t-1}$, for each 
           $t=1,\ldots,i$; this holds
           because $\lambda_{t-1}\le k_{t-1}$, by the induction
           hypothesis.  We thus have $\card{ \lambda_{t-1} -
              2\lambda_t} = O(k^{1/3}_{t-1} \log n_{t-1})$, or
           \[
           \card{ 2^{t-1} \lambda_{t-1} - 2^{t}\lambda_t } \leq
           2^{t-1}c k^{1/3}_{t-1} \log n_{t-1},
           \]
           for some absolute constant $c$.  Adding these inequalities,
           for $t=1, \ldots, i$, we obtain (using the induction
           hypothesis)
           \begin{eqnarray*}
           \card{ \lambda_0 - 2^i\lambda_i } & \leq & \sum_{t=1}^{i}
           \card{ 2^{t-1}\lambda_{t-1}-2^{t} \lambda_t} \leq \sum_{t=0}^{i-1} 
           2^{t}c k^{1/3}_{t} \log n_t\\ & \le &
           c' \CC^{1/3} 2^{2i/3} k^{1/3} \log n_i ,
           \end{eqnarray*}
           for some absolute constant $c'$.  Hence, since $\lambda_0 \le k$, 
           we have
           \[
           \lambda_i \leq \frac{k}{2^i} + 
           \frac{c' \CC^{1/3}k^{1/3} \log n_i}{2^{i/3}} \leq 
           \frac{\CC k}{2^i},
           \]
           if we choose $\CC$ to be a sufficiently large constant,
           satisfying $1+c'\CC^{1/3} \le \CC$, and if we assume
           that $k/2^i\ge \log^{3/2} n_i$, or 
           $n_i/\log^{3/2} n_i \ge 1/p$, which
           holds, for any $i\le j$, by the assumptions of the lemma.
       \end{proof}}
    
    \medskip
    
The lemma implies that $h$ is $k_i$-shallow in each of the subsets 
$R_1,\ldots,R_j$, so we can use the above bound on the discrepancy 
of $h$ with respect to each of these subsets. (The reader can note 
the similarity between the forthcoming analysis and the proof of 
\lemref{relative:higher:dim}.) We thus have
    \begin{eqnarray*}
        \card{ \Msr{h}{P} - \Msr{h}{R_1} } & = &
        \frac{\biggl| \card{h \cap P} - 2\card{h \cap R_1}\biggr|}{\card{P}} =
        \frac{\disc(h,P)}{n} = 
        O\pth{\frac{k^{1/3}\log n}{n}} \\ 
        \card{ \Msr{h}{R_1} - \Msr{h}{R_2} } & = &
        \frac{\disc(h,R_1)}{\card{R_1}} = 
        O\pth{\frac{k_1^{1/3}\log (n/2)}{n/2}} \\ 
        & \vdots & \\
        \card{ \Msr{h}{R_{j-1}} - \Msr{h}{R_j} } & = &
        \frac{\disc(h,R_{j-1})}{\card{ R_{j-1} }} = 
        O\pth{\frac{2^{j-1}k_{j-1}^{1/3}\log (n/2^{j-1})}{n}} .
    \end{eqnarray*}
    Substituting $k_i = k\min\{ \CC /2^i, 1\}$, for each $i$, and
    adding up the inequalities, it is easily checked that the last
    right-hand side dominates the sum (compare with the analysis 
    in the proof of \thmref{good:partition}),
    so we obtain, using the triangle inequality,
    \[
    \card{ \Msr{h}{P} - \Msr{h}{R_j} } =
    O\pth{\frac{2^{2j/3}k^{1/3}\log (n/2^{j-1})}{n}} .
    \]
    We choose $j$ to be the largest index for which 
    this bound is at most $\eps p \le \eps k/n$. 
    That is, $2^j=O\pth{ \frac{\eps^{3/2}pn}{\log^{3/2} (n/2^j)}}$.
    Note that this can be rewritten as $\frac{n_j}{\log^{3/2} n_j} =
    \Omega\left( \frac{1}{\eps^{3/2}p} \right)$, which implies that
    $n_j \ge \frac{2}{p}\ln^{3/2}\frac{1}{p}$, as required in
    \lemref{relative:higher:dim}, provided that $\eps$ is smaller than
    some appropriate absolute constant.
    Hence, since $j$ was chosen as large as possible, the size of $R_j$ is
    \[
    \card{R_j} = \frac{n}{2^j} = O\pth{ \frac{\log^{3/2}
          (n/2^j)}{\eps^{3/2}p}} = O\pth{ \frac{1}{\eps^{3/2}p} 
          \log^{3/2} \frac{1}{\eps p}} .
    \]
    Taking $Z = R_j$ completes the proof of \thmref{a3d}.
\end{proof}

\subsubsection{How to obtain a relative approximate count.}  

\paragraph{Construction.}
The preceding construction used a fixed $k=pn$, and assumed that the
query halfspace is \emph{at most} $k$-shallow. However, our goal
is to construct a subset that caters to all halfspaces whose measure 
is {\em at least} some given threshold.  
While unable to meet this goal exactly, with a single subset,
we almost get there, 
in the following manner.  Let $p$ be the given threshold parameter.  
We consider the geometric sequence $\{p_t\}_{t\ge 0}$, where 
$p_t=2^tp$; the last (largest) element is $\approx 1/2$,
and its index is $t_\mx=O\pth{\log\frac{1}{p}}$. For each $t$,
we construct a relative $(p_t,c\eps)$-approximation $Z_t$ for $P$,
as in \thmref{a3d}, where $c$ is a sufficiently small constant, 
whose value will be determined later.  Clearly, the overall size 
of all these sets is dominated by the size
of the first set, namely, it is
\[
O\pth{ \frac{1}{\eps^{3/2}p}\log^{3/2} \frac{1}{\eps p}}.
\]
We output the entire sequence $Z_0,Z_1,\ldots$, as a substitute for a
single relative $(p,\eps)$-approximation, and use it as follows.  

\paragraph{Answering a query.}
Let $h$ be a given halfspace, so that $w = \card{h \cap P} \ge pn$.  
Let $t\ge 1$ be the index (initially unknown) for which 
$p_{t-1}n \le w < p_tn$.
Thus $h$ is $p_tn$-shallow, and is also $p_sn$-shallow, for every
$s\ge t$.  Hence, if we use the set $Z_s$, for each $s\ge t$, to
approximate $w$, we get, by \thmref{a3d}, a count $C_s := \Msr{h}{Z_s}
\cdot \card{P}$, which satisfies $\card{C_s - w}\le c\eps p_sn$.  
In other words, we have
\[%
C_s-c\eps p_sn \le w\le C_s+c\eps p_sn , 
\]%
for each $s\ge t$. 

To answer the query, we access the sets 
$Z_{t_\mx}, Z_{{t_\mx}-1}, \ldots$, in decreasing order, and find
the largest index $s$ satisfying
\begin{equation} 
    \eqlab{c:s:hold}
    c\eps p_sn < \frac45 C_s .
\end{equation}
We return $C_s$ as the desired approximate count.

\paragraph{Analysis.}
We claim that \Eqref{c:s:hold} must hold at $s=t$, assuming
$\eps<\frac{1}{8c}$.
Indeed, since $\frac12 p_tn \le w < p_tn$, we have
\[
C_t+c\eps p_tn \ge w \ge \frac12 p_tn , \quad \text{or} \quad 
C_t\ge \pth{\frac12 - c\eps}p_tn . 
\]
Hence, $C_t-c\eps p_tn \ge \pth{\frac12-2c\eps} p_tn$.  
On the other hand, $C_t \le c\eps p_tn + w < (1+c\eps)p_tn$.  
Combining these two inequalities, we obtain 
\[ 
C_t-c\eps p_tn > \frac{\frac12-2c\eps}{1+c\eps} C_t > \frac15 C_t , 
\] 
if $\eps<\frac{1}{8c}$, as assumed. Our choice of $s$ thus satisfies
$s\ge t$. Moreover, we have
\[
\frac15 C_s < C_s-c\eps p_sn \le w\le 
  C_s+c\eps p_sn < \frac95 C_s .
\]
This determines $w$, up to a factor of $9$, which
thus determines the correct index $t$ (up to $\pm O(1)$). 
In fact, as our query answering procedure actually does,
we do not have to find the exact value of $t$,  because we use a
smaller value of $\eps$ in the construction of the sets $Z_s$.
Specifically, with an appropriate choice of $c$, we have
$p_t > 2cp_s$, so
$$
C_s \le w + c\eps p_sn < w + \frac12 \eps p_tn = 
w + \eps p_{t-1}n \le (1+\eps)w ,
$$ 
and, similarly,
$$
C_s \ge w - c\eps p_sn > w - \frac12 \eps p_tn = 
w - \eps p_{t-1}n \ge (1-\eps)w ,
$$ 
so $C_s$ is an $\eps$-approximate count of $P\cap h$, establishing 
the correctness of our procedure. (The specific choice of $c$, which
we do not spell out, can easily be worked out from the preceding
analysis.)

Note that our structure also handles halfspaces $h$ with 
$w = |h\cap P| \le pn$. Specifically, if we find an index $s$
satisfying \Eqref{c:s:hold} then $C_s$ is an $\eps$-approximate count
of $P\cap h$, as the preceding analysis shows. If no such $s$ is found
then we must have $w \le p_1n = pn$ (otherwise, as just argued, there
would exist such an $s$ and the procedure would find it). 
In this case we have $|w-C_1| \le \eps pn$, and we return $C_1$ with
the guarantee that (a) $w\le pn$, and (b) $|w-C_1| \le \eps pn$.

\medskip

Note that this constitutes a somewhat unorthodox approach---we have
logarithmically many sets instead of a single one (although their
combined size is asymptotically the same as that of the largest one),
and we access them sequentially to find the one that gives the best
approximation. An interesting useful feature of the
construction is that, if the given halfspace $h$ has weight $w$ that
satisfies $p_{t-1}n < w\le p_tn$, then the approximate counting
mechanism accesses sets whose overall size is only 
$O\pth{ \frac{1}{\eps^{3/2}p_t}\log^{3/2} \frac{1}{\eps p_t}}$.  
That is, the larger $w$ is, the faster is the procedure.

To summarize, we have shown:
\begin{theorem}
    Given a set $P$ of $n$ points in $\Re^3$, and two parameters
    $0<\eps<1$, $0<p<1$, we can construct
    $k = O\pth{\log\frac{1}{p}}$ subsets of $P$, 
    $Z_0,Z_1,\ldots,Z_k$, of total size 
    $O\pth{ \frac{1}{\eps^{3/2}p}\log^{3/2} \frac{1}{\eps p}}$, 
    so that, given any halfspace $h$ containing $qn$
    points of $P$, we can find a set $Z_t$ that satisfies
    $$
    \card{ \Msr{h}{Z_t} - \Msr{h}{P} }\le 
    \begin{cases}
        \eps \Msr{h}{P}, & \mbox{if $q\ge p$} \\
        \eps p, & \mbox{if $q\le p$} .
    \end{cases}
    $$
    The (brute-force) time it takes to search for $Z_t$ and obtain
    the count $\card{h \cap Z_t}$ is 
    $$
    \begin{cases}
        O\pth{ \frac{1}{\eps^{3/2}q}\log^{3/2} \frac{1}{\eps q}}, &
        \mbox{if $q\ge p$} \\
        O\pth{ \frac{1}{\eps^{3/2}p}\log^{3/2} \frac{1}{\eps p}}, &
        \mbox{if $q\le p$} .
    \end{cases}
    $$
\end{theorem}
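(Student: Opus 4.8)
The plan is to read the theorem off directly from \thmref{a3d} together with the query procedure developed in the preceding paragraphs; no genuinely new ingredient is needed, only careful bookkeeping of constants and of a geometric sum. First I would fix the geometric sequence $p_t=2^tp$ for $t=0,1,\ldots,t_\mx$, with $t_\mx=O\pth{\log\frac1p}$ chosen so that $p_{t_\mx}\approx\frac12$, and for each $t$ apply \thmref{a3d} with $p_t$ in place of $p$ and with $c\eps$ in place of $\eps$, where $c$ is a sufficiently small absolute constant to be pinned down at the end. This yields a set $Z_t\subseteq P$ of size $O\pth{\frac{1}{\eps^{3/2}p_t}\log^{3/2}\frac{1}{\eps p_t}}$ such that every $p_tn$-shallow halfspace $h$ satisfies $\card{\Msr{h}{Z_t}-\Msr{h}{P}}\le c\eps p_t$, equivalently $\card{C_t-w}\le c\eps p_tn$, where $w=\card{h\cap P}$ and $C_t=\Msr{h}{Z_t}\cdot n$. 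Summing these sizes over $t$ is a geometric series dominated by its $t=0$ term, giving the asserted total size $O\pth{\frac{1}{\eps^{3/2}p}\log^{3/2}\frac{1}{\eps p}}$.

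Next I would describe the query: given $h$ with $w=qn$, scan $Z_{t_\mx},Z_{t_\mx-1},\ldots$ in decreasing order, compute $C_s=\Msr{h}{Z_s}\cdot n$, and return $C_s$ for the first (hence largest) index $s$ at which \Eqref{c:s:hold} holds. For correctness when $q\ge p$, let $t$ be the index with $p_{t-1}n\le w<p_tn$. I would first verify that \Eqref{c:s:hold} must hold at $s=t$: since $h$ is $p_tn$-shallow, \thmref{a3d} gives $\card{C_t-w}\le c\eps p_tn$, and combining this with $\frac12p_tn\le w<p_tn$ yields $C_t-c\eps p_tn>\frac{\frac12-2c\eps}{1+c\eps}C_t>\frac15C_t$ provided $\eps<\frac{1}{8c}$; hence the scan stops at some $s\ge t$. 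For that $s$, \Eqref{c:s:hold} combined with \thmref{a3d} sandwiches $w$ between $\frac15C_s$ and $\frac95C_s$, which fixes the true index $t$ up to an additive constant; then, choosing $c$ small enough that $p_t>2cp_s$, I get $C_s\le w+c\eps p_sn<w+\frac12\eps p_tn=w+\eps p_{t-1}n\le(1+\eps)w$ and, symmetrically, $C_s\ge(1-\eps)w$, so $C_s$ is a relative $\eps$-approximate count of $\card{h\cap P}$. If the scan finds no valid $s$, the contrapositive of the above forces $w\le p_1n=pn$, so I return $C_1$, which by \thmref{a3d} applied to the $p_1n$-shallow $h$ satisfies $\card{C_1-w}\le\eps pn$; this is exactly the guarantee claimed for the case $q\le p$.

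Finally, for the running time I would note that the query touches only the sets $Z_s$ with $s$ ranging from $t_\mx$ down to the stopping index, and brute-force counting inside $Z_s$ costs $O(\card{Z_s})$; since $\card{Z_s}$ grows geometrically as $s$ decreases, the total is within a constant factor of the last (largest) set accessed. When $q\ge p$ the stopping index is $t$ with $p_t=\Theta(q)$, so the cost is $O\pth{\frac{1}{\eps^{3/2}q}\log^{3/2}\frac{1}{\eps q}}$; when $q\le p$ the scan may run all the way down to $Z_1$, with $p_1=\Theta(p)$, giving $O\pth{\frac{1}{\eps^{3/2}p}\log^{3/2}\frac{1}{\eps p}}$. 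The one place where I would be most careful is precisely the constant juggling in the previous paragraph: the factor $\frac45$ in \Eqref{c:s:hold} (and the resulting factor-$9$ localization of $w$) leaves a constant-factor ambiguity in the recovered index $t$, and one must pick $c$ small, and $\eps$ below a matching absolute constant, so that this ambiguity is absorbed into the slack between $p_t$ and $2cp_s$ — this is exactly what upgrades the crude factor-$9$ estimate to a genuine $(1\pm\eps)$ relative count, and it is the only subtle step in an otherwise routine assembly.
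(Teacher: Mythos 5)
Your proposal reproduces the paper's own argument essentially verbatim: the geometric scale sequence $p_t=2^tp$ with $Z_t$ built via \thmref{a3d} at precision $c\eps$, the descending scan stopping at the largest $s$ satisfying \Eqref{c:s:hold}, the verification that the scan stops at $s\ge t$ using $\eps<\frac{1}{8c}$, the factor-$9$ sandwich, the slack $p_t>2cp_s$ upgrading to a genuine $(1\pm\eps)$ estimate, the fallback to $C_1$ when no $s$ is found, and the geometric-sum bound on total size and query cost. This is the same proof; no gaps.
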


\subsection{Higher dimensions}
\seclab{higher}

The preceding construction can be generalized to higher dimensions,
with some complications. We first introduce the following parameters:
\begin{eqnarray*}
    \gamma & = & 1 + \frac{1-\frac{1}{d^*}}{1+\frac{1}{d}}, 
    \quad\text{where}\quad
    d^*=\lfloor d/2 \rfloor , \text{~ and ~}
    \ddelta  =  \frac{2d}{d+1} .
\end{eqnarray*}
Note that, for $d\ge 4$, $1 < \gamma < 2$ (and tends to $2$ as $d$ increases),
and $\ddelta < 2$ (and tends to $2$ as $d$ increases).

The analogous version of \thmref{a3d} is:
\begin{theorem}
    Let $P$ be a set of $n$ points in $\Re^d$, $d\ge 4$, and let
    $0<\eps<1$, $0<p<1$ be given parameters.  
    Then there exists a set $Z \subseteq P$, of size 
$\ds O\pth{\frac{d^{\ddelta/2}}{\eps^\ddelta p^\gamma}\log\frac{d}{\eps p}}$, 
    such that, for any $pn$-shallow halfspace $h$, we have
    \[
    \card{ \Msr{h}{Z} - \Msr{h}{P} } \le \eps p ,
    \]
    provided that 
    $n = \Omega\pth{\frac{d^{\ddelta/2}}{p^\gamma}\log^{\ddelta/2}\frac{d}{p}}$.
    
    \thmlab{add} 
\end{theorem}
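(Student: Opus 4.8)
The plan is to mimic the three-dimensional construction of \thmref{a3d}, replacing each geometric tool by its $d$-dimensional analogue and carefully tracking how the exponents change. First I would invoke the shallow partition theorem of \Matousek~\cite{Ma:rph} in $\Re^d$: with $k=\lfloor pn\rfloor$, it partitions $P$ into $s\le n/k=O(1/p)$ subsets $P_1,\ldots,P_s$, each of size between $k+1$ and $2k$, such that every $k$-shallow halfspace separates at most $O\pth{(n/k)^{1-1/d^*}}$ of them, where $d^*=\lfloor d/2\rfloor$ is the ``shallow'' exponent (this is the source of the $1-1/d^*$ term hidden in $\gamma$). For each $P_i$ I would build a spanning tree with crossing number $O(k^{1-1/d})=O(k^{1/\ddelta'})$ via \thmref{crossing} (here $1-1/d$ is the source of $\ddelta=2d/(d+1)$, since a halfspace crossing $t$ edges of a matching has discrepancy $O(\sqrt{t\log n})$, and $\sqrt{k^{1-1/d}}=k^{(d-1)/(2d)}=k^{1/\ddelta}$ up to the notational flip), convert it to a perfect matching of $P_i$ with the same crossing number, and combine these into a perfect matching of all of $P$.

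Next I would color each matched pair independently and uniformly by $\{-1,+1\}$ (one point each color), set $R_1$ to be the $-1$ class, and use the standard discrepancy bound (\cite{Cha01}): with high probability every halfspace $h$ has $|\chi(h\cap P)|=O(\sqrt{\xi(h)\log n})$, where $\xi(h)$ is its crossing number. For a $k$-shallow $h$, since $h$ separates $O((n/k)^{1-1/d^*})$ subsets and crosses $O(k^{1-1/d})$ edges in each, $\xi(h)=O\pth{(n/k)^{1-1/d^*}k^{1-1/d}}$, so the discrepancy of $h$ with respect to $P$ is $O\pth{\bigl((n/k)^{1-1/d^*}k^{1-1/d}\bigr)^{1/2}\log n}$. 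I would then iterate: form $R_0=P, R_1,\ldots,R_j$, where $R_i$ comes from $R_{i-1}$ by re-partitioning with a geometrically shrinking shallowness parameter $k_{i-1}=k\min\{\CC/2^{i-1},1\}$, and re-coloring. The analogue of \lemref{relative:higher:dim} — that every $pn$-shallow halfspace stays $k_i$-shallow in $R_i$, i.e.\ $\card{h\cap R_i}\le \CC pn/2^i$ — would be proved by the same telescoping induction: summing $\card{2^{t-1}\lambda_{t-1}-2^t\lambda_t}\le 2^{t-1}\cdot O\pth{(n/k_{t-1})^{1-1/d^*}k_{t-1}^{1-1/d}}^{1/2}\log n_{t-1}$ over $t\le i$, the geometric series is dominated by its last term, yielding $\lambda_i\le k/2^i + (\text{lower-order in }k)$, and choosing $\CC$ large enough (and $n$ large enough — this is where the hypothesis $n=\Omega(d^{\ddelta/2}p^{-\gamma}\log^{\ddelta/2}(d/p))$ enters) closes the induction.

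With that lemma in hand, $h$ is $k_i$-shallow in every $R_i$, so I can telescope the per-level discrepancies: $\card{\Msr{h}{R_{i-1}}-\Msr{h}{R_i}} = \disc(h,R_{i-1})/\card{R_{i-1}} = O\pth{2^{i-1}\bigl((n/k_{i-1})^{1-1/d^*}k_{i-1}^{1-1/d}\bigr)^{1/2}\log(n/2^{i-1})/n}$, substitute $k_i=k\min\{\CC/2^i,1\}$, and observe (exactly as in \thmref{good:partition} and \thmref{a3d}) that the last term dominates the sum. Choosing $j$ maximal so that this bound is at most $\eps p\le \eps k/n$ forces $n_j=\card{R_j}$ to satisfy the relation $\card{R_j}=O\pth{d^{\ddelta/2}\eps^{-\ddelta}p^{-\gamma}\log(d/\eps p)}$ after the algebra is carried out — the precise form of $\gamma$ and $\ddelta$ pops out of solving for $2^j$ in terms of $\eps$, $p$, $n$, and the shallowness and crossing exponents, and the $d^{\ddelta/2}$ factor comes from making the dependence of the crossing/discrepancy constants on $d$ explicit. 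Setting $Z=R_j$ and checking $\card{\Msr{h}{Z}-\Msr{h}{P}}\le\eps p$ for $pn$-shallow $h$ finishes the argument.

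The main obstacle I anticipate is bookkeeping the exponents so that the final size bound really collapses to $O\pth{d^{\ddelta/2}\eps^{-\ddelta}p^{-\gamma}\log(d/\eps p)}$: one must verify that the geometric sum over levels is genuinely dominated by its last term (which requires the combined exponent of $2^i$ in the summand to be positive, i.e.\ the constant $\CC$ and the interplay of $1-1/d^*$ and $1-1/d$ behave correctly), and that the resulting equation $2^{2j(\text{something})}\cdot k^{(\text{something})}\cdot\mathrm{polylog}=\eps p n$ inverts cleanly into the stated $\gamma$ and $\ddelta$. A secondary subtlety — and the reason the theorem carries the extra hypothesis on $n$ — is ensuring the recursion runs for enough steps before $R_i$ gets too small for the shallow partition theorem to apply; this is precisely the $d$-dimensional analogue of the condition $n_j\ge \frac{2}{p}\ln^{3/2}\frac1p$ in \lemref{relative:higher:dim}, and it must be re-derived with the $d$-dependence made explicit.
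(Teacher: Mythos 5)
Your proposal follows the paper's proof essentially verbatim: apply the shallow partition theorem with $k=\lfloor pn\rfloor$, build per-cell matchings from spanning trees with crossing number $O(k^{1-1/d})$, color, iterate with geometrically shrinking shallowness parameters $k_i$, prove the analogue of \lemref{relative:higher:dim} by the telescoping induction, and choose $j$ maximal so that the accumulated $d_\nu$-error is at most $\eps p$. The only quibble is your parenthetical derivation of $\ddelta$ (you write $\sqrt{k^{1-1/d}}=k^{(d-1)/(2d)}=k^{1/\ddelta}$, but $1/\ddelta=(d+1)/(2d)$; the exponent $\ddelta$ actually arises from inverting $n_j^{1-\alpha/2}\approx\text{poly}(1/(\eps p))$, so that $\ddelta=2/(2-\alpha)$), which is a cosmetic imprecision in the motivation rather than a gap in the argument.
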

\begin{proof}
    As above, put $k = \lfloor np \rfloor$, and apply 
    Matou\v{s}ek's shallow partition theorem \cite{Ma:rph}, 
    to obtain a partition of $P$ into $s = O(1/p)$ subsets 
    $P_1,\ldots,P_s$, each of size between $k+1$ and $2k$, 
    such that any $k$-shallow halfspace separates at most
    $c(n/k)^\beta$ subsets, for some absolute constant $c$, where
    $\beta=1-1/\lfloor d/2\rfloor=1-1/d^*$. (As above, if $h$ meets 
    any $P_i$, it has to separate it.)
    Also, we may assume that the size of each $P_i$ is even.
    
    We then construct, for each subset $P_i$, a spanning path of $P_i$
    with crossing number $O(k^{\alpha})$~\cite{Wel92}, for
    $\alpha=1-1/d$, convert it to a perfect matching of $P_i$, with
    the same asymptotic crossing number, and combine all these
    matchings to a perfect matching of $P$.
    
    We then apply the same coloring scheme as in the three-dimensional
    case, and let $R_1$ be the set of points colored by $-1$; we have
    $\card{R_1} = n/2$. With high probability, the discrepancy of any
    halfspace $h$ is at most $\sqrt{2d\xi(h)\ln 2n}$, where $\xi(h)$
    is the number of pairs in the matching that $h$ separates
    \cite{Cha01}.  If $h$ is $k$-shallow then, by construction,
    $\xi(h) = O\pth{\sum_i k^{\alpha}(n/k)^{\beta}}$.
    Hence the discrepancy of $h$ is
    \[
    \chi(h,P) = O(\sqrt{d}k^xn^y\log^{1/2}n), \text{ ~ for ~} x =
    \frac{1}{2} \pth{\alpha-\beta} , \text{ and } y =
    \frac{1}{2} \beta .
    \]
    We continue recursively in this manner for $j$ steps, producing,
    as above, a sequence of subsets $R_0=P,R_1,\ldots,R_j$, where 
    $R_i$ is obtained from $R_{i-1}$ by applying the partitioning of
    \cite{Ma:rph} with a different parameter $k_{i-1}$, and then by
    using the above coloring procedure on the resulting perfect matching.
    We take $k_{i-1}=k \min\{ \CC/2^{i-1}, 1\}$,
    where $\CC$ is the constant derived in the following lemma.
    (As in the 3-dimensional case, if we want the proof of the theorem
    to be constructive, we either verify that each half-sample has 
    the desired low discrepancy, and then the bounds are worst-case,
    or else the bounds hold with high probability.)
    \begin{lemma} 
        There exists an absolute constant $\CC$ such that
        any $pn$-shallow halfspace satisfies, for any $i\le j$,
        \[
        \card{h \cap R_i} \le \frac{\CC pn}{2^i} ,
        \]
        where $j$ is the largest index satisfying
        $n_j = \Omega\pth{ \frac{d^{\ddelta/2}}{p^\gamma}
           \log^{\ddelta/2}\frac{d}{p}}$, where $n_j = \card{P_j} = n/2^j$.
        
        \lemlab{k:o:k:d}
    \end{lemma}
    \begin{proof}
        Delegated to \apndref{k:o:k:d}.
    \end{proof}
    
    \global\def\ProofB{
       \subsection{Proof  of \lemref{k:o:k:d}} 
       \apndlab{k:o:k:d}
       
       \begin{proof}
         We proceed in much the same way as in the preceding proof. 
         That is, put $\lambda_i = \card{R_i \cap h}$, for
         $i=0,\ldots, j$  (so $\lambda_0=|P\cap h|$), and use
         induction on $i$, which continues as
         long as $n_i = \Omega\pth{ \frac{d^{\ddelta/2}}{p^\gamma}
         \log^{\ddelta/2}\frac{d}{p}}$.  The claim is trivial
         for $i=0$, if we choose $\CC\ge 1$. Assume then that the
         inequality holds for each $t<i$, and consider $\lambda_i$.  We
         apply the improved discrepancy bound, given in the proof of
         \thmref{add}, to $P_{t-1}$, for each $t=1,\ldots,i$; this
         holds because $\lambda_{t-1}\le k_{t-1}$, by the induction
         hypothesis.  We thus have $\card{ \lambda_{t-1} - 2\lambda_t}
         = O(\sqrt{d}k^{x}_{t-1} n^y_{t-1}\log^{1/2}n_{t-1})$, or
         \[
         \card{ 2^{t-1} \lambda_{t-1} - 2^{t}\lambda_t } \leq
         2^{t-1}c \sqrt{d} k^{x}_{t-1}n^y_{t-1} \log^{1/2}n_{t-1}, 
         \]
         for some absolute constant $c$.
         Adding these inequalities, for $t=1, \ldots, i$, we obtain 
         (using the induction hypothesis)
         \begin{eqnarray*}
         \card{ \lambda_0 - 2^i\lambda_i } & \leq & \sum_{t=1}^{i} 
         \card{ 2^{t-1} \lambda_{t-1} - 2^{t} \lambda_t } \leq \\
         \sum_{t=0}^{i-1} 
         2^{t}c\sqrt{d} k^{x}_{t} n^y_t \log^{1/2}n_t & \le &
         c' \sqrt{d}\CC^{x} 2^{(1-x-y)i} k^{x} n^y\log^{1/2}n_i \\
         & \le & c' \sqrt{d}\CC^{x} 2^{(1-\alpha/2)i} k^{x} n^y\log^{1/2}n_i ,
         \end{eqnarray*}
         for some absolute constant $c'$. 
         Hence, since $\lambda_0 \le k$, we have
         \[
         \lambda_i \leq \frac{k}{2^i} + 
         \frac{c' \sqrt{d}\CC^{x}k^{x} n^y\log^{1/2}n_i}{2^{\alpha i/2}} ,
         \quad\text{which we want to be }
         \leq \frac{\CC k}{2^i}.
         \]
         To guarantee the last inequality, we choose $\CC$ to be a 
         sufficiently large constant, satisfying
         $1+c'\CC^{x} \le \CC$, and require that
         \[
         \frac{\sqrt{d}k^{x} n^y\log^{1/2}n_i}{2^{\alpha i/2}} \le 
         \frac{k}{2^i}.
         \]
         Substituting $k=pn$ and the values of $x,y$, this amounts 
         to requiring that
         \[
         \sqrt{d} \log^{1/2}n_i \le \pth{\frac{n}{2^i}}^{1-\alpha/2}
         p^{1-x} = n_i^{1-\alpha/2} p^{1-x} ,\quad\text{or}
         \]
         \[
         n_i = \Omega\pth{ \frac{d^{1/(2-\alpha)}}{p^\gamma}
         \log^{1/(2-\alpha)}\frac{d}{p}} =
         \Omega\pth{ \frac{d^{\ddelta/2}}{p^\gamma}
         \log^{\ddelta/2}\frac{d}{p}} .
         \]
         This holds, for any $i\le j$, by the assumptions of the lemma.
       \end{proof}}
    
    \medskip
    
    As in the 3-dimensional case, the lemma justifies the following 
    chain of inequalities (note again the similarity between the 
    proof of the lemma and the analysis below).
    \begin{eqnarray*}
        \card{ \Msr{h}{P} - \Msr{h}{R_1} } & = &
        \frac{\biggl| \card{h \cap P} - 2\card{h \cap R_1}\biggr| }{\card{P}} =
        \frac{\disc(h,P)}{n} = 
        O\pth{\frac{\sqrt{d}k^{x}n^y\log^{1/2} n}{n}} \\ 
        \card{ \Msr{h}{R_1} - \Msr{h}{R_2} } & = &
        \frac{\disc(h,R_1)}{\card{R_1}} = 
        O\pth{\frac{\sqrt{d}k_1^{x}(n/2)^y\log^{1/2} (n/2)}{n/2}} \\ 
        & \vdots & \\
        \card{ \Msr{h}{R_{j-1}} - \Msr{h}{R_j} } & = &
        \frac{\disc(h,R_{j-1})}{\card{R_{j-1}}} = 
        O\pth{\frac{2^{j-1}\sqrt{d}k_{j-1}^{x}(n/2^{j-1})^y\log^{1/2} 
              (n/2^{j-1})}{n}} .
    \end{eqnarray*}
    Substituting $k_i = k\min\{\CC /2^i, 1\}$, for each $i$,
    and adding up the inequalities, the last right-hand side dominates, 
    so we obtain
    \[
    \card{ \Msr{h}{P} - \Msr{h}{R_j} } =
    O\pth{\frac{2^{(1-x-y)j}\sqrt{d}k^{x}n^y\log^{1/2} (n/2^{j-1})}{n}} .
    \]
    Substituting $k=pn$ and the values of $x$ and $y$, this is equal to
    \[
    O\pth{ \frac{p^x\sqrt{d}\log^{1/2}n_j}{n_j^{1-\alpha/2}} } .
    \]
    We choose the first $j$ so that this bound is at most $\eps p$. 
    That is,
    \[
    \frac{n_j^{1-\alpha/2}}{\log^{1/2}n_j} = 
    \Omega\pth{\frac{\sqrt{d}}{\eps p^{1-x}}} ,
    \]
    \[
    \text{or} \quad \card{R_j} = n_j=\Theta\pth{
       \frac{d^{\ddelta/2}}{\eps^\ddelta p^\gamma} \log^{\ddelta/2}
       \frac{d}{\eps p} } .
    \]
    We note that the choice of $n_j$ satisfies the lower bound
    constraint in \lemref{k:o:k:d}. Hence, taking $Z = R_j$
    completes the proof.
\end{proof}

\medskip

\noindent {\bf Obtaining a relative approximate count} is done
exactly as in the three-dimensional case, producing a sequence 
of approximations, and searching through the sequence for the
approximation which caters for the correct range of the size of 
$P\cap h$.  We thus have the following result.
\begin{theorem}
    Given a set $P$ of $n$ points in $\Re^d$, and two parameters
    $0<\eps<1$, $0<p<1$, we can construct
    $k = O\pth{\log\frac{1}{p}}$ subsets of $P$, $Z_0,Z_1,\ldots,Z_k$,
    of total size 
    $O\pth{ \frac{d^{\ddelta/2}}{\eps^{\gamma}p^\ddelta} \log^{\ddelta/2}
    \frac{1}{\eps p}}$, so that, given any halfspace $h$
    containing $qn$ points of $P$, we can find a set $Z_t$ that satisfies
    $$
    \card{ \Msr{h}{P} - \Msr{h}{Z_t} } \le 
    \begin{cases}
        \eps \Msr{h}{P}, & \mbox{if $q\ge p$} \\
        \eps p, & \mbox{if $q\le p$} .
    \end{cases}
    $$
    The (brute-force) time it takes to search for $Z_t$ and obtain
    the count $\card{h \cap Z_t}$ is 
    $$
    \begin{cases}
        O \pth{ \frac{d^{\ddelta/2}}{\eps^{\gamma}q^\ddelta} 
           \log^{\ddelta/2} \frac{1}{\eps q} }, &
        \mbox{if $q\ge p$} \\
        O \pth{ \frac{d^{\ddelta/2}}{\eps^{\gamma}p^\ddelta} 
           \log^{\ddelta/2} \frac{1}{\eps p} }, &
        \mbox{if $q\le p$} .
    \end{cases}
    $$
\end{theorem}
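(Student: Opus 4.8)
The plan is to reduce this statement to \thmref{add} by the same packaging device already used for $\Re^3$, so I will only outline it. Fix the threshold $p$ and form the geometric sequence $p_t=2^tp$ for $t=0,1,\ldots,t_{\mx}$, where $t_{\mx}=O\pth{\log\frac1p}$ is the first index with $p_{t_{\mx}}\ge\frac12$. For each $t$ I would invoke \thmref{add} with threshold $p_t$ and with $c\eps$ substituted for $\eps$, where $c$ is a small absolute constant to be fixed later, obtaining a subset $Z_t\subseteq P$ such that $\card{\Msr{h}{Z_t}-\Msr{h}{P}}\le c\eps p_t$ for every $p_tn$-shallow halfspace $h$. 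Since the size bound of \thmref{add} is (polynomially) decreasing in the threshold, $\sum_t\card{Z_t}$ is dominated by $\card{Z_0}$, which matches the asserted total size; for this one checks only that the lower-bound hypothesis on $n$ in \thmref{add} holds for the smallest threshold $p_0=p$ (this is precisely the global hypothesis relating $n$, $p$, $d$, and since that hypothesis only weakens as the threshold grows it then holds for every $p_t$), and that $c\eps$ still lies below the absolute-constant threshold required in \thmref{add}, which is automatic as $c$ is an absolute constant. We output the whole sequence $Z_0,Z_1,\ldots,Z_{t_{\mx}}$.

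To answer a query for a halfspace $h$ with $w=\card{h\cap P}$, I would access the sets $Z_{t_{\mx}},Z_{t_{\mx}-1},\ldots$ in this order, compute $C_s:=\Msr{h}{Z_s}\cdot\card{P}$ for each, and return $C_s$ for the largest index $s$ satisfying \Eqref{c:s:hold}, that is, $c\eps p_sn<\frac45 C_s$. The analysis is identical to the three-dimensional one. Let $t$ be the index with $p_{t-1}n\le w<p_tn$; then $h$ is $p_sn$-shallow, hence $\card{C_s-w}\le c\eps p_sn$, for every $s\ge t$. From $\frac12 p_tn\le w$ and the error bound for $Z_t$ one gets $C_t-c\eps p_tn>\frac15 C_t$ provided $\eps<\frac1{8c}$, so \Eqref{c:s:hold} holds at $s=t$ and the scan stops at some index $s\ge t$; consequently $\frac15 C_s<w<\frac95 C_s$, which pins down $w$ up to a factor of $9$ and hence $t$ up to an additive constant. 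With $c$ chosen small enough relative to the common ratio $2$ of the sequence one has $c\eps p_sn<\frac12\eps p_tn=\eps p_{t-1}n\le\eps w$, giving $\card{C_s-w}\le\eps w$; this handles the case $q\ge p$. For $q\le p$: if the scan finds some valid $s$ the same bound applies, and if it finds none then necessarily $w\le p_1n=pn$, and we return $C_1$, whose error is at most $c\eps pn\le\eps p$, together with the guarantee $w\le pn$.

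For the running time, the scan inspects only $Z_{t_{\mx}},\ldots,Z_s$, whose sizes grow geometrically, so the cost is dominated by $\card{Z_s}$; since $p_s\ge p_t>q$ when $q\ge p$, monotonicity of the size bound of \thmref{add} in its threshold gives $\card{Z_s}=O\pth{\frac{d^{\ddelta/2}}{\eps^{\gamma}q^\ddelta}\log^{\ddelta/2}\frac1{\eps q}}$, while for $q\le p$ the scan runs all the way down to $s=1$ (or $0$) and the cost is $\card{Z_0}=O\pth{\frac{d^{\ddelta/2}}{\eps^{\gamma}p^\ddelta}\log^{\ddelta/2}\frac1{\eps p}}$. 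I expect the one real source of friction to be bookkeeping rather than new ideas: verifying that a single absolute constant $c$ simultaneously serves the stopping rule, the factor-of-$2$ slack between consecutive thresholds, and the absolute-constant restriction on $\eps$ inherited from \thmref{add}; and tracking how the $\Re^d$-specific exponents $\gamma$ and $\ddelta$ (and the $d$-dependence) propagate from \thmref{add} through the geometric sum — neither of which presents a genuine obstacle once the three-dimensional template is in place.
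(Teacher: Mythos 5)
Your proposal is correct and follows essentially the same route as the paper, which simply states that ``obtaining a relative approximate count is done exactly as in the three-dimensional case''; you have spelled out the reduction to \thmref{add} and the transcription of the 3D stopping-rule analysis, which is exactly what the paper intends.

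One thing worth flagging that you glossed over: you assert that $\card{Z_0}$ ``matches the asserted total size,'' but it does not literally do so. \thmref{add} gives $\card{Z_0}=O\pth{\frac{d^{\ddelta/2}}{\eps^{\ddelta}p^{\gamma}}\log^{\ddelta/2}\frac{d}{\eps p}}$, whereas the theorem you are proving claims $O\pth{\frac{d^{\ddelta/2}}{\eps^{\gamma}p^{\ddelta}}\log^{\ddelta/2}\frac{1}{\eps p}}$ --- the exponents $\gamma$ and $\ddelta$ on $\eps$ and $p$ have been transposed, and the same transposition appears in the stated query time. Since the geometric sum over $t$ preserves the exponents of $\eps$ and $p$ coming from \thmref{add} (only the $p$-power drives the geometric decay, with ratio $2^{-\gamma}<1$), the bound your argument actually yields is the one with $\eps^{\ddelta}p^{\gamma}$ in the denominator; the statement as printed appears to carry a typographical swap. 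Your derivation is the correct one, but in a blind proof you should notice and call out a mismatch with the target rather than silently claim agreement.
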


\paragraph{Discussion.} 
We have two competing constructions, the ``traditional'' one, with
$N=\Theta\pth{\frac{d}{\eps^2 p}\log\frac{1}{p}}$ elements (see
\secref{relations}), and the new one, with
$N'=\Theta\pth{\frac{d^{\ddelta/2}}{\eps^\ddelta p^\gamma}
   \log^{\ddelta/2}\frac{d}{\eps p}}$ elements. 
The new construction is better, in terms of the size of 
the approximation, when
\[
\frac{d^{\ddelta/2}}{\eps^\ddelta p^\gamma} 
  \log^{\ddelta/2}\frac{d}{\eps p} < 
\frac{d}{\eps^2 p}\log\frac{1}{p} 
\]
(for simplicity, we ignore the constants of proportionality).
For further simplicity, assume that $p$ is not much larger than 
$\eps$, so that $\log\frac{d}{\eps p}$ and $\log\frac{1}{p}$ are 
roughly the same, up to some constant factor. 
Then we replace the above condition by
\[
\frac{d^{\ddelta/2}}{\eps^\ddelta p^\gamma} 
  \log^{\ddelta/2}\frac{d}{\eps p} < 
\frac{d}{\eps^2 p}\log\frac{d}{\eps p} .
\]
Substituting the values of $\gamma$ and $\ddelta$, and simplifying the
expressions, this is equivalent to
\[
\frac{1}{p^{d(1-1/d^*)}} < \frac{d}{\eps^2}\log\frac{d}{\eps p} ,
\quad\text{or}\quad
p > \pth{ \frac{\eps^2/d}{\log (d/\eps)} }^{\frac{1}{d(1-1/d^*)}} .
\]
This establishes a lower bound for $p$, above which the new
construction takes over. For example, for $d=4$, $p$ has to be
$\Omega(\eps/\log^{1/2}\frac{1}{\eps})$.


\section{Approximate range counting in two and three dimensions}
\seclab{approximate:counting}

In this section we slightly deviate from the main theme of the paper.
Since approximate range counting is one of the main motivations for 
introducing relative $(p,\eps)$-approximations, we return to this problem,
and propose efficient solutions for approximate halfspace range counting
in two and three dimensions. The solution to the 3-dimensional problem 
uses relative approximations, whereas the solution to the 2-dimensional 
problem is simpler and does not require such approximations. Both 
solutions pass to the dual plane / space, construct a small subset 
of levels, of small overall complexity, in the arrangement of the 
dual lines / planes, and search through them with the point dual 
to the query halfplane / halfspace to retrieve the approximate count.

We first present two solutions for the planar case, and then consider 
the 3-dimensional case.

Let $P$ be a set of $n$ points in the plane in general position, and
$\eps>0$ be a prescribed parameter. The task at hand is to preprocess
$P$ for halfplane approximate range counting; that is, given a query
halfplane $h$, we wish to compute a number $\mu$ that satisfies
$(1-\eps) \card{h \cap P} \leq \mu \leq (1+\eps)\card{h \cap P}$.  We
can reformulate the problem in the dual plane, where the problem is to
preprocess the set $\LL$ of lines dual to the points of $P$ for
approximate vertical-ray range counting queries; that is, given a
vertical ray $\rho$, we want to count (approximately, within a
relative error of $\eps$) the number of lines of $\LL$ that intersect
$\rho$.  Without loss of generality, we only consider
downward-directed rays.

Let $\Level{i}$ denote the $i$\th level in the arrangement
$\Arr(\LL)$; this is the closure of the set of all the points on the
lines of $\LL$ that have exactly $i$ lines of $\LL$ passing below
them. Each $\Level{i}$ is an $x$-monotone polygonal curve, and its
\emph{combinatorial complexity} (or just \emph{complexity}) is the
number of its vertices.

\begin{lemma}
    For integers $x \geq y > 0$, the total complexity of the levels of
    $\Arr(\LL)$ in the range $[x, x +y]$ is $O\pth{n x^{1/3} y^{2/3}}$.

    In particular, the average complexity of a level in this range is 
    $O(n(x/y)^{1/3})$. 

    \lemlab{levels}
\end{lemma}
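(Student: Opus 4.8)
The plan is to bound the quantity in question by reducing it to Dey's theorem on the complexity of a single level, via the crossing‑number technique. Work in the primal picture, where $P$ is the $n$‑point set dual to $\LL$. By the standard duality between levels of $\Arr(\LL)$ and $j$-facets of $P$ (a $j$-facet, or $j$-edge, is a line through two points of $P$ with exactly $j$ points of $P$ on a fixed side), the total complexity of the levels $\Level{x},\dots,\Level{x+y}$ of $\Arr(\LL)$ equals, up to a constant factor and a shift of the range by $O(1)$, the number $E$ of (unordered) pairs $\{a,b\}\subseteq P$ that form a $j$-facet of $P$ for some $j$ in the window $[x,x+y]$. (The classical $O(nk)$ bound on the number of $(\le k)$-facets already gives $E=O(nx)$, which matches the target when $y$ is within a constant factor of $x$; the work is in the regime $y\ll x$.)

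First I would form the graph $G$ on vertex set $P$ whose edges are exactly these window‑facet pairs, drawn as straight segments. Here $|V(G)|=n$ and $|E(G)|=\Theta(E)$, and the crossing‑number inequality gives $E=O\!\left(|V(G)|^{2/3}\,\mathrm{cr}(G)^{1/3}+|V(G)|\right)$. Thus the whole argument reduces to establishing
\[
\mathrm{cr}(G)=O(n\,x\,y^2),
\]
since this yields $E=O\!\left(n^{2/3}(nxy^2)^{1/3}+n\right)=O(nx^{1/3}y^{2/3})$ (the linear term is absorbed because $x,y\ge1$), which is exactly the asserted bound; dividing by $y$ gives the stated average.

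To bound $\mathrm{cr}(G)$ I would adapt Dey's charging argument together with the Lovász‑type incidence lemmas behind it. A crossing is a pair of window‑facet edges $ab$, $cd$ whose segments meet at an interior point; as in Dey's analysis one charges it to one of the four ``diagonal'' segments among $\{a,c\},\{a,d\},\{b,c\},\{b,d\}$, chosen according to the cyclic order of the four points. Two facts must be verified: (a) the charged pair is a $j$-facet for some $j=O(x+y)=O(x)$ — this follows by the same half‑plane bookkeeping as in Dey, where the order of the charged facet is controlled by the sum of the orders of the two crossing facets; and (b) each fixed pair receives at most $O(y^2)$ charges — the point being that, once the charged pair and the incidence pattern are fixed, each of the two remaining endpoints must make its incident facet have order in a window of width $O(y)$ (and on a prescribed side), so it has only $O(y)$ possible positions. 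Combining (a), (b), and the $O(nx)$ bound on the number of $(\le O(x))$-facets of $P$ gives $\mathrm{cr}(G)=O(nx)\cdot O(y^2)=O(nxy^2)$.

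The crux — and the step I expect to be the real obstacle — is (b): carrying Dey's ``a point has $O(k)$ incident $(\le k)$-facets'' lemma over to the windowed statement ``a point has $O(y)$ incident $j$-facets for $j$ in a fixed window of width $y$,'' and then using it to cap the charge multiplicity. Nailing the precise facet‑order ranges in (a) is the other fiddly point; the remaining ingredients (duality, the $(\le k)$-facet bound, the crossing‑number inequality) are routine. For contrast, the more naive route — a Clarkson--Shor random‑sampling reduction to Dey's single‑level bound — seems to deliver only the weaker estimate $O(nx^{2/3}y^{1/3})$, which is why the direct crossing‑number argument is needed here.
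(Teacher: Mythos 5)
Your reduction is correct in outline and matches the paper's framework: pass to the primal, form the graph $G$ on $P$ whose edges are the $j$-edges with $j\in[x,x+y]$, invoke the crossing-number inequality, and reduce everything to the single estimate $\mathrm{cr}(G)=O(nxy^2)$. That is exactly the target the paper aims for, and your arithmetic from there ($m=O(n^{2/3}(nxy^2)^{1/3})=O(nx^{1/3}y^{2/3})$) is right.

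But the heart of the lemma is the crossing bound itself, and there you propose a Dey-style diagonal charging whose two load-bearing steps you explicitly flag as unresolved: (a) that the charged diagonal is an $O(x)$-facet, and (b) that each diagonal absorbs only $O(y^2)$ charges. You call (b) ``the real obstacle'' and (a) ``fiddly''; these are not peripheral loose ends but the entire substance of the bound, so as written the proof has a genuine gap. Moreover, (b) is not obviously repairable as stated: fixing the charged pair and one endpoint $a$, the claim that there are only $O(y)$ points $b$ for which $ab$ has facet-order in a width-$y$ window is false in general — as one rotates a line around $a$, the order can oscillate inside the window arbitrarily many times, so the count of admissible $b$ is not controlled by $y$ alone. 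A correct charging argument would have to bring in a Lov\'asz-type structural constraint (ordering along a line, or a fixed side), which you gesture at but do not pin down.

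The paper avoids this entirely via the chain decomposition, which gives $\mathrm{cr}(G)=O(nxy^2)$ with no multiplicity bookkeeping. For each fixed $j$, the $j$-edges decompose into $j+1$ concave chains and, separately, into $n-j$ convex chains (the standard Lov\'asz/Agarwal--Aronov--Chan--Sharir fact). Summing over $j\in[x,x+y]$ yields
\[
\alpha=\sum_{j=x}^{x+y}(j+1)=O(xy)\ \ \text{concave chains},\qquad
\beta=\sum_{j=x}^{x+y}(n-j)=O(ny)\ \ \text{convex chains}.
\]
Every crossing of two window-edges is a crossing of one concave chain with one convex chain, and a concave and a convex curve meet in at most two points, so $\mathrm{cr}(G)\le 2\alpha\beta=O(nxy^2)$. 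I would replace your claims (a)--(b) with this decomposition; it is complete, it is short, and it sidesteps exactly the multiplicity issue you identified as the obstacle.
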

\begin{proof}
    This result is a strengthening of a similar albeit weaker bound
    due to Welzl~\cite{We:kset}, and is implicit in \cite{And,Aea}.
    It was recently rederived, in a more general form, in an 
    unpublished M.Sc.~Thesis by Kapelushnik~\cite{Kap}. 
    We sketch the proof for the sake of completeness. 

    Consider the primal setting, and connect two points $u,v\in P$ 
    by an edge, if the open halfplane bounded by the line through 
    $u$ and $v$ and lying below that line contains exactly $j$ 
    points of $P$ (we refer to $(u,v)$ as a \emph{$j$-set}), 
    where $j \in [x, x+y]$. Let $E$ denote the resulting set of edges.

    All the edges of $E$ that are $j$-sets can be decomposed into
    $j+1$ concave chains (see, e.g., \cite{AACS,Dey}). Similarly, 
    they can be decomposed into $n-j$ convex chains. Overall, the
    edges of $E$ can be decomposed into at most
    $$
    \alpha = \sum_{j=x}^{x+y} (j+1) = O((x+y)^2 - x^2) = O(y^2 + xy) = O(xy)
    $$
    concave chains, and into at most
    $$
    \beta = \sum_{j=x}^{x+y} (n-j) = O(ny)
    $$
    convex chains. Each pair of a convex and a concave chain can 
    intersect in at most two points. This implies that the segments 
    of $E$ can cross each other at most $\alpha \beta = O(nxy^2)$ 
    times.
 
    On the other hand, consider the (straight-edge plane embedding 
    of the) graph $G=(P,E)$. It has $n$ vertices, and 
    $m = \cardin{E}$ edges. By the classical Crossing Lemma
    (see \cite{PA}), it has $X=\Omega(m^3/n^2)$ crossing pairs 
    of edges (assuming $m = \Omega(n)$). We thus have 
    \[
    \Omega\pth{\frac{m^3}{n^2}}= X = O(nxy^2) ,
    \]
    or $m = O\pth{ n x^{1/3} y^{2/3} }$.

The second claim in the lemma is an immediate consequence of this
bound.
\end{proof}

\begin{claim}
    (i) For each $i < n/2$, for $0<\eps<1$, and for any fixed positive
    constant $\CC$, let $k$ be an integer chosen randomly and
    uniformly in the range $[i, (1+\eps/\CC)i]$. Then the expected
    complexity of $\Level{k}$ is $O(n/\eps^{1/3})$. This bound also
    holds, with an appropriate choice of the constant of proportionality,
    with probability $\geq 1/2$.
    
    (ii) If $\Level{k}$ has complexity $u$ then it can be replaced by
    an $x$-monotone polygonal curve with $O( u / (\eps k))$ edges,
    which lies between the two curves $\Level{k(1-\eps)}$ and
    $\Level{k(1+\eps)}$.
    
    \clmlab{easy}
\end{claim}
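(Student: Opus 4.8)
The plan is to prove part~(i) via a simple averaging (expectation) argument over the randomly chosen level, using Lemma~\ref{lemma:levels}, and then to prove part~(ii) by a direct geometric ``thinning'' construction that exploits the fact that consecutive levels differ by only one vertex at a time.

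For part~(i), I would proceed as follows. Let $y = \lfloor \eps i / \CC \rfloor$, so that $k$ ranges uniformly over an integer interval of roughly $y+1$ values inside $[i,(1+\eps/\CC)i]$. By Lemma~\ref{lemma:levels} (applied with $x = i$ and this $y$), the total complexity of all the levels $\Level{k}$ for $k$ in this range is $O(n\, i^{1/3} y^{2/3}) = O\pth{n\, i^{1/3} (\eps i/\CC)^{2/3}} = O(n i / \eps^{1/3})$ after absorbing the constant $\CC$. Since $k$ is chosen uniformly among $\Theta(\eps i)$ integers in the range, the expected complexity of $\Level{k}$ is this total divided by $\Theta(\eps i)$, namely $O(n i/\eps^{1/3})/\Theta(\eps i) = O(n/\eps^{1/3})$. (A minor technical point: if $\eps i < 1$ the range contains essentially one level and the statement is vacuous or trivial, so we may assume $\eps i \ge 1$; also the lemma requires $x \ge y > 0$, which holds since $\eps/\CC < 1$.) Finally, the ``probability $\ge 1/2$'' statement follows from Markov's inequality: the complexity of $\Level{k}$ is a nonnegative random variable with expectation $O(n/\eps^{1/3})$, so it exceeds twice its expectation with probability at most $1/2$; enlarging the constant of proportionality gives the claim.

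For part~(ii), I would argue geometrically. Fix the level $\Level{k}$, an $x$-monotone polygonal curve with $u$ vertices. I walk along $\Level{k}$ from left to right; at each vertex the ``current depth'' (the index of the level we are tracing) changes by $\pm 1$ as we cross one line of $\LL$, but $\Level{k}$ itself always stays at depth exactly $k$ --- the point is rather that the neighbouring levels $\Level{k'}$, for $k'$ near $k$, are each obtained from $\Level{k}$ by a bounded local perturbation. The key observation is that to move from $\Level{k}$ to $\Level{k(1-\eps)}$ or $\Level{k(1+\eps)}$ one only changes the traced depth by $\eps k$, and each unit change in depth corresponds to crossing a single line, i.e. passing one vertex of the arrangement. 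Concretely, I would break $\Level{k}$ into maximal $x$-monotone pieces such that within each piece $\Level{k}$ stays strictly between $\Level{k(1-\eps)}$ and $\Level{k(1+\eps)}$, and observe that a new piece can only begin when $\Level{k}$ ``meets'' one of these two bounding levels; between two such meetings $\Level{k(1-\eps)}$ must have risen by or $\Level{k(1+\eps)}$ fallen by an amount corresponding to $\Omega(\eps k)$ vertices being shed, so there can be at most $O(u/(\eps k))$ such pieces. Replacing each piece by a single straight segment connecting its endpoints yields an $x$-monotone polygonal curve with $O(u/(\eps k))$ edges that, by construction, never leaves the slab between $\Level{k(1-\eps)}$ and $\Level{k(1+\eps)}$.

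The main obstacle I anticipate is making the pigeonhole step in part~(ii) fully rigorous: one needs the precise combinatorial fact that between two consecutive ``escape'' events the curve $\Level{k}$ must cross $\Omega(\eps k)$ lines of $\LL$, which in turn accounts for $\Omega(\eps k)$ vertices of $\Level{k}$ being traversed, so that $u$ vertices support only $O(u/(\eps k))$ pieces. This is intuitively clear from the sweep picture --- each line of $\LL$ can be crossed by $\Level{k}$ at most twice --- but it requires care about how the depths of the three curves $\Level{k(1-\eps)}, \Level{k}, \Level{k(1+\eps)}$ evolve along a common $x$-coordinate. Part~(i), by contrast, is entirely routine once Lemma~\ref{lemma:levels} is invoked.
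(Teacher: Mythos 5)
Part~(i) is correct and is precisely the paper's argument: apply \lemref{levels} with $x=i$, $y=(\eps/\CC)i$, read off the average-complexity bound $O(n(x/y)^{1/3}) = O(n/\eps^{1/3})$ (your total-divided-by-count computation is equivalent to the ``average complexity'' sentence of the lemma), and finish with Markov. One algebra slip: $n\,i^{1/3}(\eps i/\CC)^{2/3}=O(n\,i\,\eps^{2/3})$, not $O(ni/\eps^{1/3})$; the subsequent division by $\Theta(\eps i)$ then does yield $O(n/\eps^{1/3})$, so the conclusion stands but the intermediate line is off.

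For part~(ii) the paper simply cites~\cite{Ma:epsnet} and the standard ``shortcut in jumps of $\Theta(\eps k)$ vertices'' construction, so you were right to expect it to be classical. Your sketch, though, has two genuine problems. First, the decomposition as described does not parse: in a simple arrangement the levels are pairwise disjoint $x$-monotone curves, so $\Level{k}$ never ``meets'' $\Level{k(1\pm\eps)}$; the escape event has to be phrased for the \emph{chord} of the current piece, not for the piece of $\Level{k}$. Second, and more importantly, the sanity check you offer for the pigeonhole step --- ``each line of $\LL$ can be crossed by $\Level{k}$ at most twice'' --- is false. Middle levels have superlinear complexity (already $\Omega(n\log n)$ for the median level), so by pigeonhole some single line supports many edges of $\Level{k}$, and $\Level{k}$ crosses that line many times. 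The correct pigeonhole is local, not global: if $p,q\in\Level{k}$ have $t$ vertices of the level between them, and $\gamma$ is the arc of $\Level{k}$ from $p$ to $q$, then the chord $pq$ together with $\gamma$ bounds a Jordan region; any line of $\LL$ that crosses a vertical segment inside this region must cross the boundary at least twice, and since it meets the chord at most once it must meet $\gamma$. In general position the only lines of $\LL$ meeting $\gamma$ are the $t+1$ lines supporting its edges, so every point of the chord has depth within $t+1$ of $k$. Taking $t=\Theta(\eps k)$ (by fixed jumps, or greedily) gives a curve with $O(u/(\eps k))$ edges lying in the band, which is the argument of~\cite{Ma:epsnet}.
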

\begin{proof} 
    The first claim is an immediate consequence of \lemref{levels} by
    setting $x=i$ and $y=(\eps/\CC)i$.  The second claim is well known:
    the curve is obtained by shortcutting $\Level{k}$ in ``jumps'' of 
    $\eps k$ vertices; see, e.g., \cite{Ma:epsnet}.
\end{proof}

\medskip

We present two variants of an algorithm for the problem at hand, which
differ in the dependence of their performance on $\eps$. The first has
$O(\log(n/\eps))$ query time, but requires $O(n/\eps^{7/6})$ storage,
while the second one uses only $O(n)$ storage (no dependence on
$\eps$), but its query time is $O\pth{\log n + \frac{1}{\eps^{2}}}$.

\subsection{Fast query time}

We first compute the union $\Gamma_0$ of the first 
$M = \ceil{1/\eps^{7/6}}$ levels of $\Arr(\LL)$. 
As is well known (see, e.g., \cite{CS89}), the overall 
complexity of $\Gamma_0$ is $O(nM)=O(n/\eps^{7/6})$.  Next,
set $n_i := \floor{ M (1+\eps)^i}$, for $i=0,\ldots,u =
O(\log_{1+\eps}(n/M)) = O\left(\frac{1}{\eps}\log n\right)$.  Let
$\CC$ be a constant that satisfies $(1+\eps/\CC)^3 \leq 1+\eps$
for $0<\eps<1$ ($\CC=12$ would do).  We pick a random level with 
index in the range $n_i/(1-\eps/\CC),\ldots,n_{i}(1+\eps/\CC)^2$; 
by \clmref{easy}\Space(i), most of these
levels have complexity $O(n/\eps^{1/3})$.  We thus assume that the
chosen level has this complexity (or else we resample; since the
probability of success is at least $1/2$, this does not affect the
expected running time).  We then simplify each such level, using
\clmref{easy}\Space(ii) (with $\eps/\CC$ instead of $\eps$).  The
resulting polygonal curve $\gamma_i$ is easily seen to lie (strictly)
between $\Level{n_{i}}$ and $\Level{n_{i+1}}$, and its complexity is
\[
O\pth{ \frac{n/\eps^{1/3}}{M(1+\eps)^i\eps}} = 
O\pth{ \frac{n}{\eps^{1/6} (1+\eps)^i }}.
\]
In particular, the total complexity of the curves $\gamma_1,\ldots,
\gamma_u$ is $\sum_i O\pth{ \frac{n}{\eps^{1/6} (1+\eps)^i }} =
O(n/\eps^{7/6})$, and these curves are pairwise disjoint.  Together
with the segments in $\Gamma_0$, they form a planar subdivision $Q$
with $O(n/\eps^{7/6})$ edges, which we preprocess for efficient point
location. Using Kirkpatrick's algorithm \cite{Kir}, this can be done with
$O(n/\eps^{7/6})$ preprocessing time and storage, and a query can be
answered in $O(\log (n/\eps))$ time. We also store (with no extra
asymptotic cost) a count $N_e$ with each edge $e$ of $Q$.  It is equal
to the level of $e$ if $e$ is an edge of $\Gamma_0$, and to $n_j$ if
$e$ is an edge of $\gamma_j$. Now, given a query point $q$ (i.e., a
downward-directed ray emanating from $q$), we locate $q$ in $Q$ and
retrieve the count $N_e$, where $e$ is the edge lying directly below
$q$. It is easy to verify that $N_e$ is indeed an $\eps$-approximation
of the number of lines below $q$.  We have thus shown:

\begin{theorem}
    Given a set $P$ of $n$ points in the plane, and a parameter
    $0<\eps<1$, one can build, in 
    $O((n /\eps^{4/3})\log^2 n )$ expected time, a data-structure that uses
    $O(n/\eps^{7/6})$ space, so that, given a query halfplane $h$, one
    can approximate $\card{h \cap P}$ within relative error $\eps$,
    in $O( \log(n/\eps) )$ time.
    
    \thmlab{approximate:2:fast}
\end{theorem}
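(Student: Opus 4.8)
The plan is to set up the dual arrangement and build, once and for all, a compact ``sampled'' subset of levels whose complexity we can control, together with a point-location structure over the planar subdivision they form. First I would dualize: a query halfplane $h$ containing $\card{h\cap P}$ points becomes a downward ray $\rho$ from the dual point $q$, and the count we want equals (up to additive $O(1)$, depending on which halfplane side we dualize) the level $i$ of $\Arr(\LL)$ lying just below $q$. So it suffices to locate $q$ among a prescribed family of levels and read off a stored count.

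Next I would split the level range $[0,n)$ into a small ``low'' part and a geometric ``high'' part. For the low part, take $\Gamma_0$ to be the union (arrangement) of the first $M=\ceil{1/\eps^{7/6}}$ levels; by the standard $O(nM)$ bound on the complexity of the first $M$ levels of a line arrangement, this has $O(n/\eps^{7/6})$ edges, and here we know each level exactly so the count is exact (no relative error issue since a range with $<1/\eps$ points must be counted exactly, and $M$ is well above that threshold). For the high part, set $n_i=\floor{M(1+\eps)^i}$ for $i=0,\dots,u=O(\eps^{-1}\log n)$, and for each $i$ pick a random level $\Level{k}$ with $k$ uniform in a window of multiplicative width $\Theta(\eps)$ around $n_i$; by \clmref{easy}(i) this level has complexity $O(n/\eps^{1/3})$ with probability $\ge 1/2$ (resample otherwise), and by \clmref{easy}(ii) we may simplify it to an $x$-monotone curve $\gamma_i$ of complexity
\[
O\pth{\frac{n/\eps^{1/3}}{M(1+\eps)^i\,\eps}}
  = O\pth{\frac{n}{\eps^{1/6}(1+\eps)^i}},
\]
lying strictly between $\Level{n_i}$ and $\Level{n_{i+1}}$ (this is where the calibration $(1+\eps/\CC)^3\le 1+\eps$ is used, so that the simplification slack plus the window width still keeps $\gamma_i$ sandwiched between consecutive $n_i$-levels). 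Summing the geometric series, $\sum_i$ of these is $O(n/\eps^{7/6})$, and the $\gamma_i$ are pairwise disjoint, and disjoint from $\Gamma_0$.

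Then I would assemble the planar subdivision $Q$ formed by $\Gamma_0$ together with all the $\gamma_i$ — total complexity $O(n/\eps^{7/6})$ — label each edge $e$ with a count $N_e$ (its exact level if $e\in\Gamma_0$, or $n_j$ if $e\in\gamma_j$), and preprocess $Q$ for point location via Kirkpatrick's algorithm \cite{Kir}, giving $O(n/\eps^{7/6})$ space and preprocessing and $O(\log(n/\eps))$ query time. For a query $q$, locate the edge $e$ directly below $q$ and return $N_e$. Correctness: if $q$ lies below level $M$ we get the exact count; otherwise $q$ lies between $\gamma_{j}$ and $\gamma_{j+1}$ for some $j$, hence between $\Level{n_j}$ and $\Level{n_{j+2}}$, so the true count lies in $[n_j,n_{j+2}]\subseteq[n_j,(1+\eps)^2 n_j]$, and returning $N_e=n_j$ is within relative error $\eps$ after adjusting $\CC$ so $(1+\eps)^2\le 1+O(\eps)$ — or, more cleanly, returning the count of the edge just below and noting $n_{j+1}/n_j\le 1+\eps$ bounds the relative error directly. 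Finally I would account the construction time: the dominant cost is computing, for each of the $u=O(\eps^{-1}\log n)$ indices, a level of the arrangement at a given depth, which can be done in $O(n\log n)$ time per level by known level-construction algorithms, plus $O(\eps^{-1})$ resampling attempts in expectation each; this totals $O((n/\eps^{4/3})\log^2 n)$ expected time as claimed.

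The main obstacle I expect is the bookkeeping that makes the randomly-chosen-and-simplified curve $\gamma_i$ provably lie \emph{strictly} between $\Level{n_i}$ and $\Level{n_{i+1}}$: we lose a $(1+\eps/\CC)$ factor choosing a random level in the window, another from the $\eps/\CC$-simplification of \clmref{easy}(ii), and the $n_i$'s themselves are only $(1+\eps)$-separated, so one must verify the constant $\CC$ (e.g.\ $\CC=12$) really does absorb all three sources of slack, i.e.\ $(1+\eps/\CC)^3\le 1+\eps$ for all $0<\eps<1$. Everything else is a routine combination of the complexity bound of \lemref{levels}/\clmref{easy}, the $O(nM)$ bound for the first $M$ levels, and a planar point-location structure.
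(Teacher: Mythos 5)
Your proposal follows the paper's proof essentially verbatim: the same split into the bottom $M=\ceil{1/\eps^{7/6}}$ exact levels and a geometric ladder of randomly chosen, shortcut levels; the same calibration constant $\CC$ with $(1+\eps/\CC)^3\le 1+\eps$; the same $O(n/\eps^{7/6})$ total complexity; and the same Kirkpatrick point-location wrap-up. The one place your argument does not quite close is the running-time bookkeeping: you claim each chosen level can be built in $O(n\log n)$ time with ``$O(\eps^{-1})$ resampling attempts in expectation each,'' but the expected number of resamplings is $O(1)$ (success probability $\ge 1/2$), and $O(n\log n)\cdot O(\eps^{-1}\log n)$ gives $O((n/\eps)\log^2 n)$, not $O((n/\eps^{4/3})\log^2 n)$. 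What the paper actually does is compute each of the $O(\eps^{-1}\log n)$ levels output-sensitively in $O((n/\eps^{1/3})\log n)$ time (its complexity is $O(n/\eps^{1/3})$ by \clmref{easy}(i), and a dynamic convex-hull algorithm such as Brodal--Jacob achieves $O(\log n)$ per vertex), which multiplies out to the stated $O((n/\eps^{4/3})\log^2 n)$. Your construction and correctness argument are right; only the preprocessing-time accounting needs this fix.
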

\begin{proof}
    The construction is described above, and we only bound the running
    time. Computing the bottom $M$ levels takes $O(nM + n \log n)$
    time \cite{erk-oalla-96}.  Computing the remaining randomly 
    chosen $O\left(\frac{1}{\eps} \log n\right)$ levels requires 
    $O((n/\eps^{1/3})\log n)$ time per level, using the (very 
    involved) dynamic convex hull algorithm of \cite{bj-dpch-02}
    (or simpler earlier algorithms with a slight (logarithmic or 
    sub-logarithmic) degradation in the running time). Overall,
    the running time is $O(n/\eps^{7/6} + n\log n + (n/\eps^{4/3})
    \log^2 n)$.
\end{proof}

\subsection{Linear space}

Let $M$ be a random integer in the range $[1/\eps^2, 2/\eps^2]$.
By \lemref{levels}, the expected complexity of the level $\Level{M}$
is $O(n)$.  By Markov's inequality, the complexity of $\Level{M}$ 
is $O(n)$ with probability at least $1/2$, with an appropriate choice of 
the constant of proportionality. Thus, redrawing the index $M$ 
if necessary (without affecting the expected asymptotic running time), 
we may assume that $\Level{M}$ does have linear complexity.  Next, we
define the curves $\gamma_0,\gamma_1, \ldots, \gamma_u$ as above, 
with the new value of $M$ as the starting index. 
Note that each $\gamma_i$ is a shortcutting of a random level 
in the range $[K_i, 2(1+\eps)K_i]$, where 
$K_i = (1/\eps^2)(1+\eps)^{i-1}$.
More precisely, we can regard the random choice of the level
from which $\gamma_i$ is produced
as a 2-step drawing, where we first draw $M$ and then draw $k$ in
the ``middle'' of the range $[M(1+\eps)^{i}, M(1+\eps)^{i+1}]$, as above.
The combined drawing is not exactly uniform, but is close enough to
make \lemref{levels} and \clmref{easy} hold in this scenario 
too.\footnote{%
  Technically, in \lemref{levels} we assume $y<x$, and here we have
  $y=x(1+2\eps)$, but the lemma continues to hold in this case too, 
  as is easily checked.}
Hence, the expected complexity of the level corresponding to
$\gamma_i$ is $O(n)$ for each $i$. 
Thus, the overall expected complexity of the shortcut curves 
$\gamma_0,\gamma_1,\ldots,\gamma_u$ is now only
\[
O \pth{ \sum_i \frac{n}{M(1+\eps)^i\eps}} = \sum_i O\pth{
   \frac{n\eps}{(1+\eps)^i }} =O(n) 
\]
(with a constant of proportionality independent of $\eps$).
We construct the collection of these curves, and assume (using 
resampling if necessary) that their overall complexity is indeed
linear. We preprocess the planar map formed by these curves, and
by the edges of $\Level{M}$, for fast point location, as above, 
and store with each curve $\gamma_i$, for $i\ge 0$, the level 
$n_i$ that it approximates. In addition, we sweep $\Level{M}$ 
from left to right, and store, with each of its edges $e$,
the (fixed) set of lines passing below (any point on) $e$. 
This can be done with only $O(n)$ storage, using 
persistence~\cite{st-pplup-86}.  
Now, given a query point $q$, we locate it in the planar map. 
If it lies above $\gamma_0$, then the index stored at the 
segment lying directly below $q$ is an $\eps$-approximation 
of the number of lines below $q$. If $q$ lies below $\gamma_0$, 
we find the edge $e$ of $\Level{M}$ lying above or below $q$,
retrieve the set of lines stored at $e$ (using the persistent 
data structure), and search it, in $O(1/\eps^{2})$ time,
to count (exactly) the number of lines below $q$.
We thus have shown:
\begin{theorem}
    Given a set $P$ of $n$ points in the plane, and a parameter
    $0<\eps<1$, one can build, in
    $O\pth{ \tfrac{n}{\eps}\log^2 n }$ time, a data-structure that
    uses $O(n)$ space, so that, given a query halfplane $h$, one can
    approximate $\card{h \cap P}$, within relative error $\eps$, in
    $O(\log n + 1/\eps^{2})$ time.
    
    \thmlab{approximate:2:small:space}
\end{theorem}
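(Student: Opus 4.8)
The construction of the data structure is the one described in the two paragraphs preceding the statement, so the plan is to verify the three quantitative claims --- linear space, $O(\log n + 1/\eps^2)$ query time, and $O(\tfrac{n}{\eps}\log^2 n)$ preprocessing time --- together with the correctness of the count returned. I would first bound the space. The structure stores: (a) the level $\Level{M}$ (after the resampling step that, using \lemref{levels} and Markov's inequality, forces its complexity to be $O(n)$) together with a partially persistent dictionary obtained by sweeping $\Level{M}$ from left to right while maintaining the set of lines of $\LL$ passing below the current edge; this set changes by $O(1)$ at each of the $O(n)$ vertices of $\Level{M}$, so the persistent structure occupies $O(n)$ space \cite{st-pplup-86}; (b) the shortcut curves $\gamma_0,\dots,\gamma_u$, whose total complexity is $O(n)$, since by \clmref{easy}(ii) each simplified $\gamma_i$ has $O\bigl(\tfrac{n}{M(1+\eps)^i\eps}\bigr)$ edges and $\sum_i \tfrac{n}{M(1+\eps)^i\eps} = \sum_i O\bigl(\tfrac{n\eps}{(1+\eps)^i}\bigr) = O(n)$, with a constant independent of $\eps$; here one invokes \clmref{easy}(i), with resampling, to ensure that each level underlying a $\gamma_i$ has complexity $O(n)$ so that the shortcutting estimate applies; and (c) a point-location structure on the planar map formed by $\Level{M}$ together with the $\gamma_i$'s, which has total complexity $O(n)$. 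Hence the total space is $O(n)$.

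Next I would analyze the query. After passing to the dual, the query halfplane becomes a downward-directed ray from a point $q$, and the answer equals the number of lines of $\LL$ below $q$, i.e.\ the level of $q$ in $\Arr(\LL)$. Locate $q$ in the planar map in $O(\log n)$ time, and distinguish two cases. If $q$ lies above $\gamma_0$, then the edge of the map directly below $q$ belongs to some $\gamma_i$, which lies strictly between $\Level{n_i}$ and $\Level{n_{i+1}}$ with $n_{i+1}\le(1+\eps)n_i$ (choosing $\CC$ so that $(1+\eps/\CC)^3\le 1+\eps$, exactly as in the fast-query variant); the index $n_i$ stored with that edge is therefore an $\eps$-approximate count. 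If $q$ lies below $\gamma_0$, then the level of $q$ is $O(1/\eps^2)$; I would locate the edge $e$ of $\Level{M}$ (or, to be safe, of $\Level{cM}$ for a suitable constant $c$, so that $q$ certainly lies below it) whose $x$-span contains $q$, note that every line below $q$ is also below $e$ and hence belongs to the $O(1/\eps^2)$-line set stored with $e$, retrieve that set from the persistent structure, and scan it against $q$ in $O(1/\eps^2)$ time to obtain the exact count. Either way, the query time is $O(\log n + 1/\eps^2)$, and the reported value is within relative error $\eps$.

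For the preprocessing time I would account for: computing $\Level{M}$ and each of the $u+1 = O(\tfrac1\eps\log n)$ levels underlying the curves $\gamma_i$, each of expected complexity $O(n)$ and each computable in $O(n\log n)$ time by a dynamic convex-hull sweep (with at most a sub-logarithmic degradation if a simpler structure is used); building the persistent dictionary on $\Level{M}$, in $O(n\log n)$ time; and preprocessing the $O(n)$-complexity planar map for point location, in $O(n\log n)$ time. The dominant term is the $O(\tfrac1\eps\log n)$ level computations, giving $O(\tfrac{n}{\eps}\log^2 n)$ overall. Resampling affects only constants and the \emph{expected} running time, since each of the relevant size events holds with probability at least $1/2$.

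The step that needs the most care is the exact-count branch: one must check that the two cases (above versus below $\gamma_0$) jointly cover every position of $q$, and that when $q$ lies below $\gamma_0$ the set of lines below $q$ is genuinely contained in the stored $O(1/\eps^2)$-line set of a single edge of $\Level{M}$. This forces $\gamma_0$ to sit essentially at or below $\Level{M}$; since the level underlying $\gamma_0$ is only guaranteed to be $\Theta(1/\eps^2)$, the clean fix is to anchor the persistent structure on $\Level{cM}$ for a constant $c$ large enough that $\gamma_0$ lies strictly below it, and then use the edge of $\Level{cM}$ directly above $q$. A minor related technicality, already noted in the cited footnote, is that the level from which each $\gamma_i$ is drawn is produced by a two-stage random process (first $M$, then a level in the range $[M(1+\eps)^i, M(1+\eps)^{i+1}]$), so one must verify that \lemref{levels} and \clmref{easy} still apply to this slightly non-uniform and slightly wider range.
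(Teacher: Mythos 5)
Your argument is correct and follows the same approach as the paper, whose own proof simply observes that the space and query bounds follow from the preceding discussion and that preprocessing is dominated by computing $O\!\left(\tfrac{1}{\eps}\log n\right)$ levels, each of expected complexity $O(n)$, giving $O\!\left(\tfrac{n}{\eps}\log^2 n\right)$ total. You have in fact spotted and cleanly repaired a genuine imprecision in the paper's exact-count branch: since $\gamma_0$ can lie strictly above $\Level{M}$, a query point $q$ in the strip between $\Level{M}$ and $\gamma_0$ has level exceeding $M$, so the set of $M$ lines stored with the edge of $\Level{M}$ at $q$'s abscissa does not contain every line below $q$; anchoring the persistent sweep at $\Level{cM}$ for a constant $c$ chosen so that $\gamma_0$ lies strictly below it (or, alternatively, simply returning $M$ for such $q$, which is already a valid $\eps$-approximate count) fixes this while changing only the constants in the storage and query bounds.
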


\begin{proof}
    The construction requires the computation of $O( \tfrac{1}{\eps}
    \log n)$ levels, each of expected complexity $O(n)$. Thus, this
    takes $O\pth{ \tfrac{n}{\eps}  \log^2 n }$ time (or slightly worse,
    as in the comment in the preceding proof). The query time
    and space complexity follow from the discussion above.
\end{proof}

Observe that \thmref{approximate:2:fast} and
\thmref{approximate:2:small:space} improve over the previous results
in \cite{AH2, KS}, which have query time $\Omega\left(\frac{1}{\eps^2}
    \log^2 n\right)$.  

It would also be interesting to compare these results to the recent
technique of Aronov and Sharir~\cite{AS08}; as presented, this
technique caters only to range searching in four and higher 
dimensions, but it can be adapted to two or three dimensions too.

\subsection{Approximate range counting in three dimensions}

We can extend the above algorithms to three dimensions. 
After applying duality, the input is a set $H$ of $n$ planes in 3-space,
which we want to preprocess for approximate vertical ray range counting. 
The general idea is very similar: (i) Compute a sequence of levels of
$\Arr(H)$, whose indices form roughly a geometric sequence.
(ii) Replace each level by a simplified $xy$-monotone polyhedral
surface which approximates it well. (iii) Find the belt between 
two consecutive surfaces which contains the query point $q$ 
(the apex of the query vertical ray), and thereby obtain the 
desired approximate count. 
Implementing step (ii) is considerably harder in three 
dimensions than in the plane, and we do it using an appropriate 
relative $(p,\eps)$-approximation.

For the sake of simplicity of presentation, we do not attempt to 
optimize the choice of parameters, and just describe the general 
technique. Concrete and improved versions can be worked out by 
the interested reader.

\paragraph{Approximating a specific level.} 
Consider first the problem of approximating a specific level 
$m$ of $\Arr(H)$.  Consider the range space that has 
$H$ as the ground set, whose ranges are induced by vertical 
downward-directed rays, where the range associate with a ray 
$\rho$ is the subset of planes of $H$ crossed by $\rho$. 
This range space has finite VC-dimension, so we can apply to it
the analysis of \secref{relations}. Put $p=m/n$, and
construct a $(p,\eps/3)$-relative approximation
$B\subseteq H$, by taking a random sample of size 
$O((\log n)/(\eps^2 p))$ from $H$ 
(see \thmref{l:l:s} and \cite{lls01});
With high probability, the sample is indeed such an approximation.
Set $\nu = \ceil{\card{B}p} = O(\eps^{-2} \log n)$.
By construction, the $\nu\,$\th level $\Lambda_\nu$ of 
$\Arr(B)$ is guaranteed to lie between the levels
$(1-\eps/3)m$ and $(1+\eps/3)m$ of $\Arr(H)$, so it provides 
an adequate approximation to the $m$\th level of $\Arr(H)$.
Since $\nu$ is ``small'', we can compute $\Lambda_\nu$ in 
time $O((n/\eps^{O(1)}) \log^{O(1)}n)$, using, e.g., the 
algorithm of \cite{c-rshrr-00}. 

\paragraph{Approximating all levels.} 
We first compute explicitly all the $1/\eps$ bottom levels 
of $\Arr(H)$. Their overall complexity is $O(n/\eps^2)$ \cite{CS89},
and their construction takes 
$O(n/\eps^2 + n\log n)$ time \cite{c-rshrr-00}. 

Next, we approximate each of the levels 
$n_i=(1/\eps)(1+\eps)^i$, up to relative error of $\pm\eps/3$,
using the algorithm described above, for $i=1,\ldots,O((\log n)/\eps)$.

This results in a sequence of $O( \eps^{-1} \log n)$ pairwise
disjoint $xy$-monotone polyhedral surfaces in $\Re^3$ (i.e., the 
exact $O(\eps^{-1})$ bottom levels, and the additional 
$O(\eps^{-1}\log n)$ approximated levels). We need to store these
surfaces so that, given a query point $q$, the two surfaces which
lie directly above and below $q$ can be found efficiently.
This is done using binary search through the sequence of surfaces,
where each step of the search is implemented by locating the 
$xy$-projection $q^*$ of $q$ in the $xy$-projection of a surface 
(which is a planar map), and then by testing $q$ against the plane
inducing the face containing $q^*$. Thus the cost of a query is
$O(\log (\eps^{-1}\log n) \cdot \log (\eps^{-1} n))$. The index of 
the surface directly below $q$ (namely, either its exact level if 
it is one of the first bottom $\eps^{-1}$ levels, or the index of 
the level of $\Arr(H)$ that it approximates) yields the desired
approximate count.

The preceding analysis is easily seen to imply that the overall
storage and preprocessing cost of the algorithm are both
$O((n/\eps^{O(1)})\log^{O(1)} n)$.  (Concrete and reasonably
small values of the
powers of the polylogarithmic factor and of the factor $1/\eps^{O(1)}$
can be easily worked out, but we skip over this step.)  
Hence we obtain the following result.
\begin{theorem}
    Given a set $P$ of $n$ points in three dimensions and a parameter
    $o<\eps<1$, one can build a data-structure, in 
    $O((n/\eps^{O(1)}) \log^{O(1)} n)$ time and space, so that,
    given a query halfspace $h$, one can approximate $|P\cap h|$, 
    up to relative error of $\pm\eps$.
    The query time is 
    $O(\log (\eps^{-1}\log n) \cdot \log (\eps^{-1} n))$. 
    
    \thmlab{approximate:3:fast}
\end{theorem}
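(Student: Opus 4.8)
The plan is to pass to the dual space, turning $P$ into a set $H$ of $n$ planes and a halfspace query into a vertical (downward-directed) ray range counting query in $\Arr(H)$; answering such a query amounts to locating the apex $q$ of the ray between two consecutive levels of $\Arr(H)$ whose indices are relatively close, and returning that index. The first ingredient is a subroutine that, for a single prescribed level $m$, produces an $xy$-monotone polyhedral surface that lies between the levels $(1-\eps/3)m$ and $(1+\eps/3)m$ of $\Arr(H)$ and has small complexity. Here I would invoke the relative-approximation machinery of \secref{relations}: the range space on $H$ whose ranges are the subsets of planes met by downward vertical rays has bounded VC-dimension, so by \thmref{l:l:s} a random sample $B\subseteq H$ of size $O((\log n)/(\eps^2 p))$, with $p=m/n$, is a relative $(p,\eps/3)$-approximation with high probability. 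Setting $\nu=\lceil|B|\,p\rceil = O(\eps^{-2}\log n)$, the $\nu$-th level of $\Arr(B)$ is sandwiched between the levels $(1\pm\eps/3)m$ of $\Arr(H)$, and since $|B|$ is polylogarithmic this level can be computed in $O((n/\eps^{O(1)})\log^{O(1)}n)$ time using a known level-construction algorithm.

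Next I would assemble the global structure. Compute the bottom $1/\eps$ levels of $\Arr(H)$ explicitly --- their total complexity is $O(n/\eps^2)$ and they are built in $O(n/\eps^2+n\log n)$ time --- and then, for $i=1,\ldots,O(\eps^{-1}\log n)$, apply the single-level subroutine to the target indices $n_i=(1/\eps)(1+\eps)^i$ to obtain approximating surfaces. Since consecutive targets differ by the factor $1+\eps$ while each surface stays within a $(1\pm\eps/3)$ window of its target, one checks that the resulting $O(\eps^{-1}\log n)$ surfaces, together with the exact bottom levels, are pairwise disjoint and occur in the correct vertical order. I would store the $xy$-projection of each surface as a planar subdivision for point location, and tag each surface with the level index it represents.

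For a query point $q$ I would binary-search through the ordered sequence of surfaces, resolving each comparison by locating the $xy$-projection $q^*$ of $q$ in one planar subdivision and then testing $q$ against the single plane supporting the face of that subdivision containing $q^*$. This uses $O(\log(\eps^{-1}\log n))$ comparisons, each costing $O(\log(\eps^{-1}n))$, which gives the claimed query time; the index attached to the surface lying directly below $q$ is the desired $\eps$-approximate count of $P\cap h$, because the surfaces sandwich a geometric progression of levels with relative gaps $O(\eps)$. Summing the complexities and the construction times of all the surfaces yields the stated $O((n/\eps^{O(1)})\log^{O(1)}n)$ bounds on space and preprocessing.

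The main obstacle I anticipate is the single-level subroutine --- guaranteeing that a small-size structure faithfully sandwiches a \emph{prescribed} level of $\Arr(H)$, including deep levels. This is precisely where the relative $(p,\eps)$-approximation is essential: an absolute $\eps$-approximation would fail to control deep levels, whereas the relative guarantee --- achievable, by \thmref{l:l:s}, with only $O((\log n)/(\eps^2 p))$ sampled planes --- ensures that the corresponding level of $\Arr(B)$ is a valid proxy for the $m$-th level of $\Arr(H)$. The remaining work (verifying disjointness and vertical ordering of the surfaces, bounding their total complexity via known level-complexity bounds, and bookkeeping the polylogarithmic and $1/\eps^{O(1)}$ factors) is routine, and the paper explicitly declines to optimize those constants.
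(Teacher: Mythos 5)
Your proposal matches the paper's proof essentially step for step: the same dualization, the same single-level subroutine using a relative $(p,\eps/3)$-approximation $B\subseteq H$ of size $O((\log n)/(\eps^2 p))$ with $p=m/n$ and $\nu=\lceil|B|p\rceil$, the same explicit computation of the bottom $1/\eps$ levels, the same geometric sequence of target indices $n_i=(1/\eps)(1+\eps)^i$, and the same binary search over the resulting surfaces via planar point location. The argument and bounds are correct as written.
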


As in the planar case, \thmref{approximate:3:fast} improves over the
previous results \cite{AH2, KS}, which require
$\Omega\left(\frac{1}{\eps^2} \log^2 n \right)$ time to answer a
query. However, in a subsequent work, Afshani and Chan~\cite{AC-09} 
managed to obtain an improved solution. Specifically, they show that,
with $O_\eps(n \log n)$ expected preprocessing time,
one can build a data structure of expected size $O_\eps(n)$ which can
answer approximate 3-dimensional halfspace range counting queries 
in $O_\eps(\log(n/k^*))$ expected time, where $k^*$ is the actual 
value of the count, and $O_\eps$ hides constant factors that
are polynomial in $1/\eps$.  It would also be interesting to compare
our result to the appropriate variant of the technique of \cite{AS08}.

\section{Conclusions}
\seclab{conclusions}

In this paper we first established connections between the
$(\nu,\alpha)$-samples of Li \etal~\cite{lls01} and relative
$(p,\eps)$-approximations (and other notions of
approximations\footnote{Which we did at no extra charge!}).
This has allowed us to establish sharp upper bounds on the size of
relative $(p,\eps)$-approximations in arbitrary range spaces of finite
VC-dimension. We then turned to study geometric range spaces, and gave
a construction of even smaller-size relative approximations for
halfplane ranges, by revisiting the classical construction of spanning
trees with low crossing number, and by modifying it to be
``weight-sensitive''. We then gave similar constructions of ``almost''
relative approximations for halfspace ranges in three and higher
dimensions, using a different approach. Finally, we have also
revisited the approximate halfspace range-counting problem in two and
three dimensions, and provided better algorithms than those previously
known.

There are several interesting open problems for further research. The
main one is to extend the construction of spanning trees with small 
relative crossing number to three and higher dimensions. Another
open problem is to improve \thmref{good:partition}.  A minor further
improvement of \thmref{good:partition} is possible by plugging the
construction of \thmref{good:partition} into the construction of
\lemref{level:k}. This still falls short of the desired spanning tree
with crossing number $O( \sqrt{w_{\ell}} )$, for a line $\ell$ of
weight $w_{\ell}$.  We leave this as an open problem for further
research.

Interestingly, the partition of \lemref{cover} can be interpreted as a
strengthening of the shallow partition theorem of \Matousek
\cite{Ma:rph} in two dimensions.  It is quite possible that a similar
(but probably weaker) strengthening is possible in three and higher
dimensions.

\section*{Acknowledgments}

The authors thank the anonymous referees for their insightful
comments. The authors also thank Boris Aronov, Ken Clarkson, %
Edith Cohen, Haim Kaplan, Yishay Mansour and Shakhar Smorodinsky for
useful discussions on the problems studied in this paper.



\appendix

\section{Proofs of some lemmas}
\apndlab{proofs}

\ProofA{}

\ProofB{}

\end{document}